\theoremstyle{remark}
\newtheorem{theorem}{Theorem}
\newtheorem{lemma}{Lemma}
\newtheorem{remark}{Remark}
\newtheorem{property}{Property}
\newtheorem{comment}{Comment}
\newtheorem{definition}{Definition}
\newtheorem{proposition}{Proposition}
\newtheorem{assumption}{Assumption}
\newtheorem{corollary}{Corollary}
\newtheorem*{proof-mine}{Proof}
\let\MYcaption\@makecaption
\let\@makecaption\MYcaption
\begin{document}
	%\title{Minimax DOP Bounds and Heuristic Algorithms for Anchor Placement in Vehicle-Centric Localization}
	%\title{DOP-Optimal Anchor Placement for Vehicle-Centric ToA-based Localization Outside the Convex Hull: Heuristic Algorithms and Performance Bounds}
	\title{Iterative RNDOP-Optimal Anchor Placement for Beyond Convex Hull ToA-based Localization: Performance Bounds and Heuristic Algorithms}
	\author{Raghunandan M. Rao,~\IEEEmembership{Member,~IEEE}, Don-Roberts Emenonye,~\IEEEmembership{Student Member,~IEEE} \\ 
		\thanks{Manuscript received xxxx; revised xxxx; accepted xxxx. Date of publication xxxx; date of current version xxxx. The associate editor
			coordinating the review of this article and approving it for publication was xxxx. \textit{(Corresponding author: Raghunandan M. Rao.)}}
		\thanks{R. M. Rao is currently at Amazon Lab 126, Sunnyvale, CA, USA, 94086. This work was undertaken before joining Amazon in his capacity as an independent scholar and does not relate to his current role at Amazon. All findings and opinions in the paper are the author's own and do not reflect the opinions of his employer (e-mail: raghumr@vt.edu).}
		\thanks{D.- R. Emenonye is affiliated with the Wireless@VT research group, Bradley Department of ECE, Virginia Tech, Blacksburg, VA, USA 24061 (email: donroberts@vt.edu).}
	}
	\maketitle
	\thispagestyle{empty}
	\begin{abstract}
		Localizing targets outside the anchors' convex hull is an understudied but prevalent scenario in vehicle-centric, UAV-based, and self-localization applications. Considering such scenarios, this paper studies the optimal anchor placement problem for Time-of-Arrival (ToA)-based localization schemes such that the \emph{worst-case Dilution of Precision (DOP)} is minimized. Building on prior results on DOP scaling laws for beyond convex hull ToA-based localization, we propose a novel metric termed the \textit{Range-Normalized DOP (RNDOP)}. We show that the worst-case DOP-optimal anchor placement problem simplifies to a min-max RNDOP-optimal anchor placement problem. Unfortunately, this formulation results in a non-convex and intractable problem under realistic constraints. To overcome this, we propose iterative anchor addition schemes, which result in a tractable albeit non-convex problem. By exploiting the structure arising from the resultant rank-1 update, we devise three heuristic schemes with varying performance-complexity tradeoffs. In addition, we also derive the upper and lower bounds for scenarios where we are placing anchors to optimize the worst-case (a) 3D positioning error and (b) 2D positioning error. We build on these results to design a cohesive iterative algorithmic framework for robust anchor placement, characterize the impact of anchor position uncertainty, and then discuss the computational complexity of the proposed schemes. Using numerical results, we validate the accuracy of our theoretical results. We also present comprehensive Monte-Carlo simulation results to compare the positioning error and execution time performance of each iterative scheme, discuss the tradeoffs, and provide valuable system design insights for beyond convex hull localization scenarios.
	\end{abstract}
	\begin{IEEEkeywords}
		Dilution of Precision (DOP), Beyond Convex Hull Localization, Range-Normalized DOP (RNDOP), Anchor Placement, Time-of-Arrival (ToA).
	\end{IEEEkeywords}
	
	\IEEEpeerreviewmaketitle
	\section{Introduction}\label{sec:intro}
	Localization applications are pervasive in the 5G and beyond-5G era, especially in consumer electronics, industrial, and defense sectors. In part due to the success of geolocation GPS and GNSS, consumers of wireless technology have increasing expectations regarding the reliability and accuracy of the location estimates in unprecedented deployments. Vehicle-centric localization is one such scenario that has been actively pursued by auto manufacturers and the wireless industry \cite{FiRa_Consortium}, wherein the target locations are estimated relative to the vehicle. This is in contrast with localizing targets on a local/global map, which is the typical approach in currently deployed positioning systems. 
	
	A vehicle-centric approach is crucial in applications such as autonomous driving \cite{Bresson_TIV_SLAMSurvey_2017}, vehicle safety, keyless locking/unlocking \cite{FiRa_Consortium}, smart parking \cite{Bus_Docking_VTM_2021}, etc. It is also useful for unmanned aerial vehicle (UAV)-based applications such as distributed UAV swarm coordination \cite{Liu_Distr_3D_Rel_Loc_TVT_2020}, autonomous takeoff and landing, and disaster assistance \cite{First_Respond_UAV_ComMag_2021}. To enable autonomous vehicular operation, multiple anchors are deployed on the vehicle to track targets that are typically located outside the vehicle \cite{Sadhegi_Optim_Anch_Plc_2D3D_TSP_2020, Rao_SRA_GDOP_Scale_Dimn_Adapt_2021}. In other words, the targets are outside the convex hull formed by the anchor locations (henceforth referred to as a \textit{`beyond convex hull' localization scenario}) \cite{Rao_SRA_GDOP_Scale_Dimn_Adapt_2021}. Such conditions also arise in UAV-based localization applications for swarm coordination (Fig. \ref{Fig1_Anchors_FarAway_Regime}), and self-localization where-in a new anchor needs to be localized relative to existing anchors in a wireless sensor network.
	
	The target's location can be estimated using measurements such as Time-of-Arrival (ToA) \cite{Rao_SRA_GDOP_Scale_Dimn_Adapt_2021, Xu_Optim_TOA_COMML_2021}, Time-Difference-of-Arrival (TDoA) \cite{Wang_TDOA_Approx_TSP_2016}, received signal strength (RSS) \cite{Tomic_RSS_Loc_TVT_2015}, and Angle-of-Arrival (AoA) typically fused with ToA or TDoA. Among these methods, ToA-based and TDoA-based localization provides robust range error performance in high multipath scenarios that accompany typical Non-Line of Sight (NLoS) wireless propagation environments. The state-of-the-art ranging methods using low-cost technologies such as ultrawideband (UWB) \cite{Dardari_Win_UWB_Ranging_ProcIEEE_2009}, which use high bandwidth, low power pulses that is capable of precise location estimates for moderate distances between the anchor and the target. If all the anchors are adequately time-synchronized, ToA-based algorithms have superior positioning performance when compared to TDoA-based schemes when the target is outside the anchors' convex hull\footnote{Intuitively, this is because locus of the target in ToA is a circle that has a bounded region, whereas that in the case of TDoA is a hyperbola, \textit{an unbounded region} \cite{buehrer2019handbook}}. In this paper, we seek to optimize anchor locations to minimize the worst-case positioning error in vehicle-centric ToA-based localization scenarios, where the target is located beyond the anchors' convex hull.
	
	\vspace{-10pt}
	
		\begin{table*}[t]
		\footnotesize
		\caption{Comparison with Relevant Prior Works That Study Localization of Targets Outside the Convex Hull}		
		\label{tab:relevant_works}
		\centering 
		
			\begin{tabular}{|c | c | c | c | c | c |}
			\hline
			\textbf{Reference} & \textbf{Localization}  & \textbf{Targets outside} & \textbf{Notable Constraints}& \textbf{Metric} & \textbf{Relevant Details} \\
			& \textbf{Method} &  \textbf{Convex Hull} &  & & \\
			\hline
			\cite{RSS_UAV_Loc_TVT_2022} & RSS & Yes & Terrestrial target, UAV anchors & Average MSE & Studies MSE lower bound \\
			\hline
				\cite{Nasir_Alouini_TCOM_2019} & Optical & Yes & Anchors on water surface & D-optimality of Fisher & Incorporates outlier\\
			& Ranging &  & and underwater targets & Information Matrix (FIM) & removal methods \\
			\hline 
			\cite{Monica_Ferrari_TAES_2015} & TDOA & Yes & Target moves along & CRLB & Derives optimal anchor \\
			& & & corridor in & minimization & separation, as a function\\
			& & & straight line & & of corridor dimensions. \\
			\hline 
			\cite{Lazzari_UWB_UAV_Sens_2017} & ToA & Yes & Anchors mounted on roof & Mean Absolute & Empirically evaluates 3\\
			& & & roof of vehicle and UAV target & Positioning Error & anchor configurations\\
			\hline
			\cite{Sadhegi_Optim_Anch_Plc_2D3D_TSP_2020} & ToA & Yes & Anchors lie on & A-optimality of & Closed-form expressions\\
			& & & circle & FIM & for 2D positioning in 2D space\\
			\hline
			%\textbf{Our Work} & ToA & No & No & minimax RNDOP & Studies universal bounds \\
			%& & & & & on 2D/3D RNDOP, designs optimal placementand \\
			%& & & & & impact of anchor \\
			%& & & & & position uncertainty \\
			%\hline 
		\end{tabular}\\
	[-3ex]
	\end{table*}
	
	\subsection{Related Work}
	Several prior works have studied anchor placement optimization techniques to minimize positioning error-related metrics. The localization error is a function of the measurement error, and the anchor locations relative to the target. The latter is often quantified by a dimensionless metric termed as the \textit{Dilution of Precision (DOP)} \cite{Torrieri_Stat_Lclze_TAES_1984}. Therefore, localization accuracy can be improved using (a) robust protocols and signal processing algorithms to reduce measurement error \cite{Domuta_Palade_TVT_2021}, and (b) by geometrically placing anchors such that the overall impact of measurement error on localization error is reduced.% \cite{Torrieri_Stat_Lclze_TAES_1984}. 
	
	Various works study the optimal anchor placement problem relative to a single anchor \cite{Van_de_Velde_FramTheor_VTCS_2014, Sheng_Optim_Anch_Plcmnt_TSP_2019,sahu2021optimal,Fang_Optim_AoA_Geom_COMML_2022,Panwar_Optim_TOA_RSS_AOA_TAES_2022,Nafiseh_Frame_Theor_Anch_Plcmnt_TAES_2022,Chepuri_Sparse_Anch_Plcmnt_EUSIPCO_2013}. The optimal anchor placement for TOA-based single static target localization is derived in \cite{Van_de_Velde_FramTheor_VTCS_2014} for the 2D case, and \cite{Sheng_Optim_Anch_Plcmnt_TSP_2019} for the 3D case. Sahu at al. \cite{sahu2021optimal} propose a unified mathematical framework based on alternating direction method of multipliers (ADMM) to design the optimal anchor geometry for ToA, AOA, RSS-based localization of a single target. Sheng et al. \cite{Xu_Optim_TOA_COMML_2021} analyze the optimal anchor placement problem for localizing two targets on the 2D plane using a common set of anchors. Fang et al. \cite{Fang_Optim_AoA_Geom_COMML_2022} study the optimal sensor placement problem for AOA-based localization where a single target and multiple anchors are constrained to lie within a circular region, while ensuring a minimum separation between all nodes. Using the notion of D-optimality, they formulate a constrained optimization problem where the constraints are related to the maximum feasible
    angle and the optimal separation angle. Panwar et al. \cite{Panwar_Optim_TOA_RSS_AOA_TAES_2022} study the optimal sensor placement for hybrid TOA-RSS-AOA localization scenarios, and propose a solution based on the majorization-minimization (MM) technique applied on the hybrid TOA-RSS-AoA CRLB.
	
	However, for localizing multiple mobile targets, minimizing the positioning error for all targets at all possible locations in the coverage region is impossible. Therefore, state-of-the art works study, design, and characterize the performance of algorithms to minimize the \textit{weighted average of the localization error of targets in a region of interest}. However, most of the prior works \cite{Sheng_Optim_Anch_Plcmnt_TSP_2019, Xu_Optim_TOA_COMML_2021, Nafiseh_Frame_Theor_Anch_Plcmnt_TAES_2022, Chepuri_Sparse_Anch_Plcmnt_EUSIPCO_2013} focus on studying localization scenarios where the target lies within the convex hull of anchors. Such geometries are favorable because the resulting anchor geometry has been rigorously proven to be favorable for placing anchors in general \cite{OLone_Conv_Hull_PLANS_2016,Schloemann_Dhillon_WCL_2016}. 
	
	A few works have studied the optimal placement problem for scenarios where the target lies outside the convex hull, which are summarized in Table \ref{tab:relevant_works}. Zhou et al. in \cite{RSS_UAV_Loc_TVT_2022} consider RSS-based positioning of a single terrestrial target located uniformly at random in a circular region using multiple UAVs, and propose an average MSE-optimal anchor placement scheme. Their results show that conditioned on an anchor configuration, the MSE scales linearly with the radius of the coverage region. Saeed et al. \cite{Nasir_Alouini_TCOM_2019} propose a robust 3D localization method for a partially connected underwater optical sensor network, where there is a significant possibility of missing measurements. The anchors are located near the water surface, and localize the underwater sensor locations using range measurements. Using the D-optimality criterion and by mitigating missing measurements using outlier removal schemes, they design the optimal depth for the anchors. Monica et al. \cite{Monica_Ferrari_TAES_2015} study the optimal anchor placement problem for indoor localization of ground targets moving in a corridor, wherein the anchors are placed on the ceiling. They use TDOA measurements, and derive the average MSE as a function of the anchor separation and the corridor dimensions. Finally, the find the optimal separation as a function of the range error statistics and the corridor dimensions. Lazzari et al. \cite{Lazzari_UWB_UAV_Sens_2017} develop a ToA-based technique to localize a far-away mobile UAV target using terrestrial vehicle-mounted anchors. They empirically investigate the mean absolute positioning error of three anchor configurations, in order to choose the configuration with the lowest mean absolute error. Sadeghi et al. \cite{Sadhegi_Optim_Anch_Plc_2D3D_TSP_2020} study the optimal anchor placement for localizing a target outside the anchor region in \textit{2D space}. The anchors lie on a circular area, while the target lies outside the circular region. Using the A-optimality criteria as the objective, they compute the optimal anchor locations as a \textit{function of the target location} and the range error statistics at each anchor.
	
	%using theoretical and numerical approaches \cite{Stefania_UWB_indoor_loc_TAES_2015, RSS_UAV_Loc_TVT_2022, Kirchof_Optim_Plcmnt_ICIPIN_2013, Xu_Linglong_TOA_shared_sensor_TSP_2022}, as well as empirical design-tools approaches \cite{Sharma_indoor_bcn_plcmnt_IOTJ_2021, Cerro_UWBLoc_TIM_2022}. Stefania et al. \cite{Stefania_UWB_indoor_loc_TAES_2015} design optimal anchor placement for indoor ToA-based localization schemes to track a target moving along a corridor with the minimum average mean square error (MSE) along the linear trajectory.  Kirchhof \cite{Kirchof_Optim_Plcmnt_ICIPIN_2013} presents a binary- (BIM) and mixed-integer programming (MIP)-based framework for anchor placement in indoor localization deployment scenarios, using the DOP as a key metric. Xu et al. \cite{Xu_Linglong_TOA_shared_sensor_TSP_2022} study the optimal sensor placement for TOA-based placement problem for multi-target localization with shared sensors, wherein they characterize the Cramer-Rao Lower Bound (CRLB) of such systems. They propose a hybrid approach which involves theoretical aspects such as A-optimality as well as sub-optimal approaches that rely on numerical optimization algorithms. 
	\vspace{-12pt}
	\subsection{Contributions}
	In various applications, UAV-based systems are often autonomous, tracking mobile aerial vehicles \textit{in unconstrained 3D space}. Hence, they need to have the capability to localize multiple targets outside the convex hull with low positioning error. In contrast to the prior works, our problem formulation imposes volumetric constraints on the anchors, imposes no constraints on the far-away target, and is specifically designed such that (a) we can study 3D-optimal as well as 2D-optimal anchor placement schemes (for localization in 3D space), (b) the \textit{optimal anchor configuration is independent of the target's distance from the anchors, once it is sufficiently far away}, and (c) the anchor placement is robust to the target's relative bearing w.r.t. the anchors. We make the following contributions in this work.
	%A key consideration in anchor placement, from the geometrical standpoint, is the enclosing of the region of interest by the convex hull of anchors. Prior works have rigorously shown that localization error reduces when the targets are located inside the convex hull of anchor locations (see \cite{UKhan_DILAND_TSP_2010}, \cite{Schloemann_GDOP_Cellular_WCL_2016}, \cite{OLone_Conv_Hull_PLANS_2016}). Due to the favorable geometry, prior works predominantly focus on designing efficient anchor placement schemes for interior convex hull localization scenarios. This regime does not hold in vehicle-centric localization where system designers are constrained to deploy anchors on the vehicles' body \cite{Lazzari_UWB_UAV_Sens_2017,Rao_SRA_GDOP_Scale_Dimn_Adapt_2021}. Similar conditions can arise in UAV-based localization (Fig. \ref{Fig1_Anchors_FarAway_Regime}).  To fill this knowledge gap, 
	
	\subsubsection*{Range-Normalized DOP (RNDOP)} When the target lies outside the anchors' convex hull, we showed in our prior work \cite{Rao_SRA_GDOP_Scale_Dimn_Adapt_2021} that the DOP ($\mathsf{D}$) scales linearly with the distance from the anchors' centroid ($r_t$). Firstly, we extend these results by deriving closed-form expressions of the 2D-DOP (for XY localization in 3D space), as a function of the azimuth angle, and its upper and lower bounds. Furthermore, we propose a novel metric termed \textit{Range-Normalized DOP} ($\mathsf{R}$). Using the linear DOP scaling results and the definition of RNDOP, we demonstrate that a weighted DOP optimization problem can be simplified to a weighted RNDOP optimization problem, where the weighting function is a function of the azimuth and elevation angles.  %Leveraging this, , and demonstrate it. Allows for a robust anchor placement for beyond-convex hull localization that is based on a minimax optimization problem formulation.
	
	\subsubsection*{Min-Max DOP-optimal Iterative Anchor Placement Schemes for 2D and 3D Positioning} 
	Using the asymptotic equivalence between the weighted DOP and weighted RNDOP optimization problems, we show that a robust DOP-optimal anchor configuration is almost invariant to the targets' distance as long as it is far away outside the anchors' convex hull. We then formulate a min-max RNDOP-optimal anchor placement problem under practical conditions such as (a) minimum anchor separation constraints, that are often imposed to ensure adequate device isolation, UAV safety, etc., and (b) box constraints, that approximately model the shape of terrestrial vehicles (such as trucks/buses), and the airspace above tall buildings. Unfortunately, the resultant problem is non-convex and intractable. To overcome this, we propose an iterative scheme, where a single anchor is added to the existing set of anchors at each iteration to minimize the maximum RNDOP. Observing the structure of the recurrence relation, we propose two additional heuristic schemes: one based on the (a) trace, and (b) eigenvector of a \emph{matrix-valued function solely composed of the anchor coordinates}. In the process, we derive the upper and lower bounds on (a) the achievable RNDOP at each iteration, and (b) the universal RNDOP bounds. Then, we combine these schemes into a cohesive iterative algorithmic framework to obtain robust 2D and 3D RNDOP-optimal anchor configurations. Finally, we derive theoretical results to understand the impact of anchor position uncertainty on the maximum RNDOP.
	
	\subsubsection*{Complexity Analysis and Numerical Results}
	We analyze the worst-case computational complexity of each proposed variant, and show that for the Min-Max RNDOP and Trace-based heuristic schemes, the worst-case complexity is linear in the number of additional anchors to be added ($N_{\rm a}$). We then use numerical results to validate the derived theoretical results, and develop a Monte-Carlo simulator to compare the positioning error and execution time performance of the proposed schemes across different computer configurations. We also provide useful insights by discussing the \emph{localization error-execution time tradeoff} for all the proposed schemes, which can aid the design of practical \emph{beyond convex hull} localization systems. 
	
	\subsubsection*{Notation}
	$\bm{X}$ is a matrix, and $\bm{x}$ a column vector. $\| \bm{x} \|_2$ and $[\bm{x}]_i$ represent the 2-norm and $i^{\text{th}}$ element of $\bm{x}$. For a square matrix $\bm{X}$, ${\rm tr}(\bm{X})$, $\bm{X}^{-1}$ and $\bm{X}^{\frac{1}{2}}$ represent the trace, inverse,  and matrix square-root. The ordered eigenvalues of $\bm{X}$ are denoted by $\lambda_1 (\bm{X}) \leq \lambda_2 (\bm{X}) \leq \cdots \leq \lambda_n (\bm{X})$, where $\lambda_i (\bm{X})$ is the $i^{th}$ lowest eigenvalue, and its corresponding eigenvector is $\bm{v}_i(\bm{X})$.  $\lambda_+(\bm{X})$ and $\lambda_{-}(\bm{X})$ represent the maximum and minimum eigenvalue, and $\bm{v}_+(\bm{X})$ and $\bm{v}_{-}(\bm{X})$ represent their corresponding eigenvectors. Similarly, for symmetric matrices $\bm{X}$ and $\bm{Y}$, $\lambda_+(\bm{X}, \bm{Y})$ and $\lambda_{-}(\bm{X}, \bm{Y})$ denotes the maximum and minimum generalized eigenvalues of $(\bm{X}, \bm{Y})$.  $[\bm{X}]_{i,j}$, $[\bm{X}]_{i,:}$, $[\bm{X}]_{:,j}$ and $[\bm{X}]_{i:m,j:n}$ represent $(i,j)^{\text{th}}$ element, $i^{th}$ row, $j^\text{th}$ column, and the sub-matrix formed by choosing the row indices $i:m$  and column indices $j:n$. $\bm{I}$ represents the identity matrix. $\bm{I}_{m \times n}$ has the rows and columns of an $m \times m$ identity matrix, concatenated with $n - m$ columns of all zeros. The set of real-valued $n \times n$ matrices and $n \times 1$ vectors are represented by $\mathbb{R}^{n \times n}$ and $\mathbb{R}^{n \times 1}$ respectively, and that of $n \times n$ symmetric matrices are represented by $\mathbb{S}^{n \times n}$. The set of positive definite and positive-semidefinite $n \times n$ matrices are represented by $\mathbb{S}^{n \times n}_{+}$ and $\mathbb{S}^{n \times n}_{++}$ respectively. $X \sim \mathcal{U}[\mathcal{X}]$ represents a random variable $X$ whose value is drawn uniformly at random from the set $\mathcal{X}$.
	\vspace{-5pt}
	\section{Mathematical Preliminaries}\label{sec:math_prelims}
	In this section, we mention a few useful results from matrix theory, which are crucial for designing and characterizing the performance of algorithms presented in this paper. 
	\begin{proposition}\label{prop:Sherman_Morrison_formula}
		(Sherman-Morrison formula \cite{golub2012matrix}) Let $\bm{X} \in \mathbb{R}^{n \times n}$ be a non-singular matrix, and $\bm{y}, \bm{z} \in \mathbb{R}^{n}$ such that $\bm{z}^T \bm{X}^{-1} \bm{y} \neq -1$. Then, the inverse of the rank-1 update $\bm{Y} = \bm{X} + \bm{yz}^T$ is given by $\bm{Y}^{-1} = \bm{X}^{-1} - \tfrac{\bm{X}^{-1} \bm{yz}^T \bm{X}^{-1}}{1 + \bm{z}^T \bm{X}^{-1} \bm{y}}$.
		
	\end{proposition}
	
	\begin{proposition}\label{prop:general_eigvals_vecs}
		(Generalized eigenvalues/eigenvectors \cite{boyd2004convex}) Suppose $\bm{X} \in \mathbb{S}^{n \times n}$, and $\bm{Y} \in \mathbb{S}^{n \times n}_+$. Then, for a non-zero vector $\bm{w} \in \mathbb{R}^n$, we have 
		\begin{align*}
			\lambda_{-}(\bm{Z}) &= \lambda_{-}(\bm{X}, \bm{Y}) \leq \tfrac{\bm{w}^T \bm{X w}}{\bm{w}^T \bm{Y w}} \leq \lambda_{+}(\bm{X}, \bm{Y}) = \lambda_{+}(\bm{Z}),
		\end{align*}
		where $\bm{Z} = \bm{Y}^{-\frac{1}{2}} \bm{X Y}^{-\frac{1}{2}}$. Furthermore, the upper bound is achieved when $\bm{w} = \alpha \bm{v}_{+}(\bm{Z})$ and the lower bound when $\bm{w} = \alpha \bm{v}_{-}(\bm{Z})$, where $\alpha \in \mathbb{R} \setminus \{0\}$.
	\end{proposition}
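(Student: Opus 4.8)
The plan is to reduce the generalized Rayleigh quotient $\frac{\bm{w}^T \bm{X} \bm{w}}{\bm{w}^T \bm{Y} \bm{w}}$ to an ordinary Rayleigh quotient by a congruence transformation, and then apply the standard Rayleigh--Ritz bounds. Since $\bm{Y} \in \mathbb{S}^{n \times n}_{+}$ is positive definite, it admits a unique symmetric positive-definite square root $\bm{Y}^{\frac{1}{2}}$, which is nonsingular with symmetric inverse $\bm{Y}^{-\frac{1}{2}}$. First I would introduce the change of variables $\bm{u} = \bm{Y}^{\frac{1}{2}} \bm{w}$, equivalently $\bm{w} = \bm{Y}^{-\frac{1}{2}} \bm{u}$, which is a bijection on $\mathbb{R}^{n} \setminus \{ \bm{0} \}$ because $\bm{Y}^{\frac{1}{2}}$ is invertible.

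Substituting and using $(\bm{Y}^{-\frac{1}{2}})^T = \bm{Y}^{-\frac{1}{2}}$ gives $\bm{w}^T \bm{X} \bm{w} = \bm{u}^T \bm{Z} \bm{u}$ and $\bm{w}^T \bm{Y} \bm{w} = \bm{u}^T \bm{u}$, so that $\frac{\bm{w}^T \bm{X} \bm{w}}{\bm{w}^T \bm{Y} \bm{w}} = \frac{\bm{u}^T \bm{Z} \bm{u}}{\bm{u}^T \bm{u}}$, where $\bm{Z} = \bm{Y}^{-\frac{1}{2}} \bm{X} \bm{Y}^{-\frac{1}{2}}$. Because $\bm{X}$ is symmetric, $\bm{Z}$ is symmetric and hence has real eigenvalues and an orthonormal eigenbasis. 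I would then invoke the Rayleigh--Ritz theorem for $\bm{Z}$: for every nonzero $\bm{u}$ the quotient $\frac{\bm{u}^T \bm{Z} \bm{u}}{\bm{u}^T \bm{u}}$ lies in $[\lambda_{-}(\bm{Z}), \lambda_{+}(\bm{Z})]$, with the upper and lower values attained at $\bm{u} = \bm{v}_{+}(\bm{Z})$ and $\bm{u} = \bm{v}_{-}(\bm{Z})$ respectively. Since $\bm{w} \mapsto \bm{u}$ is a bijection that leaves the quotient unchanged, the identical bounds hold for the generalized quotient over all nonzero $\bm{w}$.

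It remains to identify $\lambda_{\pm}(\bm{Z})$ with the generalized eigenvalues $\lambda_{\pm}(\bm{X}, \bm{Y})$. Left-multiplying the generalized eigenrelation $\bm{X} \bm{w} = \lambda \bm{Y} \bm{w}$ by $\bm{Y}^{-\frac{1}{2}}$ and inserting $\bm{Y}^{-\frac{1}{2}} \bm{Y}^{\frac{1}{2}} = \bm{I}$ converts it into the ordinary eigenrelation $\bm{Z} \bm{u} = \lambda \bm{u}$ with $\bm{u} = \bm{Y}^{\frac{1}{2}} \bm{w}$; thus the generalized spectrum of $(\bm{X}, \bm{Y})$ coincides with the spectrum of $\bm{Z}$, and in particular its extreme values equal $\lambda_{\pm}(\bm{Z})$. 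Mapping the extremal directions back through $\bm{w} = \bm{Y}^{-\frac{1}{2}} \bm{u}$ then produces the optimal arguments: the upper bound is achieved at $\bm{w} = \alpha \bm{Y}^{-\frac{1}{2}} \bm{v}_{+}(\bm{Z})$ and the lower bound at $\bm{w} = \alpha \bm{Y}^{-\frac{1}{2}} \bm{v}_{-}(\bm{Z})$, for any $\alpha \in \mathbb{R} \setminus \{ 0 \}$.

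I do not expect a genuine obstacle here, as this is a classical congruence-reduction argument. The two points that require care are, first, verifying that $\bm{Y}^{\frac{1}{2}}$ is symmetric and invertible, which relies on $\bm{Y}$ being positive definite rather than merely positive semidefinite, so that the change of variables is a true bijection preserving the quotient; and second, carrying the factor $\bm{Y}^{-\frac{1}{2}}$ through the back-substitution when expressing the extremizing $\bm{w}$ in terms of the eigenvectors of $\bm{Z}$.
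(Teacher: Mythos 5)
Your congruence-reduction argument is the standard proof of this fact and is correct; the paper itself states this proposition as a cited preliminary (from Boyd and Vandenberghe) without giving a proof, so there is no in-paper derivation to diverge from. One substantive remark: your back-substitution correctly yields the extremizers $\bm{w} = \alpha \bm{Y}^{-\frac{1}{2}} \bm{v}_{\pm}(\bm{Z})$, i.e.\ the generalized eigenvectors of the pair $(\bm{X}, \bm{Y})$, whereas the proposition as printed claims the bounds are attained at $\bm{w} = \alpha \bm{v}_{\pm}(\bm{Z})$; these coincide only when $\bm{Y}$ is a scalar multiple of the identity, and your version is the correct one.
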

	
	\begin{proposition}\label{prop:eigvalue_interlace_thm}
		(Eigenvalue bounds due to rank-1 perturbation \cite{golub2012matrix}, \cite{cheng2012bounds}) Let $\bm{Y} = \bm{X} + \epsilon \bm{ww}^T$, where $\bm{X},\bm{Y} \in \mathbb{R}^{n\times n}$, $\bm{w} \in \mathbb{R}^n$ such that $\| \bm{w} \|_2 = 1$, and $\epsilon \in \mathbb{R}$. If $\epsilon < 0$, we have
		\begin{align}
			\label{eq:eig_interlace_epsilon_neg}
			\lambda_1(\bm{X}) -  \epsilon \leq & \lambda_1 (\bm{Y}) \leq \lambda_1 (\bm{X}) \leq \lambda_2 (\bm{Y}) \leq \lambda_2 (\bm{X}) \leq \cdots \nonumber \\
			& \leq \lambda_n (\bm{Y}) \leq \lambda_n (\bm{X}).
		\end{align}
		On the other hand, if $\epsilon > 0$ then we have
		\begin{align}
			\label{eq:eig_interlace_epsilon_pos}
			\lambda_1(\bm{X}) \leq & \lambda_1 (\bm{Y}) \leq \lambda_2 (\bm{X}) \leq \lambda_2 (\bm{Y}) \leq \lambda_3 (\bm{X}) \leq \cdots \nonumber \\
			& \leq \lambda_{n} (\bm{X}) \leq \lambda_n (\bm{Y}) \leq \lambda_n (\bm{X}) + \epsilon.
		\end{align}    
		In both cases, there exists $d_i$ for $ i=1,2,\cdots,n$ such that
		\begin{align}
			\label{eq:eigY_eq_eigX_plus_eps_frac}
			\lambda_i(\bm{Y}) = \lambda_i (\bm{X}) + d_i \epsilon, i = 1,2,\cdots, n,		
		\end{align}
		where $0 \leq d_i \leq 1$ and $\sum_{i=1}^n d_i = 1$.
	\end{proposition}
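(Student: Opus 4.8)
The plan is to treat $\bm{X}$ (and hence $\bm{Y}$) as symmetric, so that the ordered real eigenvalues $\lambda_1(\cdot) \le \cdots \le \lambda_n(\cdot)$ are well defined, and to read $\bm{Y} = \bm{X} + \epsilon \bm{w}\bm{w}^T$ as a rank-one additive perturbation whose only nonzero eigenvalue is $\epsilon$ (with eigenvector $\bm{w}$), the remaining $n-1$ eigenvalues being $0$ because $\| \bm{w} \|_2 = 1$. I would first dispose of the two \emph{outer}, two-sided bounds of the form $\lambda_i(\bm{X}) + \min(\epsilon,0) \le \lambda_i(\bm{Y}) \le \lambda_i(\bm{X}) + \max(\epsilon,0)$ using monotonicity of eigenvalues together with the operator-norm estimate $\| \epsilon \bm{w}\bm{w}^T \|_2 = |\epsilon|$: when $\epsilon > 0$ the perturbation is positive-semidefinite, so $\lambda_i(\bm{Y}) \ge \lambda_i(\bm{X})$, while the norm bound gives $\lambda_i(\bm{Y}) \le \lambda_i(\bm{X}) + \epsilon$; the case $\epsilon < 0$ is symmetric. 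These already pin down the extreme terms of the chains in \eqref{eq:eig_interlace_epsilon_neg}--\eqref{eq:eig_interlace_epsilon_pos}.

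The crux is the interior interlacing, where I would invoke the Courant--Fischer min-max characterization. Focusing on $\epsilon > 0$, the monotonicity above already yields $\lambda_i(\bm{X}) \le \lambda_i(\bm{Y})$. For the complementary inequality $\lambda_i(\bm{Y}) \le \lambda_{i+1}(\bm{X})$, I would take an optimal $(i+1)$-dimensional subspace $S^\star$ achieving $\lambda_{i+1}(\bm{X}) = \min_{\dim S = i+1} \max_{0 \neq \bm{x} \in S} (\bm{x}^T \bm{X} \bm{x})/(\bm{x}^T \bm{x})$ and intersect it with the hyperplane $H = \{ \bm{x} : \bm{w}^T \bm{x} = 0 \}$. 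A dimension count gives $\dim(S^\star \cap H) \ge i$, and on $H$ one has $\bm{x}^T \bm{Y} \bm{x} = \bm{x}^T \bm{X} \bm{x}$ because $\bm{w}^T \bm{x} = 0$; feeding any $i$-dimensional subspace of $S^\star \cap H$ into the min-max formula for $\lambda_i(\bm{Y})$ then yields $\lambda_i(\bm{Y}) \le \lambda_{i+1}(\bm{X})$. This is the step I expect to be the main obstacle --- not because it is deep, but because it is the only place where the rank-one structure (via the single-hyperplane trick) is genuinely exploited, and the dimension count must be set up with care. The $\epsilon < 0$ chain then follows for free by rewriting $\bm{X} = \bm{Y} + |\epsilon| \bm{w}\bm{w}^T$ and applying the $\epsilon > 0$ result with the roles of $\bm{X}$ and $\bm{Y}$ interchanged.

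Finally, for the representation \eqref{eq:eigY_eq_eigX_plus_eps_frac} I would simply set $d_i := (\lambda_i(\bm{Y}) - \lambda_i(\bm{X}))/\epsilon$, which is well defined since $\epsilon \neq 0$ in both cases and gives $\lambda_i(\bm{Y}) = \lambda_i(\bm{X}) + d_i \epsilon$ by construction. The bounds $0 \le d_i \le 1$ are then immediate from the outer two-sided bounds of the first paragraph, after dividing by $\epsilon$ and tracking the sign flip when $\epsilon < 0$. For $\sum_{i=1}^n d_i = 1$ I would use the trace identity $\sum_{i=1}^n \lambda_i(\bm{Y}) = {\rm tr}(\bm{Y}) = {\rm tr}(\bm{X}) + \epsilon\, {\rm tr}(\bm{w}\bm{w}^T) = \sum_{i=1}^n \lambda_i(\bm{X}) + \epsilon$, since ${\rm tr}(\bm{w}\bm{w}^T) = \| \bm{w} \|_2^2 = 1$; subtracting and dividing by $\epsilon$ gives $\sum_{i=1}^n d_i = 1$, completing the argument.
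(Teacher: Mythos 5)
Your proof is correct and complete. Note, however, that the paper itself gives no proof of this proposition --- it is quoted as a known preliminary with citations to \cite{golub2012matrix} and \cite{cheng2012bounds} --- so there is no in-paper argument to compare against; what you have written is essentially the standard textbook derivation from those references. Each piece checks out: the outer bounds $\lambda_i(\bm{X}) + \min(\epsilon,0) \le \lambda_i(\bm{Y}) \le \lambda_i(\bm{X}) + \max(\epsilon,0)$ follow from Weyl monotonicity plus the fact that the only nonzero eigenvalue of $\epsilon\bm{ww}^T$ is $\epsilon$; the interior interlacing $\lambda_i(\bm{Y}) \le \lambda_{i+1}(\bm{X})$ via intersecting an optimal $(i+1)$-dimensional Courant--Fischer subspace with the hyperplane $\bm{w}^T\bm{x}=0$ (dimension count $\ge i$, and $\bm{x}^T\bm{Y}\bm{x}=\bm{x}^T\bm{X}\bm{x}$ there) is exactly the right use of the rank-one structure; the $\epsilon<0$ chain by swapping the roles of $\bm{X}$ and $\bm{Y}$ is clean; and the trace identity gives $\sum_i d_i = 1$ immediately. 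One remark: your argument actually yields $\lambda_1(\bm{X}) + \epsilon \le \lambda_1(\bm{Y})$ in the $\epsilon<0$ case, i.e.\ $\lambda_1(\bm{X}) - |\epsilon| \le \lambda_1(\bm{Y})$, whereas the leftmost term of \eqref{eq:eig_interlace_epsilon_neg} as printed reads $\lambda_1(\bm{X}) - \epsilon$, which for $\epsilon<0$ would exceed $\lambda_1(\bm{X})$ and contradict $\lambda_1(\bm{Y}) \le \lambda_1(\bm{X})$ in the same chain; this is a sign typo in the statement, and your version is the correct one. The only hypothesis worth making explicit is symmetry of $\bm{X}$, which the statement omits but which is needed for the ordered real spectrum and is satisfied everywhere the proposition is used in the paper.
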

	\begin{proposition}\label{proposition:sum_of_top_2_eigvals_optim}
		%Consider $\bm{X} \in \mathbb{S}^{3 \times 3}_{++}$, whose ordered eigenvalues are $\lambda_1 (\bm{X}) \leq \lambda_2 (\bm{X}) \leq \lambda_3 (\bm{X})$. Defining $\bm{\Lambda} \triangleq \{ \lambda_i (\bm{X}) \}_{i=1}^3$, the solution to the optimization problem
		Let $\bm{X_{3}} \in \mathbb{S}^{3 \times 3}_{++}$ and $\bm{X_{2}} \in \mathbb{S}^{2 \times 2}_{++}$. The solution to the optimization problems 
		\begin{align}
			\label{eq:sum_of_top_2_eigvals_optim_X3}
			\Lambda^*_3 = \underset{\bm{X_{3}}}{\min} & \ {\rm tr}(\bm{X}^{-1}_{\bm{3}}) - \lambda_{-}(\bm{X}^{-1}_{\bm{3}})\ {\rm s.t.\ } {\rm tr}(\bm{X_{3}}) = K,  \\ 
			\label{eq:sum_of_top_2_eigvals_optim_X2}
			\Lambda^*_2 = \underset{\bm{X_{2}}}{\min} & \ {\rm tr}(\bm{X}^{-1}_{\bm{2}}) - \lambda_{-}(\bm{X}^{-1}_{\bm{2}})\ {\rm s.t.\ } {\rm tr}(\bm{X_{2}}) = K,
		\end{align}
		are given by $\Lambda^*_3 = \frac{6}{{\rm tr}(\bm{X_{3}})} = \frac{6}{K}$ and $\Lambda^*_3 = \frac{2}{{\rm tr}(\bm{X_{2}})} = \frac{2}{K}$.%, which is achieved when its eigenvalues $\lambda_i(\bm{X_{3}}) = \frac{K}{3}$ for $i=1,2,3$, and $\lambda_j(\bm{X_{2}}) = \frac{K}{2}$ for $j=1,2$ respectively.
	\end{proposition}
	%\begin{proof}
	%	Please refer Appendix \ref{appendix:sum_of_max_2_eigvals}.
	%\end{proof}
	
	\begin{comment} 
	\begin{corollary}\label{proposition:sum_of_top_1_eigvals_optim}
		Let $\bm{X_{2}} \in \mathbb{S}^{2 \times 2}_{++}$. The solution to the optimization problem 
		\begin{align}
			\label{eq:sum_of_top_1_eigvals_optim}
			\Lambda^* = \underset{\bm{\bm{X_{2}}}}{\min} & \ {\rm tr}(\bm{X}^{-1}_{\bm{2}}) - \lambda_{-}(\bm{X}^{-1}_{\bm{2}}) \\
			{\rm s.t.\ } & {\rm tr}(\bm{X_{2}}) = K \nonumber
		\end{align}
		is given by $\Lambda^* = \frac{4}{K}$, which is achieved when its eigenvalues $\lambda_i(\bm{X_{2}}) = \frac{K}{2}$ for $i=1,2$.
	\end{corollary}
	\begin{proof}
		The proof is similar to Proposition \ref{proposition:sum_of_top_2_eigvals_optim}, and is omitted due to space constraints. 
	\end{proof}
	\end{comment}

	\section{System Model} \label{sec:sys_model}
	\subsection{Ranging Error Model}
	We consider a network of $N$ anchors which are used to determine the unknown position of a target in 3D space. All the anchor locations are assumed to be known, and the $i^{th}$ anchor is located at $\bm{r_i} = [x_i\ y_i\ z_i]^T, i=1,2,\cdots,N$. The target's unknown location is at $\bm{r_t}=[x_t\ y_t\ z_t]^T$. The $i^{th}$ anchor measures the time-of-arrival (ToA) $\tau_i$ from the target (also referred to as `tag') using two-way ranging mechanisms \cite{Domuta_Palade_TVT_2021}, which is given by \cite{Mazuelas_Win_SRI_TSP_2018}
	\begin{align}
		\label{eq:range_error_model}
		\tau_i = \frac{1}{c}(\| \bm{r_i} - \bm{r_t}  \|_2 + w_i), \text{ for } i = 1,2,\cdots,N,
	\end{align}
	where $c$ is the velocity of light, $w_i \sim \mathcal{N}(b_i,\sigma^2_{w_i})$ is the ranging error, $\sigma^2_{w_i}$ the range error variance, and $b_i > 0$ is the range bias due to multipath \cite{Dardari_Win_UWB_Ranging_ProcIEEE_2009, Mazuelas_Win_SRI_TSP_2018}.
	\vspace{-10pt}
	\subsection{Localization Error and the Dilution of Precision (DOP)}
	For range-based localization schemes, the overall positioning error depends on the statistics of the ranging errors ($w_i, i=1,2,\cdots,N$), as well as the anchor coordinates relative to the target. The Cram{\'e}r-Rao Lower Bound (CRLB) for the localization error\footnote{This quantity is often also referred to as the Positioning Error Bound (PEB) in the literature.} under (\ref{eq:range_error_model}) is given by \cite{Torrieri_Stat_Lclze_TAES_1984}
	\begin{align}
		\label{eq:CRLB_for_TOA_under_Gaussian_error}
		{\rm Var}([\bm{r_t} - \bm{\hat{r}_t}]_i) & \geq \big[ \big(\bm{H}^T \bm{W}^{-1} \bm{H} \big)^{-1} \big]_{i,i}, \text{where } \\
		\label{eq:matrix_H}
		\bm{H}^T & = \big[\tfrac{\bm{r_t} - \bm{r_1}}{\| \bm{r_t} - \bm{r_1} \|_2}\ \  \tfrac{\bm{r_t} - \bm{r_2}}{\| \bm{r_t} - \bm{r_2} \|_2} \cdots \tfrac{\bm{r_t} - \bm{r_N}}{\| \bm{r_t} - \bm{r_N} \|_2} \big], \text{and }\\ 
		\bm{W} & = \mathbb{E}[\bm{ww}^T], \text{ where } \bm{w} = [w_1\ w_2\ \cdots w_N]^T.  
	\end{align}
	In real-world scenarios, multipath scattering in NLoS conditions is the primary source of range bias $b_i$ \cite{Dardari_Win_UWB_Ranging_ProcIEEE_2009}, and the received signal SNR is the primary source of the unbiased range error ($w_i - b_i$) \cite{Gezici_Poor_UWB_Posn_ProcIEEE_2009}. In this paper, we focus on ToA-based localization of far-away targets outside the convex hull of the anchor locations (henceforth referred to as `beyond convex hull' localization scenarios) \cite{Rao_SRA_GDOP_Scale_Dimn_Adapt_2021}. 
	When the target is very far from the anchors, the received signal's SNR at all anchors tends to be the same. Furthermore, the multipath scattering geometry also tends to be similar, resulting in the same range bias at all anchors\footnote{Indeed, this model is exceedingly used in the optimal anchor placement literature \cite{Nguyen_Optim_ToA_Multistatic_TSP_2016, Schloemann_Dhillon_WCL_2016, Sheng_Optim_Anch_Plcmnt_TSP_2019}. In addition, this model can be simplified further by ignoring the range bias in LoS conditions \cite{Dardari_Win_UWB_Ranging_ProcIEEE_2009, Gezici_Poor_UWB_Posn_ProcIEEE_2009, Alsindi_RangeErr_Meas_UWB_Ind_TVT_2009}.}.
	Hence, for far-away targets, we assume that 
	\begin{itemize}
		\item the bias of the range error on each anchor is the same, i.e. $b_i = b\ \forall i=1,2,\cdots,N$, and
		\item the unbiased range error $(w_i - b_i)$ on each anchor are i.i.d. normal random variables.
	\end{itemize}
	Under these assumptions, we get $\bm{W} = (b^2 + \sigma^2_w) \bm{I}_{\bm{N}}$ and the CRLB expression in (\ref{eq:CRLB_for_TOA_under_Gaussian_error}) simplifies to 
	\begin{align}
		\label{eq:CRLB_under_common_error_stats}
		{\rm Var}([\bm{r_t} - \bm{\hat{r}_t}]_i) & \geq (b^2 + \sigma^2_w)\big[ \big(\bm{H}^T \bm{H} \big)^{-1} \big]_{i,i}.
	\end{align}
	Thus, the positioning error bound (PEB) is given by
	\begin{align}
		\label{eq:PEB_same_error_stats_3D_loc}
		\sqrt{{\rm Var}(\| \bm{r_t} - \bm{\hat{r}_t}\|_2)} & \geq \sqrt{(b^2 + \sigma^2_w) \times \text{tr}([\bm{H}^T \bm{H}]^{-1}) }.
	\end{align}
	Thus, the PEB is dependent on (a) the statistics of range error $(b, \sigma_w)$, and the (b) geometrical placement of anchors relative to the target $(\bm{H})$. The impact of anchor placement on the target's positioning error is quantified by the metric termed as \textit{Dilution of Precision (DOP)} \cite{Torrieri_Stat_Lclze_TAES_1984}. It is defined in terms of $\bm{Q} = [\bm{H}^T \bm{H}]^{-1}$ below for 3D ($\mathsf{D_{xyz}}$), and 2D ($\mathsf{D_{xy}}$) localization.
	\begin{align}
		\label{eq:DOP_xyz_defn}
		\mathsf{D_{xyz}} & = \sqrt{{\rm tr}(\bm{Q})}, \mathsf{D_{xy}} = \sqrt{{\rm tr}([\bm{Q}]_{1:2,1:2})}.
	\end{align}
	As seen in (\ref{eq:matrix_H}), the columns of $\bm{H}^T$ are composed of dimensionless unit vectors. Hence, the DOP is a \textit{dimensionless metric}. In additon, it satisfies the following properties.
	\begin{property} \label{prop:trans_Inv}
		(Translational Invariance) The DOP does not change when all the anchors and the target are translated by a vector $\bm{\Delta r}$. 
	\end{property}
	
	\begin{property} \label{prop:scale_inv}
		(Scale Invariance) If the anchor and target coordinates are scaled by a factor of $k$ ($k \neq 0$), then the DOP remains unchanged.
	\end{property}
	Using the definition of $\bm{H}$ in (\ref{eq:matrix_H}), Property \ref{prop:trans_Inv} follows from the fact that the entries of $\bm{H}$ depend on the \textit{vector difference between} the target and the anchor coordinates. On the other hand, Property \ref{prop:scale_inv} follows since the entries in $\bm{H}$ are normalized ratios of the anchor positions relative to the target. 
	\vspace{-12pt}
	\subsection{DOP and Range-Normalized DOP (RNDOP) in ToA-based Positioning of Far-Away Targets}\label{subsec:DOP_and_RNDOP_laws}
	Beyond convex hull localization is motivated by several practical use cases, such as (a) using vehicular sensors to track outdoor targets \cite{Bresson_TIV_SLAMSurvey_2017, Rao_SRA_GDOP_Scale_Dimn_Adapt_2021}, (b) localization of UAVs using terrestrial vehicle-mounted anchors \cite{Lazzari_UWB_UAV_Sens_2017}, and (c) self-localization scenarios. Formally, the target is considered to be far-away if the following conditions are met, for $i,j=1,2,\cdots,N$ and $i \neq j$:
	\begin{enumerate}
		\item $\| \bm{r_i} - \bm{r_t} \|_2 \approx \| \bm{r_j} - \bm{r_t} \|_2 \approx \| \bm{r_t}\|_2$, and
		\item $\| \bm{r_i} - \bm{r_j} \|_2 \ll \| \bm{r_i} - \bm{r_t} \|_2$.
	\end{enumerate}	
	%In other words, the target is nearly equi-distant from all the anchors, resulting in an ill-conditioned matrix $\bm{H}$.  
	While the notion of \textit{`approximate equality'} in the above description is rather arbitrary, we will provide a formal working definition of the `far-away target' scenario later in this section. Due to the translational invariance of the DOP, we have the following modeling assumption regarding the origin of the coordinate system.
	\begin{assumption}\label{assump:anchor_centroid_at_origin}
		The origin of the local coordinate system (LCS) is located at the centroid of the anchor locations, i.e. $\sum_{i=1}^N \bm{r_i} = \bm{0}$. 
	\end{assumption}
	
	\begin{figure}[t]
		\centering
		\includegraphics[width=2.9in]{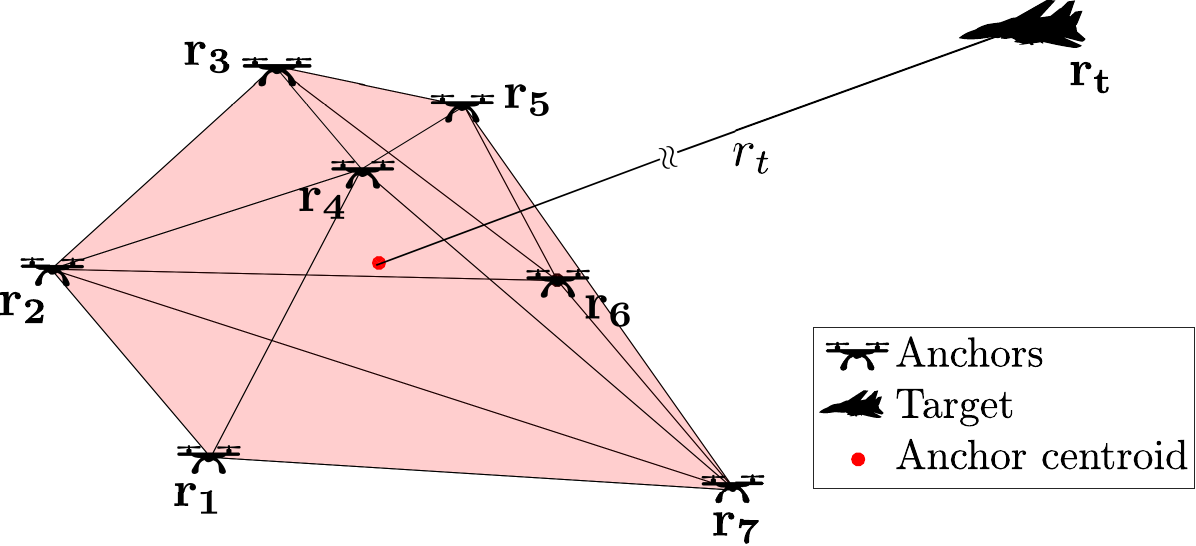}\\
		[-2ex]
		\caption{3D ToA-based localization of far-away targets. The anchors are mounted on UAVs, and the $i^{\rm th}$ anchor is at $\bm{r_i}=(x_i,y_i,z_i), i=1,2,\cdots,N$ such that all anchors are confined in a relatively small volume enclosed by their convex hull, such that $\sum_{i=1}^N \bm{r_i}= \bm{0}$. The target is far-away from the origin such that $r_t \approx \| \bm{r_t} - \bm{r_i} \|_2$.} 
		\label{Fig1_Anchors_FarAway_Regime}
	\end{figure}
	
	For far-away targets outside the anchors' convex hull, we have the following proposition. 
	\begin{proposition}\label{prop:approx_matrix_rep_GDOP_matrix}
		In the far-away target regime, we have 
		\begin{align} 
			\bm{H}^T \bm{H} & \approx \frac{1}{r^2_t}(\bm{C} + N r^2_t \bm{a_t}(\theta, \phi)\bm{a}^T_{\bm{t}}(\theta, \phi)) \succcurlyeq \bm{0}, 
		\end{align}
		where $\bm{C} = \sum_{i=1}^N \bm{r_i r}^T_{\bm{i}}$, $\bm{r_t} = r_t \bm{a_t}(\theta, \phi)$, and $\bm{a_{t}}(\theta, \phi) = [\cos \theta \sin \phi\ \sin \theta \sin \phi\ \cos \phi]^T$.
	\end{proposition}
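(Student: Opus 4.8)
The plan is to express $\bm{H}^T\bm{H}$ as a sum of rank-one outer products of the line-of-sight unit vectors, approximate each range by $r_t$ using the far-away regime, and then collapse the sum with Assumption~\ref{assump:anchor_centroid_at_origin}. First I would read off from (\ref{eq:matrix_H}) that the rows of $\bm{H}$ are the unit vectors $\bm{u_i} = (\bm{r_t} - \bm{r_i})/\| \bm{r_t} - \bm{r_i} \|_2$, so that, writing $d_i = \| \bm{r_t} - \bm{r_i} \|_2$,
\begin{align*}
\bm{H}^T \bm{H} = \sum_{i=1}^N \bm{u_i}\bm{u_i}^T = \sum_{i=1}^N \frac{(\bm{r_t} - \bm{r_i})(\bm{r_t} - \bm{r_i})^T}{d_i^2}.
\end{align*}
This immediately settles the semidefiniteness assertion, since $\bm{H}^T\bm{H}$ is a Gram matrix and is therefore positive semidefinite; the right-hand side of the proposition inherits $\succcurlyeq \bm{0}$ as well, being a nonnegative combination of the outer products $\bm{r_i}\bm{r_i}^T$ and $\bm{a_t}(\theta,\phi)\bm{a_t}^T(\theta,\phi)$.

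The only approximation I would make is to replace each denominator $d_i^2$ by the common value $r_t^2$, which is exactly the content of the first far-away condition $d_i \approx r_t$; every subsequent step is then an exact identity. Pulling out the common factor and expanding the outer products gives
\begin{align*}
\bm{H}^T\bm{H} \approx \frac{1}{r_t^2}\sum_{i=1}^N (\bm{r_t} - \bm{r_i})(\bm{r_t} - \bm{r_i})^T = \frac{1}{r_t^2}\Big( N\bm{r_t}\bm{r_t}^T - \bm{r_t}\bm{s}^T - \bm{s}\bm{r_t}^T + \bm{C} \Big),
\end{align*}
where $\bm{s} = \sum_{i=1}^N \bm{r_i}$ and $\bm{C} = \sum_{i=1}^N \bm{r_i}\bm{r_i}^T$. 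Invoking Assumption~\ref{assump:anchor_centroid_at_origin}, the centroid term $\bm{s} = \bm{0}$, so the two cross terms vanish identically, leaving $\tfrac{1}{r_t^2}(N\bm{r_t}\bm{r_t}^T + \bm{C})$. Finally I would substitute the polar representation $\bm{r_t} = r_t\,\bm{a_t}(\theta,\phi)$, so that $N\bm{r_t}\bm{r_t}^T = N r_t^2\,\bm{a_t}(\theta,\phi)\bm{a_t}^T(\theta,\phi)$, which reproduces the claimed expression.

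The step that warrants the most care, and the main obstacle to a fully rigorous statement, is justifying the uniform replacement $d_i \approx r_t$ inside the denominators. I would ground this in the two far-away conditions: the first gives $d_i \approx r_t$ directly, while the second, $\| \bm{r_i} - \bm{r_j} \|_2 \ll \| \bm{r_i} - \bm{r_t} \|_2$, guarantees $\| \bm{r_i} \|_2 \ll r_t$, so that the correction $\bm{C}/r_t^2$ is an asymptotically vanishing perturbation of the dominant rank-one term $N\bm{a_t}(\theta,\phi)\bm{a_t}^T(\theta,\phi)$ as $r_t \to \infty$. Retaining $\bm{C}$ rather than discarding it is essential, since this $O(1/r_t^2)$ correction is exactly what restores full rank to the otherwise singular rank-one dominant matrix and thereby keeps the DOP finite. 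The subtlety I would flag is that a careful expansion of $d_i^{-2} = r_t^{-2}\big(1 + O(\|\bm{r_i}\|_2/r_t)\big)$ produces residual terms of the same $O(\|\bm{r_i}\|_2^2/r_t^2)$ order as $\bm{C}/r_t^2$ itself, with the leading first-order pieces cancelling again by the centroid condition; consequently the claimed identity is best read as the natural leading-order truncation in the far-away regime, and making precise the sense in which $\bm{C}/r_t^2$ is the governing correction is where I would spend the bulk of a rigorous argument.
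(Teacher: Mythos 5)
Your derivation is correct and is the standard route to this result: write $\bm{H}^T\bm{H}$ as the Gram sum of the unit line-of-sight vectors, replace each $\|\bm{r_t}-\bm{r_i}\|_2^2$ by $r_t^2$, expand the outer products, and kill the cross terms with the centroid condition of Assumption~\ref{assump:anchor_centroid_at_origin}. The paper itself does not reproduce a proof here --- it defers entirely to Proposition~1 of \cite{Rao_SRA_GDOP_Scale_Dimn_Adapt_2021} --- so your self-contained argument matches what that reference does, and your closing caveat (that a full expansion of $d_i^{-2}$ generates additional $O(\|\bm{r_i}\|_2^2/r_t^2)$ terms of the same order as $\bm{C}/r_t^2$, so the statement is a leading-order heuristic truncation rather than a rigorous asymptotic identity) is a fair and accurate observation about the limits of the ``$\approx$'' in the proposition.
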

	\begin{proof} 
	    Please refer \cite[Proposition 1]{Rao_SRA_GDOP_Scale_Dimn_Adapt_2021}.	
	\end{proof}
	The above result can then be used to obtain a distinct \textit{`DOP scaling law'} for range-based localization in all dimensions, as shown in the following theorem.
	\begin{theorem} \label{thm:GDOP_closed_form}
		If $r_t \gg \frac{1}{\sqrt{N \lambda_{-}(\bm{C}^{-1})}}$, then $\bm{Q}$ scales linearly with $r_t$ for a fixed target direction $\bm{a_{t}} = \bm{r_{t}}/\| \bm{r_{t}} \|_2$, and is given by
		\begin{align}
			\label{eq:Q_simplify}
			\bm{Q} = r^2_t \Big(\bm{C}^{-1} - \tfrac{\bm{C}^{-1} \bm{a_{t}} \bm{a_{t}}^T \bm{C}^{-1}}{\bm{a_{t}}^T \bm{C}^{-1} \bm{a_{t}}} \Big). 
		\end{align}
	\end{theorem}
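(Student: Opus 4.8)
The plan is to compute $\bm{Q} = [\bm{H}^T\bm{H}]^{-1}$ by directly inverting the far-away approximation of Proposition~\ref{prop:approx_matrix_rep_GDOP_matrix} and then collapsing the rank-1 term with the Sherman--Morrison identity. As a preliminary, I would record that under Assumption~\ref{assump:anchor_centroid_at_origin} the geometry matrix $\bm{C} = \sum_{i=1}^N \bm{r_i r}^T_{\bm{i}}$ is symmetric positive semidefinite, and is positive definite (hence invertible, with $\lambda_{-}(\bm{C}^{-1})$ well defined) precisely when the anchors are not confined to a plane through the origin, so that they span $\mathbb{R}^3$. Inverting the expression in Proposition~\ref{prop:approx_matrix_rep_GDOP_matrix} and extracting the scalar then gives
\begin{align*}
\bm{Q} = r_t^2 \big(\bm{C} + N r_t^2\, \bm{a_{t}} \bm{a_{t}}^T\big)^{-1}.
\end{align*}

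Next I would treat $\bm{C} + N r_t^2 \bm{a_{t}}\bm{a_{t}}^T$ as a rank-1 update of $\bm{C}$ and apply Proposition~\ref{prop:Sherman_Morrison_formula} with $\bm{X} = \bm{C}$, $\bm{y} = N r_t^2 \bm{a_{t}}$, and $\bm{z} = \bm{a_{t}}$; the hypothesis of the proposition holds since $\bm{z}^T\bm{C}^{-1}\bm{y} = N r_t^2\, \bm{a_{t}}^T\bm{C}^{-1}\bm{a_{t}} > 0 > -1$. This produces the exact identity
\begin{align*}
\bm{Q} = r_t^2 \bm{C}^{-1} - \frac{N r_t^4\, \bm{C}^{-1}\bm{a_{t}}\bm{a_{t}}^T\bm{C}^{-1}}{1 + N r_t^2\, \bm{a_{t}}^T\bm{C}^{-1}\bm{a_{t}}}.
\end{align*}
The claimed closed form is recovered once the additive $1$ in the denominator is dropped against $N r_t^2\, \bm{a_{t}}^T\bm{C}^{-1}\bm{a_{t}}$ and the common factor $N r_t^2$ is cancelled, leaving $\bm{Q} = r_t^2\big(\bm{C}^{-1} - \bm{C}^{-1}\bm{a_{t}}\bm{a_{t}}^T\bm{C}^{-1}/(\bm{a_{t}}^T\bm{C}^{-1}\bm{a_{t}})\big)$; since $\mathsf{D_{xyz}} = \sqrt{{\rm tr}(\bm{Q})}$, this exhibits the linear-in-$r_t$ DOP scaling law.

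The step I expect to be the main obstacle is justifying that dropping the $+1$ is legitimate \emph{uniformly over all target directions} $\bm{a_{t}}$, because $\bm{a_{t}}^T\bm{C}^{-1}\bm{a_{t}}$ depends on $\bm{a_{t}}$ and an unlucky direction could a priori make it small and spoil the approximation. The resolution is that $\|\bm{a_{t}}\|_2 = 1$ makes $\bm{a_{t}}^T\bm{C}^{-1}\bm{a_{t}}$ a Rayleigh quotient of $\bm{C}^{-1}$, so Proposition~\ref{prop:general_eigvals_vecs} with $\bm{X} = \bm{C}^{-1}$ and $\bm{Y} = \bm{I}$ furnishes the direction-free bound $\bm{a_{t}}^T\bm{C}^{-1}\bm{a_{t}} \geq \lambda_{-}(\bm{C}^{-1})$. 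Consequently the hypothesis $r_t \gg 1/\sqrt{N\lambda_{-}(\bm{C}^{-1})}$, equivalently $N r_t^2 \lambda_{-}(\bm{C}^{-1}) \gg 1$, forces $N r_t^2\, \bm{a_{t}}^T\bm{C}^{-1}\bm{a_{t}} \geq N r_t^2 \lambda_{-}(\bm{C}^{-1}) \gg 1$ for \emph{every} $\bm{a_{t}}$, so the $+1$ is negligible no matter where the far-away target lies. This worst-case-over-directions bound is exactly the formal ``far-away target'' criterion anticipated in the text, and with it in hand the residual algebra is routine.
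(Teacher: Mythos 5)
Your proposal is correct and follows essentially the same route as the paper's proof (which the paper delegates to its prior work): invoke Proposition~\ref{prop:approx_matrix_rep_GDOP_matrix}, apply the Sherman--Morrison identity to the rank-1 update, and then discard the additive constant in the denominator using the direction-uniform Rayleigh-quotient bound $\bm{a_{t}}^T\bm{C}^{-1}\bm{a_{t}} \geq \lambda_{-}(\bm{C}^{-1})$ together with the far-away hypothesis. Note that your derived prefactor $r_t^2$ is the correct one --- it is what makes $\mathsf{D_{xyz}} = \sqrt{{\rm tr}(\bm{Q})}$ scale linearly in $r_t$ as in Lemma~\ref{lemma:equi_DOP_xyz_and_xy} --- so the lone $r_t$ in the displayed equation (\ref{eq:Q_simplify}) is a typo in the statement rather than a defect in your argument.
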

	\begin{proof}
	Please refer \cite[Theorem 1]{Rao_SRA_GDOP_Scale_Dimn_Adapt_2021}.
	\end{proof}
	Using the above result, we have the following working definition of the \textit{`far-away target'} for the purpose of this paper.
	\begin{definition}\label{defn:far_away_target_workingdefn}
		A target is far-away if $r_t \gg [N \lambda_{-} (\bm{C}^{-1})]^{-\frac{1}{2}}$.
	\end{definition}
	We are interested in (a) 3D localization, and (b) 2D localization of targets moving in the XY plane. In the following result, we derive the corresponding DOPs in closed form. 
	\begin{lemma}\label{lemma:equi_DOP_xyz_and_xy}
		In the far-away target regime, we have 
		\begin{align} 
			\label{eq:DOP_xyz_rt_scaling_concise}
			\mathsf{D_{xyz}} (\mathcal{R}_a, r_t, \theta, \phi) & = r_t \sqrt{{\rm tr}(\bm{D}) - \tfrac{\bm{a_{t}}^T(\theta, \phi) \bm{D}^2 \bm{a_{t}}(\theta, \phi)}{\bm{a_{t}}^T(\theta, \phi) \bm{D} \bm{a_{t}}(\theta, \phi)} },\\
			\label{eq:DOP_xy_rt_scaling_concise}
			\mathsf{D_{xy}}(\mathcal{R}_a, r_t, \theta) & = r_t \sqrt{{\rm tr}(\bm{E}) - \tfrac{\bm{b_{t}}^T(\theta) \bm{E}^2 \bm{b_{t}}(\theta) }{\bm{b_{t}}^T(\theta) \bm{E} \bm{b_t} (\theta)} },
		\end{align}
		where $\bm{D} \triangleq \bm{C}^{-1}, \bm{E} \triangleq \big[ \bm{C}^{-1} \big]_{1:2,1:2}$, and $\bm{b_{t}}(\theta) = [\cos \theta \ \sin \theta]^T$.
	\end{lemma}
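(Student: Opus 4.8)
The plan is to evaluate the two DOP definitions \eqref{eq:DOP_xyz_defn} and \eqref{eq:DOP_xy_defn} directly on the closed-form expression for $\bm{Q}$ supplied by Theorem~\ref{thm:GDOP_closed_form}. Writing $\bm{D} \triangleq \bm{C}^{-1}$, which is symmetric because $\bm{C} = \sum_{i} \bm{r_i r}_i^T$ is symmetric, Theorem~\ref{thm:GDOP_closed_form} gives the rank-1-corrected form $\bm{Q} = r^2_t\big(\bm{D} - \tfrac{\bm{D}\bm{a_t}\bm{a_t}^T\bm{D}}{\bm{a_t}^T\bm{D}\bm{a_t}}\big)$, where the $r^2_t$ prefactor is exactly what makes the DOP ($=\sqrt{{\rm tr}(\cdot)}$) scale linearly in $r_t$. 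Everything then reduces to computing a trace (for $\mathsf{D_{xyz}}$) and a block-restricted trace (for $\mathsf{D_{xy}}$) of this expression.

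For the 3D DOP I would invoke linearity of the trace together with the cyclic identity ${\rm tr}(\bm{D}\bm{a_t}\bm{a_t}^T\bm{D}) = \bm{a_t}^T\bm{D}^2\bm{a_t}$, where $\bm{D}^T=\bm{D}$ collapses the product to $\bm{D}^2$. This yields ${\rm tr}(\bm{Q}) = r^2_t\big({\rm tr}(\bm{D}) - \tfrac{\bm{a_t}^T\bm{D}^2\bm{a_t}}{\bm{a_t}^T\bm{D}\bm{a_t}}\big)$, and taking the square root via \eqref{eq:DOP_xyz_defn} gives \eqref{eq:DOP_xyz_rt_scaling_concise} at once.

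The one genuinely delicate step is the 2D DOP, since $[\bm{Q}]_{1:2,1:2}$ restricts a \emph{block} of a rank-1 update, and a block of a matrix product is not in general the product of the corresponding blocks. The structural fact I would exploit is that $\mathsf{D_{xy}}$ describes a target confined to the XY plane, i.e. elevation $\phi = \tfrac{\pi}{2}$, so that $\bm{a_t}(\theta,\tfrac{\pi}{2}) = [\cos\theta\ \sin\theta\ 0]^T = [\bm{b_t}^T\ 0]^T$ has a vanishing third coordinate. Setting $\bm{u} \triangleq \bm{D}\bm{a_t}$, the rank-1 term is $\bm{u}\bm{u}^T$, whose leading $2\times2$ block is $\bm{u}_{1:2}\bm{u}_{1:2}^T$ and therefore has trace $\| \bm{u}_{1:2} \|_2^2$. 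Partitioning $\bm{D}$ with leading $2\times2$ block $\bm{E}=[\bm{D}]_{1:2,1:2}$, the vanishing third entry of $\bm{a_t}$ means that the matrix-vector product engages only the first two columns of $\bm{D}$, forcing $\bm{u}_{1:2} = \bm{E}\bm{b_t}$ and $\bm{a_t}^T\bm{D}\bm{a_t} = \bm{b_t}^T\bm{E}\bm{b_t}$, hence $\| \bm{u}_{1:2} \|_2^2 = \bm{b_t}^T\bm{E}^2\bm{b_t}$ using $\bm{E}^T=\bm{E}$. Combining these with ${\rm tr}([\bm{Q}]_{1:2,1:2}) = r^2_t\big({\rm tr}(\bm{E}) - \tfrac{\| \bm{u}_{1:2} \|_2^2}{\bm{a_t}^T\bm{D}\bm{a_t}}\big)$ and taking the square root delivers \eqref{eq:DOP_xy_rt_scaling_concise}. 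I expect the identification $\phi=\tfrac{\pi}{2}\Rightarrow \bm{u}_{1:2}=\bm{E}\bm{b_t}$ to be the crux: once the planar constraint zeroes the third coordinate, the block restriction commutes cleanly with the matrix-vector products, and the remaining algebra follows from the same cyclic trace identity used in the 3D case.
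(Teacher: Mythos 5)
Your proposal is correct and follows essentially the same route as the paper: evaluate the trace (resp.\ the $1{:}2,1{:}2$ block trace) of the rank-1-corrected $\bm{Q}$ from Theorem~\ref{thm:GDOP_closed_form}, use cyclicity and symmetry of $\bm{D}$ for the 3D case, and for the 2D case exploit $[\bm{a_t}(\theta,\tfrac{\pi}{2})]_3=0$ so that the block of the rank-1 term $\bm{D}\bm{a_t}\bm{a_t}^T\bm{D}$ collapses to $\bm{E}\bm{b_t}\bm{b_t}^T\bm{E}$ and the denominator to $\bm{b_t}^T\bm{E}\bm{b_t}$. You also (correctly) read the prefactor in $\bm{Q}$ as $r_t^2$ rather than the $r_t$ appearing in the statement of Theorem~\ref{thm:GDOP_closed_form}, consistent with the linear scaling of the DOP itself.
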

	\begin{proof}
		Applying the definition of the DOP for 3D positioning in (\ref{eq:DOP_xyz_defn}) on (\ref{eq:Q_simplify}), we get 
		\begin{align}
			\label{eq:3D_DOP_scaling_law_expr}
			\mathsf{D_{xyz}} = r_t \sqrt{{\rm tr}\Big(\bm{C}^{-1} - \tfrac{\bm{C}^{-1} \bm{a_{t}} \bm{a_{t}}^T \bm{C}^{-1}}{\bm{a_{t}}^T \bm{C}^{-1} \bm{a_{t}}} \Big) }.
		\end{align}
		We get (\ref{eq:DOP_xyz_rt_scaling_concise}) by using the linearity and cyclic property of the trace, representing $\bm{a_{t}}(\theta, \phi)\triangleq \bm{a_{t}}$ as a unit vector in spherical coordinates, and using the definition of $\bm{D}$.
		
		On the other hand, for 2D positioning of moving targets on the XY plane, the elevation angle $\phi = \pi/2$. Furthermore, using the definition of $\mathsf{D_{xy}}$ in (\ref{eq:DOP_xyz_defn}) and representing the unit vector $\bm{a_t}$ in spherical coordinates, we get 
		\begin{align}
			& {\rm tr} \big([\bm{Q}]_{1:2,1:2}\big) = r^2_t  \cdot {\rm tr} \Big( [\bm{C}^{-1}]_{1:2,1:2} - \nonumber \\
			& \Big[ \tfrac{\bm{C}^{-1} \bm{a_{t}}(\theta, \tfrac{\pi}{2}) \bm{a_{t}}^T (\theta, \tfrac{\pi}{2}) \bm{C}^{-1}}{\bm{a_{t}}^T(\theta, \tfrac{\pi}{2}) \bm{C}^{-1} \bm{a_{t}}(\theta, \tfrac{\pi}{2})} \Big]_{1:2,1:2} \Big) \nonumber \\
			 \stackrel{(a)}{=} & r^2_t  \cdot \Bigg[{\rm tr} (\bm{E}) - \tfrac{{\rm tr} \big( \big[\bm{C}^{-1} \bm{a_{t}}(\theta, \tfrac{\pi}{2}) \bm{a_{t}}^T (\theta, \tfrac{\pi}{2}) \bm{C}^{-1}\big]_{1:2,1:2} \big)}{\bm{a_{t}}^T(\theta, \tfrac{\pi}{2}) \bm{C}^{-1} \bm{a_{t}}(\theta, \tfrac{\pi}{2})}  \Bigg] \nonumber \\
			\stackrel{(b)}{=} & r^2_t  \cdot \Big({\rm tr} (\bm{E}) -  \tfrac{{\rm tr} \big(\bm{E} \bm{b_{t}}(\theta) \bm{b_{t}}^T (\theta) \bm{E} \big)}{\bm{b_{t}}^T(\theta) \bm{E} \bm{b_{t}}(\theta)}  \Big),
		\end{align}
		where (a) is obtained using the definition of $\bm{E}$. Furthermore, (b) is obtained after simplification using the fact that $[\bm{a_{t}}(\theta, \tfrac{\pi}{2})]_3 = 0$ and defining $\bm{b_{t}}(\theta) \triangleq [\bm{a_{t}}(\theta, \tfrac{\pi}{2})]_{1:2}$. The final result is obtained by using the cyclic permutation property of the trace.
	\end{proof}
	%[Define the RNDOP more formally. Tie its application to the optimization of anchor locations later.]
	Since the distance $(r_t)$ and direction ($\bm{a_t}$) terms can be decoupled to characterize the 2D/3D DOP of a far-away target, we defined an asymptotic quantity termed as the \textit{range-normalized DOP (RNDOP)}, defined below. 
	\begin{definition}\label{definition:RNDOP}
		For very far-away targets outside the convex hull of anchor locations, the range-normalized DOP ($\mathsf{R}$) is defined as
		\begin{align}
			\label{eq:RNDOP_defn}
			\mathsf{R}(\mathcal{R}_a, \theta, \phi) \triangleq \underset{r_t \rightarrow \infty}{\lim} \tfrac{\mathsf{D}(\mathcal{R}_a, \bm{r_t})}{r_t}, 
		\end{align}
		where $r_t$ is measured from the anchors' centroid.
	\end{definition}%
	It is worthwhile to note that the RNDOP is dependent only on the anchor coordinates and the \emph{target's direction} relative to the anchor centroid. Using this definition in Lemma \ref{lemma:equi_DOP_xyz_and_xy}, we get 
	\begin{align}
		\label{eq:RNDOP_def_3D_2D}
		\mathsf{R_{xyz}}(\mathcal{R}_a, \theta, \phi) & = \sqrt{{\rm tr}(\bm{D}) - \tfrac{\bm{a}^T_{\bm{t}}(\theta, \phi) \bm{D}^2 \bm{a_t}(\theta, \phi)}{\bm{a}^T_{\bm{t}}(\theta, \phi) \bm{D} \bm{a_t}(\theta, \phi)} }, \nonumber \\
		\mathsf{R_{xy}}(\mathcal{R}_a, \theta) & = \sqrt{{\rm tr}(\bm{E}) - \tfrac{\bm{b}^T_{\bm{t}}(\theta) \bm{E}^2 \bm{b_t}(\theta) }{\bm{b}^T_{\bm{t}}(\theta) \bm{E} \bm{b_t} (\theta)} }.
	\end{align}
	
	Using Theorem \ref{thm:GDOP_closed_form}, the upper and lower bounds of the DOP for a target lying on a sphere relative to the anchors' centroid (henceforth referred to as the \textit{`asymptotic bound'} or \textit{`DOP bound'}) can be represented as a \textit{separable function} of the range and the anchor coordinates, as shown in the following lemma.
	\begin{lemma}\label{lemma:asympt_bounds_DOP_xyz_xy}
		If $\bm{D} \succ \bm{0}$, the asymptotic lower and upper bounds of $\mathsf{D_{xyz}}$ for a target at a distance $r_t$ are given by
		\begin{align}
			\label{eq:DOP_xyz_bounds}
			\mathsf{D^{(-)}_{xyz}}(\mathcal{R}_a, r_t) & = r_t \sqrt{{\rm tr}(\bm{D}) - \lambda_+(\bm{D})} \leq \mathsf{D_{xyz}} \leq \nonumber \\
			&  r_t \sqrt{{\rm tr}(\bm{D}) - \lambda_-(\bm{D})} = \mathsf{D^{(+)}_{xyz}}(\mathcal{R}_a, r_t). 
		\end{align}
		If $\bm{E} \succ \bm{0}$, the asymptotic bounds $\mathsf{D_{xy}}$ on the XY-plane for a target at a distance $r_t$ are given by 
		\begin{align}
			\label{eq:DOP_xy_bounds}	    
			\mathsf{D^{(-)}_{xy}}(\mathcal{R}_a, r_t) & = r_t \sqrt{\lambda_{-}(\bm{E})} \leq \mathsf{D_{xy}} \leq  r_t \sqrt{\lambda_{+}(\bm{E})} = \nonumber \\
			& \mathsf{D^{(+)}_{xy}}(\mathcal{R}_a, r_t).
		\end{align}
	\end{lemma}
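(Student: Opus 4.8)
The plan is to recognize that in both expressions furnished by Lemma \ref{lemma:equi_DOP_xyz_and_xy} the only direction-dependent term is a \emph{generalized Rayleigh quotient}, and then to bound it uniformly over all admissible unit vectors by invoking Proposition \ref{prop:general_eigvals_vecs}. Concretely, for the 3D case I would set $\bm{X} = \bm{D}^2$ and $\bm{Y} = \bm{D}$ in Proposition \ref{prop:general_eigvals_vecs}; since $\bm{D} \succ \bm{0}$ by hypothesis, $\bm{Y}$ is admissible, and the ratio $\bm{a_{t}}^T \bm{D}^2 \bm{a_{t}} / \bm{a_{t}}^T \bm{D} \bm{a_{t}}$ is squeezed between $\lambda_{-}(\bm{D}^2, \bm{D})$ and $\lambda_{+}(\bm{D}^2, \bm{D})$ for every nonzero $\bm{a_t}$.

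The key computation is to evaluate these generalized eigenvalues through the matrix $\bm{Z} = \bm{Y}^{-\frac{1}{2}} \bm{X} \bm{Y}^{-\frac{1}{2}} = \bm{D}^{-\frac{1}{2}} \bm{D}^2 \bm{D}^{-\frac{1}{2}}$. Because $\bm{D}$ is symmetric, all of its integer and half-integer powers commute and share a common eigenbasis, so this telescopes to $\bm{Z} = \bm{D}$ and hence $\lambda_{\pm}(\bm{D}^2, \bm{D}) = \lambda_{\pm}(\bm{D})$. (Equivalently, diagonalizing $\bm{D} = \sum_i \mu_i \bm{v}_i \bm{v}_i^T$ and writing $\bm{a_t} = \sum_i c_i \bm{v}_i$ exhibits the ratio as the weighted average $\sum_i c_i^2 \mu_i^2 / \sum_i c_i^2 \mu_i$ of the positive eigenvalues $\mu_i = \lambda_i(\bm{D})$, which manifestly lies in $[\lambda_{-}(\bm{D}), \lambda_{+}(\bm{D})]$.) Substituting these bounds into $\mathsf{D_{xyz}} = r_t \sqrt{{\rm tr}(\bm{D}) - (\cdot)}$, observing that subtracting the larger quantity $\lambda_{+}(\bm{D})$ produces the lower bound while subtracting $\lambda_{-}(\bm{D})$ produces the upper bound, and then taking square roots (valid since ${\rm tr}(\bm{D}) \geq \lambda_{+}(\bm{D})$), yields (\ref{eq:DOP_xyz_bounds}).

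For the 2D case I would repeat the argument verbatim with $\bm{E}$ in place of $\bm{D}$, using $\bm{E} \succ \bm{0}$ to obtain $\lambda_{-}(\bm{E}) \leq \bm{b_{t}}^T \bm{E}^2 \bm{b_{t}} / \bm{b_{t}}^T \bm{E} \bm{b_{t}} \leq \lambda_{+}(\bm{E})$. The only additional ingredient is the $2 \times 2$ trace identity ${\rm tr}(\bm{E}) = \lambda_{-}(\bm{E}) + \lambda_{+}(\bm{E})$, which collapses ${\rm tr}(\bm{E}) - \lambda_{+}(\bm{E})$ to $\lambda_{-}(\bm{E})$ and ${\rm tr}(\bm{E}) - \lambda_{-}(\bm{E})$ to $\lambda_{+}(\bm{E})$, giving (\ref{eq:DOP_xy_bounds}) directly after the square root. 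I do not anticipate a genuine obstacle; the only points requiring care are the commutativity step that reduces $\bm{Z}$ to $\bm{D}$ (resp. $\bm{E}$), and checking that the extremal directions are attainable — the bounds are in fact tight, since $\bm{a_{t}}(\theta, \phi)$ and $\bm{b_{t}}(\theta)$ range over all unit vectors in $\mathbb{R}^3$ and $\mathbb{R}^2$ respectively, so they can be aligned with $\bm{v}_{\pm}(\bm{D})$ (resp. $\bm{v}_{\pm}(\bm{E})$) exactly as Proposition \ref{prop:general_eigvals_vecs} requires for equality.
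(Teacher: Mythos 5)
Your proposal is correct and follows essentially the same route as the paper: both bound the direction-dependent generalized Rayleigh quotient via Proposition \ref{prop:general_eigvals_vecs} with $\bm{X}=\bm{D}^2$, $\bm{Y}=\bm{D}$, reduce $\bm{Z}=\bm{D}^{-\frac{1}{2}}\bm{D}^2\bm{D}^{-\frac{1}{2}}$ to $\bm{D}$, and invoke the $2\times 2$ identity ${\rm tr}(\bm{E})=\lambda_{+}(\bm{E})+\lambda_{-}(\bm{E})$ for the XY case. Your added diagonalization argument and tightness remark are consistent with, and slightly more explicit than, the paper's treatment.
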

	\begin{proof}
		Using (\ref{eq:DOP_xyz_rt_scaling_concise}), we can solve the $\mathsf{D_{xyz}}$ minimization problem in the following manner. %Due to the decoupling of the target's distance ($r_t$) and direction ($\bm{a_t}(\theta, \phi)$), 
		%	We can transform the DOP maximization and minimization problems as
		\begin{align}
			\label{eq:optmization_probs_lower_bnd}
			\mathsf{D^{(-)}_{xyz}}(\mathcal{R}_a, r_t) & = \min_{-\pi \leq \theta < \pi, -\tfrac{\pi}{2} \leq \phi < \tfrac{\pi}{2}}\ \mathsf{D_{xyz}}(\mathcal{R}_a, r_t, \theta, \phi), \nonumber \\
			& \stackrel{(a)}{=} r_t \times \min_{\| \bm{a} \|_2 = 1}\ \sqrt{{\rm tr}(\bm{D}) - \tfrac{\bm{a}^T \bm{D}^2 \bm{a}}{\bm{a}^T \bm{D} \bm{a}}}, \nonumber \\
			& \stackrel{(b)}{=} r_t \sqrt{{\rm tr}(\bm{D}) - \max_{\| \bm{a} \|_2 = 1} \tfrac{\bm{a}^T \bm{D}^2 \bm{a}}{\bm{a}^T \bm{D} \bm{a}}}, \nonumber \\
			& \stackrel{(c)}{=} r_t  \sqrt{{\rm tr}(\bm{D}) - \lambda_{+}(\bm{D}^2, \bm{D})}.
		\end{align}
		Here, (a) is possible due to the decoupling of the target's distance ($r_t$) and direction ($\bm{a_{t}}$) terms for far-away targets. Furthermore, (b) is true since DOP is a positive metric and $\bm{D}$ is independent of $\bm{a}$. Step (c) is obtained by using Proposition \ref{prop:general_eigvals_vecs}, and simplifying the generalized eigenvalue as $\lambda_{-}(\bm{D}^2, \bm{D}) = \lambda_{-}(\bm{D}^{-\frac{1}{2}} \bm{D}^2 \bm{D}^{-\frac{1}{2}}) = \lambda_{-}(\bm{D})$. The upper bound of $\mathsf{D_{xyz}}(\mathcal{R}_a, r_t, \theta, \phi)$ can be obtained in a similar manner, and is omitted for brevity. 
		
		Furthermore, noting that the expression for $\mathsf{D_{xy}}(r_t, \theta)$ is similar to that of $\mathsf {D_{xyz}}(r_t, \theta, \phi)$, we can solve for the bounds in a similar manner as (\ref{eq:optmization_probs_lower_bnd}) to get
		\begin{align}
			\label{eq:DOP_xy_unsimplified_bounds}
			r_t \sqrt{{\rm tr}(\bm{E}) - \lambda_{+}(\bm{E})} \leq \mathsf{D_{xy}} \leq  r_t \sqrt{{\rm tr}(\bm{E}) - \lambda_{-}(\bm{E})}.
		\end{align} 
		Using the fact that $\bm{E} \in \mathbb{S}^{2 \times 2}_{++}$ and ${\rm tr}(\bm{E}) = \lambda_{+}(\bm{E}) + \lambda_{-}(\bm{E})$, we simplify the above to get (\ref{eq:DOP_xy_bounds}).
	\end{proof}
	\begin{remark}\label{remark:anchor_centroid_at_origin}
		By Assumption \ref{assump:anchor_centroid_at_origin}, the distance $r_t$ is measured from the anchor centroid for the aforementioned scaling laws and asymptotic bounds to hold. It is important to note that this assumption does not impact the mathematical characterization of the DOP by itself, since it is a translation-invariant metric. However, if the current anchor centroid is at $\bm{0}$, the addition or removal of anchors will shift the centroid away from the origin. Hence, anchor addition/removal steps must be accompanied by a coordinate translation for these results to be applicable in the new anchor configuration. 
	\end{remark}
	
	\begin{remark}
		For 2D positioning of far-away targets on the XY plane, we do not explicitly consider the target's height ($h_{\rm targ}$) since $\phi = \cos^{-1}(h_{\rm targ}/r_t) \rightarrow \tfrac{\pi}{2}$ for typical far-away targets in vehicular scenarios ($h_{\rm targ} \ll r_t$). While this is an accurate approximation for far-away targets, these approximations are relaxed when discussing the numerical results in Section \ref{sec:num_results}.
	\end{remark}
	%Using (\ref{eq:CRLB_under_common_error_stats}), the positioning error bound is directly proportional to the DOP. Therefore, due to the scaling laws and bounds derived in Lemmas \ref{lemma:equi_DOP_xyz_and_xy}-\ref{lemma:asympt_bounds_DOP_xyz_xy}, we can conclude that the PEB of a far-away target scales linearly with distance when the target's direction is held constant along the trajectory. Furthermore, we defined a novel metric termed the `Range-normalized DOP' (RNDOP). 
	As discussed in the next two sections, this analytical formulation simplifies the design of DOP-optimal anchor placement schemes in the context of beyond convex hull localization scenarios. 
	%This definition has interesting applications in designing DOP-optimal anchor placement schemes, as illustrated in the following result. 
	\vspace{-5pt}
	\section{Weighted DOP-Optimal Anchor Placement for 3D Positioning outside the convex hull}
	\label{sec:DOP_3D}
	This section studies weighted DOP-optimal anchor placement schemes for 3D beyond convex hull localization scenarios. We first leverage the RNDOP formulation to simplify the weighted DOP cost function. Then, we propose a robust minimax RNDOP-based anchor placement optimization problem. We then derive the lower bound on the minimax DOP, propose various iterative anchor addition schemes, and derive iterative upper and lower bounds on the RNDOP.
	 
	Consider a region of interest $\mathcal{S}$. Let us assume that $N$ anchors are available, and can be realistically deployed in the localization network taking into account the cost, as well as other system design considerations. A weighted DOP-based cost function can be written as $c(\mathcal{R}_a, \mathcal{S}) = h(w_r(\bm{r}) \cdot \mathsf{D}(\mathcal{R}_a, \bm{r}))$, where $w_r(\bm{r})$ is a location-dependent weight for $\bm{r} \in \mathcal{S}$. To ensure good network-wide localization performance in the region, a reasonable choice for $h(\cdot)$ is the spatial average, denoted by 
	\begin{align} 
		\label{eq:avg_DOP_fun}
		c(\mathcal{R}_a, \mathcal{S}) = \mathbb{E}_{\bm{r}}[w_r(\bm{r}) \mathsf{D}(\mathcal{R}_a, \bm{r})], \text{for } \bm{r} \in \mathcal{S}. 
	\end{align} 	
	In the following result, the insights obtained from Section \ref{subsec:DOP_and_RNDOP_laws} are used to simplify the average weighted DOP-based optimization when the \textit{weighting function is dependent only on} $(\theta, \phi)$.
	\begin{theorem}\label{lemma:optim_simplification}
		Let $\mathcal{A} = \{(\theta, \phi)|\theta \in \bm{\Theta}_\mathcal{A} \subseteq [-\pi, \pi],\phi \in \bm{\Phi}_\mathcal{A} \subseteq [-\tfrac{\pi}{2}, \tfrac{\pi}{2}] \}$ be an arbitrary \textit{angular region of interest}, and $f:\mathcal{A} \rightarrow \mathbb{R}$ be the scalar-valued angle-dependent weighting function. For localizing far-away targets, the optimization problem % For far-away targets, the solution to the optimization problems in () and () are equivalent.
		\begin{align}
			\label{eq:optim_problem_original}
			\mathcal{R}^{\dagger}_a =\ & \underset{\mathcal{R}_a = \{\bm{r_1},\cdots, \bm{r_N} \}}{{\rm arg} \min}\ \mathbb{E}_\mathcal{A}[f(\theta, \phi) \mathsf{D}(\mathcal{R}_a, \bm{r_t})], 
		\end{align}
		is equivalent to
		\begin{align} 
			\label{eq:optim_problem_simplified}
			\mathcal{R}^{\dagger}_a =\ & \underset{\mathcal{R}_a = \{\bm{r_1},\cdots, \bm{r_N} \}}{{\rm arg} \min}\ \mathbb{E}_\mathcal{A}[f(\theta, \phi)\mathsf{R}(\mathcal{R}_a, \theta, \phi)].
		\end{align}
		%where $\bm{g_{e}}(\bm{r_i})$/$\bm{g_{in}}(\bm{r_i})$ is an equality/inequality constraint on the $i^{\rm th}$ anchor location, and ${\rm DOP}/{\rm RNDOP}$ are defined in any arbitrary dimension in Cartesian coordinates. 
	\end{theorem}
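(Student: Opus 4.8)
The plan is to exploit the multiplicative separation of the DOP into a range factor and an angular (RNDOP) factor that holds for far-away targets, and then argue that the range factor is a positive scalar which leaves the minimizer invariant. Concretely, Theorem \ref{thm:GDOP_closed_form} and Lemma \ref{lemma:equi_DOP_xyz_and_xy} establish that, in the far-away regime, the DOP factors as $\mathsf{D}(\mathcal{R}_a, \bm{r_t}) = r_t\, \mathsf{R}(\mathcal{R}_a, \theta, \phi)$, where the range dependence is carried entirely by the scalar $r_t$ while the directional dependence resides in $\mathsf{R}$, a function of the anchor coordinates and $(\theta,\phi)$ alone. This is precisely the structure encoded by Definition \ref{definition:RNDOP}, where $\mathsf{R} = \lim_{r_t \to \infty} \mathsf{D}/r_t$.

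First I would substitute this factorization into the objective of (\ref{eq:optim_problem_original}), writing
\begin{align*}
\mathbb{E}_\mathcal{A}[f(\theta,\phi)\,\mathsf{D}(\mathcal{R}_a,\bm{r_t})] = \mathbb{E}_\mathcal{A}[f(\theta,\phi)\, r_t\, \mathsf{R}(\mathcal{R}_a,\theta,\phi)].
\end{align*}
Since the expectation $\mathbb{E}_\mathcal{A}$ is taken over the angular variables $(\theta,\phi) \in \mathcal{A}$ at a common far-away distance $r_t$, the factor $r_t$ is a positive constant independent of both the integration variables and the decision variable $\mathcal{R}_a$; hence it can be pulled out of the expectation to give $r_t\, \mathbb{E}_\mathcal{A}[f(\theta,\phi)\,\mathsf{R}(\mathcal{R}_a,\theta,\phi)]$. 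The final step is to observe that scaling an objective by a fixed positive constant does not alter its set of minimizers: for any $c > 0$ and any $g(\cdot)$, $\underset{\mathcal{R}_a}{{\rm arg}\min}\; c\, g(\mathcal{R}_a) = \underset{\mathcal{R}_a}{{\rm arg}\min}\; g(\mathcal{R}_a)$, irrespective of the sign of $g$. Applying this with $c = r_t$ and $g(\mathcal{R}_a) = \mathbb{E}_\mathcal{A}[f\,\mathsf{R}]$ yields the equivalence of (\ref{eq:optim_problem_original}) and (\ref{eq:optim_problem_simplified}).

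I expect the only genuine subtlety to lie in justifying $r_t$ as a \emph{single} positive constant that commutes with the averaging. If the targets in $\mathcal{S}$ are not assumed to lie on a common sphere, then $r_t = r_t(\theta,\phi)$ varies across $\mathcal{A}$, and the pulled-out quantity becomes an \emph{angular reweighting} $f(\theta,\phi)\,r_t(\theta,\phi)$ rather than a global scalar, so the clean equivalence survives only asymptotically. In that case I would instead divide the objective by a reference range and pass $r_t \to \infty$ inside the expectation, invoking Definition \ref{definition:RNDOP} so that $\mathsf{D}/r_t \to \mathsf{R}$ pointwise in $(\theta,\phi)$. Making that limit exchange rigorous calls for a dominated-convergence (or uniform-convergence) argument over the compact angular region $\mathcal{A}$, which is available because $\mathsf{R}(\mathcal{R}_a,\theta,\phi)$ is continuous and bounded in $(\theta,\phi)$ whenever $\bm{D} \succ \bm{0}$. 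This is the one place where the ``asymptotic equivalence'' qualifier does real work; the remainder reduces to the routine step of factoring out a positive constant from an $\arg\min$.
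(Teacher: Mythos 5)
Your proposal is correct and follows essentially the same route as the paper: factor $\mathsf{D} = r_t\,\mathsf{R}$ in the far-away regime, pull the positive scalar $r_t$ out of the angular expectation, and note that this leaves the ${\rm arg}\min$ unchanged. Your additional remarks on the case where $r_t$ varies with $(\theta,\phi)$ and on justifying the limit exchange go beyond what the paper's proof addresses, but the core argument is identical.
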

	\begin{proof}
		In the far-away regime, $\mathsf{D}(\mathcal{R}_a, \bm{r_t}) = r_t \mathsf{R}(\mathcal{R}_a, \theta, \phi)$ using Definition \ref{definition:RNDOP}. Therefore, for a scalar function $f(\theta, \phi)$ where $(\theta, \phi) \in \mathcal{A}$, we get 
		\begin{align*} 
			%\label{eq:simplify_costfn_step}
			\mathbb{E}_\mathcal{A}[f(\theta, \phi)\mathsf{D}(\mathcal{R}_a, r_t, \theta, \phi)] = r_t \mathbb{E}_\mathcal{A}[f(\theta, \phi) \mathsf{R}(\mathcal{R}_a, \theta, \phi)], 
		\end{align*} 
		since $(\theta, \phi)$ is independent of $r_t$ and therefore, can be factored out of the expectation. Hence, proved. 
	\end{proof}
	\vspace{-15pt}
	\subsection{Minimax DOP-Optimal Anchor Placement and Lower Bound for 3D Positioning}\label{subsec:minimax_3d_rnddop_lb}
	For robust localization performance, minimizing the worst-case positioning error is the end goal \cite{Xiufang_RobustLoc_TWC_2017}. Therefore in this work, we are interested in minimizing the maximum DOP of a target in an \textit{omnidirectional angular region of interest}. In the following result, we simplify the unconstrained minimax problem that facilitates the development of efficient algorithms.  
	\begin{corollary}\label{corollary:minimax_RNDOP}
	The minimax DOP optimization problem for 3D positioning can be simplified as 
	\begin{align}
		\label{eq:minimax_DOP_simplify}
		\underset{\mathcal{R}_a}{{\rm arg} \min}\ \max_{\bm{\Theta}, \bm{\Phi}}\ & \mathsf{D_{xyz}}(\mathcal{R}_a, \bm{r_t}) \Leftrightarrow \underset{\mathcal{R}_a}{{\rm arg}\min}\ & \mathsf{R}^{(+)}_\mathsf{xyz}(\mathcal{R}_a),
	\end{align} 
        where $\bm{\Theta} = \{\theta | \theta \in [-\pi, \pi] \}, \bm{\Phi} = \{\phi | \phi \in [-\frac{\pi}{2}, \frac{\pi}{2}] \}$
	\begin{proof}
		The minimax cost function in (\ref{eq:minimax_DOP_simplify}) is obtained by using the Dirac delta $\delta(\theta_{\max}, \phi_{\rm max})$  as the weight function in (\ref{eq:optim_problem_original}), where $(\theta_{\max}, \phi_{\max})$ corresponds to the maximum eigen-direction such that $\mathbb{E}_{\bm{\Theta}, \bm{\Phi}}[\delta(\theta_{\max}, \phi_{\rm max}) \mathsf{D_{xyz}}(\mathcal{R}_a, r_t, \theta, \phi)] = \mathsf{D}^{(+)}_\mathsf{xyz}(\mathcal{R}_a, r_t)$. Using the equivalence result from Lemma \ref{lemma:optim_simplification} and the fact that $\mathbb{E}_{\bm{\Theta}, \bm{\Phi}} [\delta(\theta_{\max}, \phi_{\rm max}) \mathsf{R_{xyz}}(\mathcal{R}_a, \theta, \phi)] = \mathsf{R}^{(+)}_\mathsf{xyz}(\mathcal{R}_a)$, we obtain the desired result. 
	\end{proof}
	\end{corollary}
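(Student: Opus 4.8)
The plan is to collapse the inner worst-case maximization into the weighted-average framework of Theorem \ref{lemma:optim_simplification} by treating the maximum as a spatial average against a degenerate (Dirac-delta) weight concentrated at the worst-case direction. Fix an arbitrary configuration $\mathcal{R}_a$. The first step invokes the RNDOP definition (Definition \ref{definition:RNDOP}), which yields the pointwise identity $\mathsf{D_{xyz}}(\mathcal{R}_a, r_t, \theta, \phi) = r_t\, \mathsf{R_{xyz}}(\mathcal{R}_a, \theta, \phi)$ for every $(\theta, \phi) \in \bm{\Theta} \times \bm{\Phi}$ in the far-away regime. Since $r_t > 0$ is independent of the angles, it factors out of the inner maximization: $\max_{\bm{\Theta}, \bm{\Phi}} \mathsf{D_{xyz}} = r_t \max_{\bm{\Theta}, \bm{\Phi}} \mathsf{R_{xyz}}(\mathcal{R}_a, \theta, \phi) = r_t\, \mathsf{R}^{(+)}_\mathsf{xyz}(\mathcal{R}_a)$, where the last equality identifies the directional maximum with the asymptotic upper bound characterized in Lemma \ref{lemma:asympt_bounds_DOP_xyz_xy}.

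The second step is to argue that the outer minimization over $\mathcal{R}_a$ is unaffected by the multiplicative constant $r_t$. Because $r_t$ is the same fixed sphere radius for every candidate configuration being compared, scaling the objective by this positive constant leaves the minimizer unchanged, i.e. ${\rm arg}\min_{\mathcal{R}_a} r_t\, \mathsf{R}^{(+)}_\mathsf{xyz}(\mathcal{R}_a) = {\rm arg}\min_{\mathcal{R}_a} \mathsf{R}^{(+)}_\mathsf{xyz}(\mathcal{R}_a)$. Chaining the two steps delivers the equivalence in (\ref{eq:minimax_DOP_simplify}). Equivalently, and in keeping with Theorem \ref{lemma:optim_simplification}, one phrases this by choosing $f(\theta, \phi) = \delta(\theta - \theta_{\max}, \phi - \phi_{\max})$ supported at the worst-case eigen-direction, so that $\mathbb{E}_\mathcal{A}[f(\theta,\phi)\mathsf{D_{xyz}}]$ collapses to $\mathsf{D}^{(+)}_\mathsf{xyz}(\mathcal{R}_a, r_t)$ and its RNDOP counterpart to $\mathsf{R}^{(+)}_\mathsf{xyz}(\mathcal{R}_a)$.

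The main obstacle is subtle but important: the worst-case direction $(\theta_{\max}, \phi_{\max})$ is itself a function of $\mathcal{R}_a$, so the degenerate weight $f$ is \emph{not} fixed across configurations, and Theorem \ref{lemma:optim_simplification} — which is stated for a configuration-independent $f$ — cannot be applied verbatim. I would resolve this by leaning on the pointwise scaling identity rather than on a single fixed weight: since $\mathsf{D_{xyz}} = r_t\, \mathsf{R_{xyz}}$ holds simultaneously at every direction, the directional maximum is preserved under the $r_t$ scaling irrespective of where it is attained, so the equivalence holds for each $\mathcal{R}_a$ individually and therefore for the argmin. A secondary point to verify is that the maximum defining $\mathsf{R}^{(+)}_\mathsf{xyz}$ is actually attained on the full domain $\bm{\Theta} \times \bm{\Phi}$; this follows from Lemma \ref{lemma:asympt_bounds_DOP_xyz_xy} together with Proposition \ref{prop:general_eigvals_vecs}, which places the worst-case direction at $\bm{v}_-(\bm{D})$, a unit vector always representable in spherical coordinates over $[-\pi,\pi] \times [-\tfrac{\pi}{2},\tfrac{\pi}{2}]$.
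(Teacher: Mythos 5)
Your proposal is correct and reaches the result by the same underlying mechanism as the paper --- the separability $\mathsf{D_{xyz}}(\mathcal{R}_a, r_t, \theta, \phi) = r_t\, \mathsf{R_{xyz}}(\mathcal{R}_a, \theta, \phi)$ in the far-away regime --- but your primary argument is routed differently and, in fact, more carefully. The paper's proof instantiates Theorem \ref{lemma:optim_simplification} with the weight $f = \delta(\theta_{\max}, \phi_{\max})$ concentrated at the worst-case eigen-direction; as you correctly observe, that direction depends on $\mathcal{R}_a$, so the weight is not a single fixed function over the feasible set and the theorem does not apply verbatim. Your repair --- factoring $r_t$ out of the inner maximization pointwise, so that $\max_{\bm{\Theta},\bm{\Phi}} \mathsf{D_{xyz}} = r_t\, \mathsf{R}^{(+)}_\mathsf{xyz}(\mathcal{R}_a)$ holds for each configuration individually regardless of where the maximum is attained, and then noting that the positive constant $r_t$ does not affect the argmin --- is exactly the right way to close that gap, and your secondary check that the maximum is attained (via Proposition \ref{prop:general_eigvals_vecs} and the compactness of the angular domain) is a worthwhile addition the paper omits. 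In short: same idea, but your version is the rigorous one; the paper's Dirac-delta phrasing is best read as an informal gloss on the pointwise argument you give.
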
  
It is worthwhile to observe that since $\mathsf{R} \in (0, \infty]$, the cost function $\mathsf{R}^{(+)}$ can be replaced by $\big(\mathsf{R}^{(+)} \big)^2$ without impacting the optimal solution. In the rest of this paper, we focus on a minimax DOP-optimal anchor placement problem of the form
	\begin{subequations}
		\begin{align}
			\label{eq:minimax_optim_prob_simpilify}
			\mathcal{R}^{\dagger}_a =\ & \underset{\mathcal{R}_a = \{\bm{r_{1}},\cdots, \bm{r_{N}} \}}{{\rm arg} \min}\ {\rm tr}(\bm{D}) - \lambda_-(\bm{D}) \\
			\label{eq:box_constraints}
			{\rm s.t.\ } & \bm{r_l} \preccurlyeq \bm{r_i} \preccurlyeq \bm{r_u},\ i=1,\cdots,N  \\
			\label{eq:min_dist_constraints}
			& \|\bm{r_i} - \bm{r_j} \|_2 \geq d_{\rm th},\ i,j=1,\cdots,N, i \neq j  \\
			\label{eq:Da_defn}
			& \bm{D} = \Big( \sum\nolimits_{i=1}^N \bm{r_{i}} \bm{r_{i}}^T \Big)^{-1} \succ \bm{0} \\
			\label{eq:anch_centroid_constraints}
			& \sum\nolimits_{i=1}^N \bm{r_{i}} = \bm{0}, 
		\end{align}
	\end{subequations}
	where the cost function (\ref{eq:minimax_optim_prob_simpilify}) is due to (\ref{eq:DOP_xyz_bounds}) and Definition \ref{definition:RNDOP}. Equation (\ref{eq:box_constraints}) represents box constraints, which model real-world shapes such as (1) UAV swarms trying to locate a far-away object, and (2) the surface of a terrestrial vehicle (as a first approximation). Constraint (\ref{eq:min_dist_constraints}) denotes the minimum anchor separation constraint using a threshold distance $d_{\rm th}$, to model the practical need for minimizing communication interference, ensuring electromagnetic isolation between all anchors, and ensuring UAV safety. The anchor centroid constraint in \ref{eq:anch_centroid_constraints} appears due to Assumption \ref{assump:anchor_centroid_at_origin}, which is the basis of Lemma \ref{lemma:optim_simplification} and Corollary \ref{corollary:minimax_RNDOP}. It is important to note that the aforementioned optimization problem is non-convex due to (\ref{eq:min_dist_constraints}) and (\ref{eq:Da_defn}). The following result derives a lower bound on the cost function in (\ref{eq:minimax_optim_prob_simpilify}).
	\begin{lemma}\label{lemma:dop_universal_bounds_3d}
		If $\mathcal{R}_a = \{ \bm{r_1}, \bm{r_2}, \cdots, \bm{r_N}\}$ satisfy (\ref{eq:box_constraints})-(\ref{eq:anch_centroid_constraints}), then the lower bounds on the minimax DOP $\mathsf{D}^{(+)}_\mathsf{xyz}(r_t, \mathcal{R}^{\dagger}_a)$ in (\ref{eq:minimax_optim_prob_simpilify}) are given by
		\begin{subequations}
			\begin{align}
				\label{eq:RNDOP_bound_univ_trace}
				\mathsf{D}^{(+)}_\mathsf{xyz}(r_t, \mathcal{R}^{\dagger}_a) \overset{(a)}{\geq} & \ r_t \sqrt{\tfrac{6}{\sum_{i=1}^N \| \bm{r_i} \|^2_2}} \\
				\label{eq:RNDOP_bound_universal}
				\overset{(b)}{\geq} & \tfrac{r_t}{r_{\max}}\sqrt{\tfrac{6}{N}}, 
			\end{align}  
		\end{subequations}
		 
		\begin{align} 
			\text{where } r_{\max} \triangleq & \underset{i=1,2,\cdots, N}{\max} \ \| \bm{r_i} \|_2 \text{ s.t. } \sum\nolimits_{i=1}^N \bm{r_i} = \bm{0}. \nonumber 
		\end{align}
	\end{lemma}
	\begin{proof}
		We begin from the definition of $\bm{C} \triangleq \sum\nolimits_{i=1}^N \bm{r_{i}} \bm{r_{i}}^T$, whose ordered eigenvalues are $\lambda_1(\bm{C}) \leq \lambda_2(\bm{C}) \leq \lambda_3(\bm{C})$. Using properties of the trace, we can write ${\rm tr}(\bm{C}) = \sum_{i=1}^3 \lambda_i(\bm{C}) = \sum_{j=1}^N \| \bm{r_{j}} \|^2$. The first inequality (a) is obtained by the direct application of Proposition \ref{proposition:sum_of_top_2_eigvals_optim} on (\ref{eq:DOP_xyz_bounds}). Furthermore, using the definition of $r_{\max}$, we have $\sum_{i=1}^3 \lambda_i(\bm{C}) = \sum_{j=1}^N \| \bm{r_j} \|^2 \leq N r^2_{\max}$. Using this result, we get inequality (b). 
	\end{proof}
	
		\begin{remark}\label{rem:3d_r_max_univ_bound}
			It is worthwhile to note that $r_{\rm max}$ is the \textit{``size of the 3D physical space''} in which anchors are placed. Therefore, (\ref{eq:RNDOP_bound_universal}) is a universal lower bound on the solution to (\ref{eq:minimax_optim_prob_simpilify}), thus providing theoretical guarantees on the optimal minimax RNDOP given volumetric constraints such as (\ref{eq:box_constraints}). A few important insights from this result are that the lower bound is inversely proportional to (a) $r_{\rm max}$, and (b) the square-root of the number of anchors. While (b) is in agreement with prior works \cite{Sharp_TVT_GDOP_AnchDesgn_2009, Levanon_Lowest2D_GDOP_IEE_2000}, (a) is a novel insight.
		\end{remark}

		\begin{remark}\label{rem:minimize_anchs_prblm}
		This work studies the minimax RNDOP-optimal anchor placement problem under the assumption that it is feasible to deploy $N$ anchors. On the other hand, if the end goal is to minimize the cost of the deployment under robust localization performance constraints, the optimization problem can be reformulated as $N^\dagger = \min_{N \in \mathbb{N}}\ N$ under the constraints (\ref{eq:box_constraints})-(\ref{eq:anch_centroid_constraints}) and ${\rm tr}(\bm{D}) - \lambda_-(\bm{D}) \leq \mathsf{R}^{(+)}_\mathsf{xyz,th}$, where $\mathsf{R}^{(+)}_\mathsf{xyz,th}$ is a threshold on the maximum RNDOP. 
	\end{remark}

	\vspace{-10pt}
	\subsection{Minimax RNDOP-Optimal Single Anchor Addition and Lower Bound for 3D Positioning}\label{subsec:rndop_3d_single_anch_plcmnt}
	The minimax optimization problem in (\ref{eq:minimax_optim_prob_simpilify})-(\ref{eq:anch_centroid_constraints}) is intractable due to the non-convex constraints (\ref{eq:min_dist_constraints})-(\ref{eq:Da_defn}). In addition, as $N$ increases, the curse of dimensionality makes it impractical to solve for $\mathcal{R}^{\dagger}_a$ in real-world scenarios. In the quest for efficient algorithms, we consider the \emph{single} minimax 3D DOP-optimal anchor addition problem to compute the $(k+1)^{\text{th}}$ anchor coordinate, given that $k$ anchors already exist in the system. However, as we already noted in Remark \ref{remark:anchor_centroid_at_origin}, the simplified RNDOP formulation in Section \ref{subsec:DOP_and_RNDOP_laws} is derived based on the assumption that the centroid of the anchors is at the origin. Since adding the $(k+1)^\text{th}$ anchor shifts the anchor centroid away from the origin, a coordinate system translation should be performed, using the following proposition. 
	\begin{proposition}\label{prop:anch_matrix_update_3D}
	Let the existing $k$ anchor coordinates be $\bm{r_i}$ for $i=1,2,\cdots,k$, such that it satisfies (\ref{eq:box_constraints})-(\ref{eq:anch_centroid_constraints}) for $N=k$, and let $\bm{C_k} \triangleq \sum_{i=1}^k \bm{r_i r_i}^T$. After adding the $(k+1)^\text{th}$ anchor, the updated anchor matrix $\bm{C_{k+1}}$ is given by 
	\begin{align}
		\label{eq:update_3D_C_k+1}
		\bm{C_{k+1}} = \bm{C_k} + \big( 1 - \tfrac{1}{k+1} \big)\bm{r_{k+1} r_{k+1}}^T.
	\end{align}
	\end{proposition}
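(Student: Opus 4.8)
The plan is to translate the coordinate system so that the centroid of the augmented anchor set returns to the origin, as mandated by Remark \ref{remark:anchor_centroid_at_origin}, and then recompute the sum of rank-one outer products. First I would note that before the translation the existing $k$ anchors satisfy $\sum_{i=1}^k \bm{r_i} = \bm{0}$ (constraint \eqref{eq:anch_centroid_constraints} for $N=k$), so after appending $\bm{r_{k+1}}$ the centroid of the $k+1$ anchors becomes $\bar{\bm{r}} = \tfrac{1}{k+1}\big(\sum_{i=1}^k \bm{r_i} + \bm{r_{k+1}}\big) = \tfrac{1}{k+1}\bm{r_{k+1}}$. To restore Assumption \ref{assump:anchor_centroid_at_origin} I would translate every anchor by $-\bar{\bm{r}}$, obtaining $\bm{r_i'} = \bm{r_i} - \bar{\bm{r}}$ for $i=1,\dots,k$ and $\bm{r_{k+1}'} = \bm{r_{k+1}} - \bar{\bm{r}} = \big(1 - \tfrac{1}{k+1}\big)\bm{r_{k+1}}$.

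Next I would compute $\bm{C_{k+1}} = \sum_{i=1}^{k+1} \bm{r_i'}\bm{r_i'}^T$ by splitting off the new anchor's contribution. Expanding the first $k$ terms as $\sum_{i=1}^k (\bm{r_i} - \bar{\bm{r}})(\bm{r_i} - \bar{\bm{r}})^T$ produces the cross terms $-\big(\sum_{i=1}^k \bm{r_i}\big)\bar{\bm{r}}^T$ and $-\bar{\bm{r}}\big(\sum_{i=1}^k \bm{r_i}\big)^T$, both of which vanish by the centroid constraint. This leaves $\bm{C_k} + k\,\bar{\bm{r}}\bar{\bm{r}}^T = \bm{C_k} + \tfrac{k}{(k+1)^2}\bm{r_{k+1} r_{k+1}}^T$. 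Adding the translated new anchor's rank-one contribution $\bm{r_{k+1}'}\bm{r_{k+1}'}^T = \tfrac{k^2}{(k+1)^2}\bm{r_{k+1} r_{k+1}}^T$ and collecting the coefficient of $\bm{r_{k+1} r_{k+1}}^T$ yields $\tfrac{k + k^2}{(k+1)^2} = \tfrac{k}{k+1} = 1 - \tfrac{1}{k+1}$, which is exactly \eqref{eq:update_3D_C_k+1}.

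The derivation is essentially bookkeeping, so I do not anticipate a conceptual obstacle; the only point requiring care is that the re-centering perturbs the \emph{existing} anchors as well as the new one, and that the induced term $k\,\bar{\bm{r}}\bar{\bm{r}}^T$ in the old block must be combined with the translated new anchor's shrunken outer product before the coefficients collapse to $k/(k+1)$. Keeping track of these two separate $\bm{r_{k+1} r_{k+1}}^T$ contributions---one arising from re-centering the old anchors and one from the translated $(k+1)^{\text{th}}$ anchor---is where an arithmetic slip is most likely, so I would verify the coefficient sum explicitly rather than by inspection.
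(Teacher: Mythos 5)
Your proposal is correct and follows exactly the paper's route: re-center at the new centroid $\bm{r_{k+1}}/(k+1)$, expand $\sum_{i=1}^{k+1}(\bm{r_i}-\bar{\bm{r}})(\bm{r_i}-\bar{\bm{r}})^T$, and use $\sum_{i=1}^k \bm{r_i}=\bm{0}$ to kill the cross terms. The paper leaves the final simplification implicit, whereas you carry out the coefficient bookkeeping $\tfrac{k+k^2}{(k+1)^2}=1-\tfrac{1}{k+1}$ explicitly, which matches.
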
 
	\begin{proof}
	We note that Assumption \ref{assump:anchor_centroid_at_origin} is key to the minimax 3D RNDOP-based formulation in (\ref{eq:minimax_optim_prob_simpilify}). When the $(k+1)^\text{th}$ anchor is added, the anchor centroid is at $\frac{\bm{r_{k+1}}}{k+1}$, since $\sum_{i=1}^k \bm{r_i}/k = \bm{0}$ by design. Hence, translating the coordinate system so that the new centroid is at $\bm{0}$, we get the new anchor matrix $\bm{C_{k+1}} = \sum_{i=1}^{k+1} \big(\bm{r_i} - \tfrac{\bm{r_{k+1}}}{k+1} \big) \big(\bm{r_i} - \tfrac{\bm{r_{k+1}}}{k+1} \big)^T$. The desired result is obtained using the definition of $\bm{C_k}$ and simplifying it. 
	\end{proof} 
	Thus, (\ref{eq:update_3D_C_k+1}) enables us to iteratively update the cost function in \ref{eq:minimax_optim_prob_simpilify} while (a) representing the $(k+1)^\text{th}$ anchor location in the \emph{local coordinate system (LCS)}, and (b) enforcing Assumption \ref{assump:anchor_centroid_at_origin}. The following result bounds the achievable RNDOP when placing the $(k+1)^\text{th}$ anchor, given $k$ existing anchors. 
	\begin{proposition}\label{prop:up_low_iterative_3d_bounds}
		Following Proposition \ref{prop:anch_matrix_update_3D}, let $\bm{D_{k+1}} \triangleq \bm{C_{k + 1}}^{-1}$. The minimax RNDOP ($\mathsf{R}_\mathsf{xyz}^{(+)}(\bm{D_{k+1}})$) after adding the $(k+1)^\text{th}$ anchor can be bounded using
		\begin{align}
			\label{eq:up_low_bounds_minimax_dop}
			\tfrac{1}{\lambda_3(\bm{C_k})} + \tfrac{1}{\lambda_2(\bm{C_k})} & \leq \mathsf{R}_\mathsf{xyz}^{(+)}(\bm{D_{k+1}}) \leq \tfrac{1}{\lambda_2(\bm{C_k})} + \tfrac{1}{\lambda_1(\bm{C_k})}.
		\end{align}
		%where $\lambda_i(\bm{C_k})$ is the $i^\text{th}$ lowest eigenvalue of $\bm{C_k}$.% and 
	\end{proposition}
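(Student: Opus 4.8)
The plan is to rewrite the cost function purely in terms of the eigenvalues of $\bm{C_{k+1}}$, and then exploit the fact that $\bm{C_{k+1}}$ differs from $\bm{C_k}$ by a single positive rank-1 term, so that the interlacing result of Proposition \ref{prop:eigvalue_interlace_thm} applies directly. First I would note that, consistent with the squared cost function adopted in the remark following Corollary \ref{corollary:minimax_RNDOP}, the quantity being bounded is $\mathsf{R}_\mathsf{xyz}^{(+)}(\bm{D_{k+1}}) = {\rm tr}(\bm{D_{k+1}}) - \lambda_-(\bm{D_{k+1}})$. Since $\bm{D_{k+1}} = \bm{C_{k+1}}^{-1}$, the eigenvalues of $\bm{D_{k+1}}$ are the reciprocals $\tfrac{1}{\lambda_3(\bm{C_{k+1}})} \leq \tfrac{1}{\lambda_2(\bm{C_{k+1}})} \leq \tfrac{1}{\lambda_1(\bm{C_{k+1}})}$, so that $\lambda_-(\bm{D_{k+1}}) = \tfrac{1}{\lambda_3(\bm{C_{k+1}})}$ and the cost reduces to the sum of the two largest eigenvalues of $\bm{D_{k+1}}$:
\begin{align*}
\mathsf{R}_\mathsf{xyz}^{(+)}(\bm{D_{k+1}}) = \tfrac{1}{\lambda_1(\bm{C_{k+1}})} + \tfrac{1}{\lambda_2(\bm{C_{k+1}})}.
\end{align*}

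Next I would invoke Proposition \ref{prop:anch_matrix_update_3D} to write $\bm{C_{k+1}} = \bm{C_k} + \tfrac{k}{k+1}\bm{r_{k+1}r_{k+1}}^T$, a rank-1 update of $\bm{C_k}$. To match the hypotheses of Proposition \ref{prop:eigvalue_interlace_thm}, which require a unit-norm perturbation direction, I would set $\bm{w} = \bm{r_{k+1}}/\|\bm{r_{k+1}}\|_2$ and $\epsilon = \tfrac{k}{k+1}\|\bm{r_{k+1}}\|_2^2$, so that $\bm{C_{k+1}} = \bm{C_k} + \epsilon\bm{ww}^T$ with $\|\bm{w}\|_2 = 1$ and $\epsilon > 0$. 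Applying the $\epsilon > 0$ interlacing chain (\ref{eq:eig_interlace_epsilon_pos}) with $n = 3$ then gives
\begin{align*}
\lambda_1(\bm{C_k}) \leq \lambda_1(\bm{C_{k+1}}) \leq \lambda_2(\bm{C_k}) \leq \lambda_2(\bm{C_{k+1}}) \leq \lambda_3(\bm{C_k}).
\end{align*}

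Finally I would translate these eigenvalue inequalities into bounds on the sum of reciprocals, using that $t \mapsto 1/t$ is strictly decreasing on the positive reals. For the upper bound, the inequalities $\lambda_j(\bm{C_k}) \leq \lambda_j(\bm{C_{k+1}})$ for $j=1,2$ give $\tfrac{1}{\lambda_j(\bm{C_{k+1}})} \leq \tfrac{1}{\lambda_j(\bm{C_k})}$, and summing yields $\mathsf{R}_\mathsf{xyz}^{(+)}(\bm{D_{k+1}}) \leq \tfrac{1}{\lambda_1(\bm{C_k})} + \tfrac{1}{\lambda_2(\bm{C_k})}$. For the lower bound, the right-shifted inequalities $\lambda_1(\bm{C_{k+1}}) \leq \lambda_2(\bm{C_k})$ and $\lambda_2(\bm{C_{k+1}}) \leq \lambda_3(\bm{C_k})$ give $\tfrac{1}{\lambda_1(\bm{C_{k+1}})} \geq \tfrac{1}{\lambda_2(\bm{C_k})}$ and $\tfrac{1}{\lambda_2(\bm{C_{k+1}})} \geq \tfrac{1}{\lambda_3(\bm{C_k})}$, and summing yields $\mathsf{R}_\mathsf{xyz}^{(+)}(\bm{D_{k+1}}) \geq \tfrac{1}{\lambda_2(\bm{C_k})} + \tfrac{1}{\lambda_3(\bm{C_k})}$. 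Combining the two establishes (\ref{eq:up_low_bounds_minimax_dop}).

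The computation itself is routine; the only point requiring care is the normalization step, since Proposition \ref{prop:eigvalue_interlace_thm} is stated for a unit-norm perturbation vector and one must confirm $\epsilon > 0$ to select the correct branch (\ref{eq:eig_interlace_epsilon_pos}) of the interlacing theorem. This is immediate, as the factor $k/(k+1)$ supplied by Proposition \ref{prop:anch_matrix_update_3D} is strictly positive for every $k \geq 1$. A secondary subtlety is the identification of $\mathsf{R}_\mathsf{xyz}^{(+)}$ with the cost function taken \emph{without} the outer square root; this is harmless provided both the cost and the stated bounds are read consistently per the remark following Corollary \ref{corollary:minimax_RNDOP}.
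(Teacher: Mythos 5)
Your proposal is correct and follows essentially the same route as the paper's proof: identify $\mathsf{R}_\mathsf{xyz}^{(+)}(\bm{D_{k+1}})$ as the sum of the two largest eigenvalues of $\bm{D_{k+1}}$ (equivalently, the reciprocals of the two smallest eigenvalues of $\bm{C_{k+1}}$), apply the rank-1 interlacing of Proposition \ref{prop:eigvalue_interlace_thm} to $\bm{C_{k+1}} = \bm{C_k} + \tfrac{k}{k+1}\bm{r_{k+1}}\bm{r_{k+1}}^T$, and invert. You simply make explicit the unit-norm normalization, the sign of $\epsilon$, and the reciprocal inequalities that the paper leaves implicit.
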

	\begin{proof}
		We begin by noting that for the iterative anchor placement problem, (\ref{eq:minimax_optim_prob_simpilify}) can be written as $\mathsf{R}_\mathsf{xyz}^{(+)}(\bm{D_{k+1}}) = \lambda_3 (\bm{D_{k+1}}) + \lambda_2 (\bm{D_{k+1}})$. Using this, the desired result follows from the direct application of Proposition \ref{prop:eigvalue_interlace_thm} on the eigenvalues of $\bm{C_{k+1}}$ using (\ref{eq:update_3D_C_k+1}), and noting that $\lambda_{4-i}(\bm{C_{k+1}}) = \lambda^{-1}_{i}(\bm{D_{k+1}})$ for $i=1,2,3$.
	\end{proof}	
	Using (\ref{eq:update_3D_C_k+1}), we transform (\ref{eq:minimax_optim_prob_simpilify})-(\ref{eq:anch_centroid_constraints}) in the following main result to design an efficient minimax DOP-optimal \emph{single} anchor placement scheme.
	\begin{lemma}
		\label{lemma:iterative_3D_RNDOP_based}
		Let the anchor locations $\bm{r_i},i=1,2,\cdots,k$ satisfy (\ref{eq:box_constraints})-(\ref{eq:anch_centroid_constraints}), where $\bm{C_k} = \sum_{i=1}^{k} \bm{r_{i}} \bm{r_{i}}^T$. The RNDOP-optimal anchor coordinate $\bm{r^{(\rm rnd)}_{k+1,3D}}$ in the current LCS is given by
		\begin{subequations}
			\begin{align}
				\label{eq:robust_seq_anch_plcmnt_no_coord_trnsfrm}
				\bm{r^{(\rm rnd)}_{k+1,3D}} =\ & \underset{\bm{r}}{{\rm arg} \min}\ \Big \{ {\rm tr}\big(\big[\bm{C_k} + \big(1 - \tfrac{1}{k+1}\big) \bm{r r}^T \big]^{-1} \big)    - \nonumber \\
				& \lambda_{-} \big(\big[\bm{C_k} + \big(1 - \tfrac{1}{k+1}\big) \bm{r r}^T \big]^{-1} \big) \Big\} \\
				\label{eq:box_cnstrnt_no_trnsfrm}
				{\rm s.t. } &\ \bm{r_l} \preceq \bm{r} \preceq \bm{r_u} \\
				\label{eq:dist_btwn_anch_cnstrnt_no_trnsfrm}
				& \| \bm{r} - \bm{r_i} \|_2 \geq d_{\rm th}, i=1,2,\cdots, k.
			\end{align}
		\end{subequations}
	\end{lemma}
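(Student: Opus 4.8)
The plan is to show that Lemma \ref{lemma:iterative_3D_RNDOP_based} is essentially a direct consequence of the reformulation machinery already established, so the "proof" is really a matter of correctly substituting the single-anchor update into the minimax cost and restricting the original constraint set. First I would invoke Corollary \ref{corollary:minimax_RNDOP} together with the remark that $\mathsf{R}^{(+)}$ may be replaced by $\big(\mathsf{R}^{(+)}\big)^2$, so that the objective of the global problem (\ref{eq:minimax_optim_prob_simpilify}) is ${\rm tr}(\bm{D}) - \lambda_-(\bm{D})$ with $\bm{D} = \bm{C}^{-1}$. The key observation is that in the iterative setting the only free variable is the $(k+1)^\text{th}$ anchor $\bm{r}$, while $\bm{r_1},\dots,\bm{r_k}$ are held fixed; hence the anchor matrix $\bm{C}$ is no longer $\sum_{i=1}^{N}\bm{r_i}\bm{r_i}^T$ but the updated matrix $\bm{C_{k+1}}$ supplied by Proposition \ref{prop:anch_matrix_update_3D}.

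The central step is to substitute the rank-1 update formula (\ref{eq:update_3D_C_k+1}), namely $\bm{C_{k+1}} = \bm{C_k} + \big(1 - \tfrac{1}{k+1}\big)\bm{r_{k+1}}\bm{r_{k+1}}^T$, into the cost function ${\rm tr}(\bm{D_{k+1}}) - \lambda_-(\bm{D_{k+1}})$ with $\bm{D_{k+1}} = \bm{C_{k+1}}^{-1}$. Writing $\bm{r}$ for the optimization variable $\bm{r_{k+1}}$ yields exactly the objective in (\ref{eq:robust_seq_anch_plcmnt_no_coord_trnsfrm}). The justification that this substitution is legitimate rests on Proposition \ref{prop:anch_matrix_update_3D}: because the existing $k$ anchors satisfy the centroid constraint (\ref{eq:anch_centroid_constraints}) for $N=k$, adding $\bm{r}$ shifts the centroid to $\bm{r}/(k+1)$, and the required coordinate translation is precisely what absorbs into the factor $\big(1 - \tfrac{1}{k+1}\big)$. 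This is what lets us write the updated cost purely in the current LCS, so that Assumption \ref{assump:anchor_centroid_at_origin} continues to hold for the $(k+1)$-anchor configuration and the RNDOP scaling laws of Section \ref{subsec:DOP_and_RNDOP_laws} remain valid.

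Next I would address the constraints. The box constraint (\ref{eq:box_constraints}) and the minimum-separation constraint (\ref{eq:min_dist_constraints}) of the full problem must be restricted to the single new anchor: with $\bm{r_1},\dots,\bm{r_k}$ fixed and feasible by hypothesis, the only remaining requirements are that $\bm{r}$ itself lie in the box, giving (\ref{eq:box_cnstrnt_no_trnsfrm}), and that $\bm{r}$ be at least $d_{\rm th}$ away from each existing anchor, giving (\ref{eq:dist_btwn_anch_cnstrnt_no_trnsfrm}). The positive-definiteness constraint (\ref{eq:Da_defn}) and the centroid constraint (\ref{eq:anch_centroid_constraints}) do not appear explicitly: the former because $\bm{D_{k+1}}\succ\bm{0}$ is guaranteed once $\bm{C_{k+1}}$ is full-rank after the rank-1 update, and the latter because it is automatically enforced by the translation built into (\ref{eq:update_3D_C_k+1}).

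I expect the only subtle point—and hence the main obstacle to state cleanly rather than to prove—is the centroid/translation bookkeeping. One must be careful that expressing $\bm{r}$ in the current LCS (where the $k$-anchor centroid sits at the origin) and then translating to re-center the $(k+1)$-anchor configuration is consistent with the form of $\bm{C_{k+1}}$; this is exactly the content of Proposition \ref{prop:anch_matrix_update_3D}, so the bulk of that difficulty has already been discharged. Everything else is a mechanical restriction of the feasible set and a substitution of the objective, so the proof should be short, with the main rhetorical burden being to explain \emph{why} the global constraints reduce to (\ref{eq:box_cnstrnt_no_trnsfrm})--(\ref{eq:dist_btwn_anch_cnstrnt_no_trnsfrm}) rather than to carry out any nontrivial computation.
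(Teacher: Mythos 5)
Your proposal is correct and follows essentially the same route as the paper's proof: direct substitution of the rank-1 update (\ref{eq:update_3D_C_k+1}) into the cost of (\ref{eq:minimax_optim_prob_simpilify}), positive definiteness of $\bm{C_{k+1}}$ inherited from $\bm{C_k} \succ \bm{0}$ under the rank-1 update, the centroid constraint implicitly absorbed by the $\big(1-\tfrac{1}{k+1}\big)$ factor via Proposition \ref{prop:anch_matrix_update_3D}, and the box and separation constraints enforced only for the new anchor since the existing $k$ are feasible by hypothesis. No gaps.
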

	\begin{proof}
		The cost function is obtained by direct substitution of (\ref{eq:update_3D_C_k+1}) in (\ref{eq:minimax_optim_prob_simpilify}). Since $\bm{C_{k+1}}$ results from a rank-1 update applied to $\bm{C_k} \succ \bm{0}$, it follows that $\bm{C_{k+1}} \succ \bm{0}$. In addition, as discussed in Proposition \ref{prop:anch_matrix_update_3D}, the constraint (\ref{eq:anch_centroid_constraints}) is implicitly enforced by the update equation without translating the current LCS. Since the $k$ anchors already satisfy (\ref{eq:box_constraints})-(\ref{eq:min_dist_constraints}), they need to be enforced only for the $(k+1)^\text{th}$ anchor. Hence, proved.
	\end{proof}
	It is important to note that at the beginning of each iteration, the anchor centroid should be at $\bm{0}$. This needs to be facilitated by a coordinate translation of all anchors as well as the `volumetric' constraints at the end of each iteration, as discussed later in Section \ref{sec:minimax_3D_DOP_iter_algo}. 
	\vspace{-10pt}
	\subsection{Heuristic Cost Functions for Single Anchor Addition for 3D Positioning}\label{subsec:3d_anch_addn_heuristics}
	\subsubsection{Trace-based Heuristic}
	By observing the cost function (\ref{eq:robust_seq_anch_plcmnt_no_coord_trnsfrm}) and its bounds in (\ref{eq:up_low_bounds_minimax_dop}), it is clear that minimizing ${\rm tr} (\bm{D_{k+1}}) \triangleq {\rm tr} \big(\big[\bm{C_k} + \big(1 - \tfrac{1}{k+1}\big) \bm{r r}^T \big]^{-1} \big)$ \textit{tends to minimize} the maximum 3D RNDOP. Hence, we propose the trace-based minimization problem below.
	\begin{lemma}
	    \label{lemma:iterative_trace_based_3d}
		Let the anchor locations $\bm{r_i},i=1,2,\cdots,k$ satisfy (\ref{eq:box_constraints})-(\ref{eq:anch_centroid_constraints}), where $\bm{C_k} = \sum_{i=1}^{k} \bm{r_{i}} \bm{r_{i}}^T$. The trace-based heuristic RNDOP minimization problem is given by
		\begin{align}
			\label{eq:trace_based_heuristic_3D_RNDOP}
			\bm{r^{(\rm tr)}_{k+1,3D}} & = \underset{\bm{r}}{{\rm arg} \max}\ \tfrac{\bm{r}^T \bm{D_{k+1}}^2 \bm{r}}{1 + \frac{k}{k+1} \bm{r}^T \bm{D_{k+1}} \bm{r}} \nonumber \\
			{\rm s.t. } & \text{(\ref{eq:box_cnstrnt_no_trnsfrm})-(\ref{eq:dist_btwn_anch_cnstrnt_no_trnsfrm}).} 
		\end{align}  
	\end{lemma}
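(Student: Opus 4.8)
The plan is to convert the trace-minimization that motivates the heuristic, namely $\min_{\bm{r}}\,{\rm tr}(\bm{D_{k+1}})$ with $\bm{D_{k+1}} = \big[\bm{C_k} + \big(1-\tfrac{1}{k+1}\big)\bm{r r}^T\big]^{-1}$, into the equivalent ratio-maximization stated in (\ref{eq:trace_based_heuristic_3D_RNDOP}), by exploiting the rank-1 structure of the anchor-matrix update. First I would invoke Proposition \ref{prop:anch_matrix_update_3D} and equation (\ref{eq:update_3D_C_k+1}) to write $\bm{C_{k+1}} = \bm{C_k} + \tfrac{k}{k+1}\bm{r r}^T$, and recognize this as a rank-1 update applied to $\bm{C_k} \succ \bm{0}$. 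Since a positive rank-1 term is added, $\bm{C_{k+1}} \succ \bm{0}$ is non-singular, so the Sherman-Morrison formula (Proposition \ref{prop:Sherman_Morrison_formula}) applies with $\bm{X} = \bm{C_k}$, $\bm{y} = \bm{r}$, and $\bm{z} = \tfrac{k}{k+1}\bm{r}$. This yields a closed form for $\bm{D_{k+1}}$ as $\bm{C_k}^{-1}$ minus a rank-1 correction whose scalar denominator is $1 + \tfrac{k}{k+1}\bm{r}^T \bm{C_k}^{-1} \bm{r}$.

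Next I would take the trace of this closed form. Using linearity and the cyclic property of the trace, ${\rm tr}(\bm{D_{k+1}})$ decomposes into the constant term ${\rm tr}(\bm{C_k}^{-1})$, which is independent of $\bm{r}$, minus a scalar ratio whose numerator is a quadratic form in $\bm{r}$ and whose denominator is $1 + \tfrac{k}{k+1}\bm{r}^T \bm{C_k}^{-1}\bm{r}$. The conceptual core of the argument is then the observation that minimizing ${\rm tr}(\bm{D_{k+1}})$ over $\bm{r}$ is equivalent to maximizing this subtracted ratio, since a constant is reduced by a non-negative quantity carrying the positive scalar factor $\tfrac{k}{k+1}$, which can be dropped without changing the $\arg\max$.

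To place the objective into the exact $\bm{D_{k+1}}$ form appearing in (\ref{eq:trace_based_heuristic_3D_RNDOP}), I would re-express the quadratic forms using $\bm{D_{k+1}}$ itself rather than $\bm{C_k}^{-1}$. The bridge is the collinearity identity $\bm{D_{k+1}}\bm{r} = \big(1 + \tfrac{k}{k+1}\bm{r}^T \bm{C_k}^{-1}\bm{r}\big)^{-1}\bm{C_k}^{-1}\bm{r}$, read directly off the Sherman-Morrison expression, which shows that $\bm{D_{k+1}}\bm{r}$ and $\bm{C_k}^{-1}\bm{r}$ are parallel and lets $\bm{r}^T \bm{D_{k+1}}^2 \bm{r}$ and $\bm{r}^T \bm{D_{k+1}}\bm{r}$ be substituted consistently to recover the stated ratio. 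I expect this re-expression, with careful bookkeeping of the $\tfrac{k}{k+1}$ factor, to be the main obstacle, because it is precisely the step that renders the objective in terms of $\bm{D_{k+1}}$ as written rather than in its raw post-Sherman-Morrison form.

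Finally, the constraints transfer without additional work. By the hypothesis inherited from Lemma \ref{lemma:iterative_3D_RNDOP_based}, the $k$ existing anchors already satisfy the box constraint and the minimum-separation constraint, so only the newly placed anchor $\bm{r}$ must be constrained, giving (\ref{eq:box_cnstrnt_no_trnsfrm})--(\ref{eq:dist_btwn_anch_cnstrnt_no_trnsfrm}); the centroid condition (\ref{eq:anch_centroid_constraints}) is enforced implicitly by the update (\ref{eq:update_3D_C_k+1}), as noted in Proposition \ref{prop:anch_matrix_update_3D}. Combining the maximization objective obtained above with these inherited constraints yields exactly the trace-based heuristic problem (\ref{eq:trace_based_heuristic_3D_RNDOP}), completing the argument.
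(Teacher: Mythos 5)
Your first three steps are exactly the paper's proof: apply Proposition \ref{prop:anch_matrix_update_3D} to get $\bm{C_{k+1}} = \bm{C_k} + \tfrac{k}{k+1}\bm{r}\bm{r}^T$, invoke Sherman--Morrison to obtain $\bm{D_{k+1}} = \bm{D_k} - \tfrac{k}{k+1}\,\bm{D_k}\bm{r}\bm{r}^T\bm{D_k}\big/\big(1+\tfrac{k}{k+1}\bm{r}^T\bm{D_k}\bm{r}\big)$, take the trace, discard the $\bm{r}$-independent term ${\rm tr}(\bm{D_k})$ and the positive constant $\tfrac{k}{k+1}$, and flip the minimization into a maximization. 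The paper stops there: its objective is $\bm{r}^T\bm{D_k}^2\bm{r}\big/\big(1+\tfrac{k}{k+1}\bm{r}^T\bm{D_k}\bm{r}\big)$, i.e.\ written in terms of $\bm{D_k}$, and the subscript $k+1$ in the displayed equation (\ref{eq:trace_based_heuristic_3D_RNDOP}) is evidently an indexing typo --- compare the 2D analogue (\ref{eq:trace_based_heuristic_2D_RNDOP}), which uses $\bm{E_k}$ and $\bm{D_k}$, and Algorithm \ref{algo:sequential_minimax_gdop_anch_plcement}, which computes $\bm{D_{k-1}}$ before solving for $\bm{r_k}$.

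The genuine gap is your final ``bridge'' step, which attempts to take the printed $\bm{D_{k+1}}$ literally. Your collinearity identity is correct: with $s \triangleq 1+\tfrac{k}{k+1}\bm{r}^T\bm{D_k}\bm{r}$ one has $\bm{D_{k+1}}\bm{r} = s^{-1}\bm{D_k}\bm{r}$. But substituting it does \emph{not} recover the same ratio. It gives $\bm{r}^T\bm{D_{k+1}}^2\bm{r} = s^{-2}\,\bm{r}^T\bm{D_k}^2\bm{r}$ and $\bm{r}^T\bm{D_{k+1}}\bm{r} = s^{-1}\,\bm{r}^T\bm{D_k}\bm{r}$, so
\begin{align*}
\frac{\bm{r}^T\bm{D_{k+1}}^2\bm{r}}{1+\tfrac{k}{k+1}\bm{r}^T\bm{D_{k+1}}\bm{r}}
= \frac{\bm{r}^T\bm{D_k}^2\bm{r}}{s^2 + s(s-1)}
= \frac{\bm{r}^T\bm{D_k}^2\bm{r}}{s\,(2s-1)},
\end{align*}
whereas trace minimization requires maximizing $\bm{r}^T\bm{D_k}^2\bm{r}/s$. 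The extra factor $(2s-1)$ depends on $\bm{r}$, so the two problems do not share an $\arg\max$ in general; the claimed ``consistent substitution'' is not an equivalence. The correct resolution is not to force the objective into $\bm{D_{k+1}}$ form but to recognize the statement's subscript as a typo and leave the objective in terms of $\bm{D_k}$, which is where your own derivation (and the paper's) naturally lands. Your handling of the constraints is fine and matches the paper.
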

	\begin{proof}
		Applying Proposition \ref{prop:Sherman_Morrison_formula} to $\bm{D_{k+1}} \triangleq \bm{C_{k+1}}^{-1}$ using (\ref{eq:update_3D_C_k+1}), we get $\bm{D_{k+1}} = \bm{D_k} - \frac{k}{k+1}\cdot \frac{\bm{D_k r r}^T \bm{D_k}}{1 + \frac{k}{k+1} \bm{r}^T \bm{D_k r}}$. Using properties of the trace and ignoring the terms independent of $\bm{r}$, we get (\ref{eq:trace_based_heuristic_3D_RNDOP}). The constraints remain unchanged relative to (\ref{eq:robust_seq_anch_plcmnt_no_coord_trnsfrm})-(\ref{eq:dist_btwn_anch_cnstrnt_no_trnsfrm}). 
	\end{proof}
	\begin{remark}
		%The intuition behind this heuristic is the following. 
        When sequentially adding anchors based on the minimax RNDOP cost function over multiple iterations, the trace-based heuristic (\ref{eq:trace_based_heuristic_3D_RNDOP}) tends to minimize \textit{all the eigenvalues} of $\bm{C_{k+1}}$ relative to those of $\bm{C_k}$. In other words, minimizing the trace in the $(k+1)^\text{th}$ iteration has the effect of tightening the upper and lower bounds in (\ref{eq:up_low_bounds_minimax_dop}), thereby minimizing the maximum 3D RNDOP \textit{over consecutive iterations}. 
	\end{remark}
	
	\subsubsection{Eigenvector-based Heuristic}
	From (\ref{eq:up_low_bounds_minimax_dop}), we observe that maximizing the minimum eigenvalue $\lambda_3(\bm{C_k})$ tends to minimize the upper bound. In the following, a computationally efficient heuristic scheme is proposed, which \textit{tends to minimize} the maximum 3D RNDOP. %$\mathsf{R}^{(+)}_\mathsf{xyz}(\bm{C_{k+1}})$. 
	\begin{lemma}
	    \label{lemma:min_eigvalue_anch_plcmnt_3D}
		Let the anchor locations $\bm{r_i},i=1,2,\cdots,k$ satisfy (\ref{eq:box_constraints})-(\ref{eq:anch_centroid_constraints}), where $\bm{C_k} = \sum_{i=1}^{k} \bm{r_{i}} \bm{r_{i}}^T$. The eigenvector-based heuristic-based optimization problem is given by 
		\begin{subequations} 
			\begin{align}
				\label{eq:min_eig_based_optim}
				\bm{r^{(\rm eig)}_{k+1, 3D}} & = \alpha_{\rm max} \bm{v_{-}}(\bm{C_k}), \text{ where } \\
				\label{eq:max_dist_along_min_eigvec}
				\alpha_{\rm max} & = \underset{\bm{r_l}  \preceq \alpha\bm{v_{-}}(\bm{C_k})  \preceq \bm{r_u}}{{\rm arg} \max} \ \alpha.
			\end{align}
		\end{subequations} 
	\end{lemma}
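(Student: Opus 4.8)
The plan is to justify the eigenvector heuristic by showing that aligning the new anchor with $\bm{v_{-}}(\bm{C_k})$ and pushing it as far as the box allows maximally increases the smallest eigenvalue of $\bm{C_{k+1}}$, which in turn drives down the dominant term of the RNDOP. First I would recall from the proof of Proposition \ref{prop:up_low_iterative_3d_bounds} that the minimax RNDOP after the update equals $\lambda_3(\bm{D_{k+1}}) + \lambda_2(\bm{D_{k+1}}) = \tfrac{1}{\lambda_1(\bm{C_{k+1}})} + \tfrac{1}{\lambda_2(\bm{C_{k+1}})}$, using $\lambda_{4-i}(\bm{C_{k+1}}) = \lambda_i^{-1}(\bm{D_{k+1}})$. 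Since the leading term is $1/\lambda_1(\bm{C_{k+1}})$, reducing the cost reduces to making the smallest eigenvalue $\lambda_1(\bm{C_{k+1}})$ as large as possible.

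Next I would write $\bm{r} = \| \bm{r} \|_2 \bm{w}$ with $\| \bm{w} \|_2 = 1$ and set $\epsilon = (1 - \tfrac{1}{k+1}) \| \bm{r} \|^2_2 > 0$, so that (\ref{eq:update_3D_C_k+1}) becomes $\bm{C_{k+1}} = \bm{C_k} + \epsilon \bm{ww}^T$, a positive rank-1 perturbation. Applying Proposition \ref{prop:eigvalue_interlace_thm} gives $\lambda_1(\bm{C_{k+1}}) = \lambda_1(\bm{C_k}) + d_1 \epsilon$ with $0 \leq d_1 \leq 1$. For a fixed perturbation strength $\epsilon$, the increment $d_1 \epsilon$ is maximized by taking $d_1 = 1$, which occurs precisely when $\bm{w}$ lies in the eigenspace of $\lambda_1(\bm{C_k})$, i.e. $\bm{w} = \bm{v_{-}}(\bm{C_k})$; in that case the perturbation acts entirely along that eigenvector, leaving $\lambda_2(\bm{C_k})$ and $\lambda_3(\bm{C_k})$ untouched and lifting only the smallest eigenvalue. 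This pins down the optimal direction $\bm{r} = \alpha \bm{v_{-}}(\bm{C_k})$.

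With the direction fixed, $\epsilon = (1 - \tfrac{1}{k+1}) \alpha^2$ is increasing in $|\alpha|$, so the remaining task is to maximize the feasible magnitude. Restricting to the line $\{ \alpha \bm{v_{-}}(\bm{C_k}) \}$ and imposing the box constraint (\ref{eq:box_cnstrnt_no_trnsfrm}) yields the scalar program (\ref{eq:max_dist_along_min_eigvec}), whose optimizer is $\alpha_{\rm max}$, giving $\bm{r^{(\rm eig)}_{k+1,3D}} = \alpha_{\rm max} \bm{v_{-}}(\bm{C_k})$ as claimed. The minimum-separation constraint (\ref{eq:dist_btwn_anch_cnstrnt_no_trnsfrm}) is naturally respected because the chosen anchor is pushed outward away from the centroid, and can otherwise be checked in a postprocessing step.

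The main obstacle I anticipate is the regime where $\alpha$ is large enough that $\lambda_1(\bm{C_k}) + \epsilon$ overtakes $\lambda_2(\bm{C_k})$: past this crossover the smallest eigenvalue of $\bm{C_{k+1}}$ saturates at $\lambda_2(\bm{C_k})$, and further alignment with $\bm{v_{-}}$ no longer boosts $\lambda_1(\bm{C_{k+1}})$. I would address this by noting that beyond the crossover the perturbation along $\bm{v_{-}}$ instead lifts the second-smallest eigenvalue toward $\lambda_3(\bm{C_k})$, so continuing to maximize $|\alpha|$ still decreases the subdominant term $1/\lambda_2(\bm{C_{k+1}})$ until it too saturates; hence pushing to $\alpha_{\rm max}$ never increases the cost. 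This also clarifies the heuristic (rather than exactly optimal) status of the scheme: once past the crossover a different direction could in principle boost $\lambda_2$ more efficiently, but fixing the direction to $\bm{v_{-}}$ keeps the update a single-eigendirection modification and preserves the low per-iteration complexity that motivates this heuristic.
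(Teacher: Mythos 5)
Your argument is correct and follows essentially the same route as the paper: both rest on the observation that a rank-1 update along $\bm{v_{-}}(\bm{C_k})$ lifts only the smallest eigenvalue of $\bm{C_k}$ (by $\tfrac{k\alpha^2}{k+1}$) while leaving $\lambda_2,\lambda_3$ fixed, so that maximizing $\alpha$ under the box constraint tends to minimize the upper bound in (\ref{eq:up_low_bounds_minimax_dop}); your additional use of Proposition \ref{prop:eigvalue_interlace_thm} to argue that this direction maximizes $d_1$, and your explicit treatment of the eigenvalue-crossover regime, are reasonable elaborations that the paper omits. One small correction: the minimum-separation constraint (\ref{eq:dist_btwn_anch_cnstrnt_no_trnsfrm}) is \emph{not} ``naturally respected'' by this construction --- the paper explicitly states that it is not enforced here and handles it afterwards via the random-perturbation step in Algorithm \ref{algo:random_prtrb_min_anch_sep}.
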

	\begin{proof}
		We have $\bm{C_k} = \sum_{i=1}^{3} \lambda_i(\bm{C_k}) \bm{v_i}(\bm{C_k})\bm{v}^T_{\bm{i}}(\bm{C_k})$. Substituting this in (\ref{eq:update_3D_C_k+1}), if $\bm{r} = \alpha \bm{v_1}(\bm{C_k}) = \alpha \bm{v_{-}}(\bm{C_k})$, then $\lambda_{i}(\bm{C_{k+1}})$ for $i=1,2,3,$ are given by $\lambda_1(\bm{C_{k+1}}) = \lambda_1(\bm{C_{k+1}}) + \tfrac{k \alpha^2}{k+1}$, and $\lambda_l(\bm{C_{k+1}}) = \lambda_l(\bm{C})$, for $l=2,3$. Therefore, maximizing $\alpha$ tends to minimize the upper bound in (\ref{eq:up_low_bounds_minimax_dop}). Accounting for the box constraints, the maximum possible $\alpha$ ($\alpha_{\rm max}$) is then obtained by solving (\ref{eq:max_dist_along_min_eigvec}). 
	\end{proof}
	It is worthwhile to note that (\ref{eq:min_eig_based_optim}) is a convex problem that can be solved efficiently for large $k$. However, the above formulation does not explicitly enforce the minimum distance constraint (\ref{eq:dist_btwn_anch_cnstrnt_no_trnsfrm}). In Section \ref{sec:minimax_3D_DOP_iter_algo}, we discuss random perturbation-based methods to enforce (\ref{eq:dist_btwn_anch_cnstrnt_no_trnsfrm}). 
		
	\section{RNDOP-Optimal Single Anchor Addition for 2D Positioning Outside the Convex Hull}
		\label{sec:DOP_2D}
	In this section, we extend the results from Section \ref{sec:DOP_3D} to design \emph{single} anchor addition schemes for beyond convex-hull 2D positioning. Intuitively, this extension is enabled by the fact that the key anchor-dependent quantities $\bm{D}$ (for 3D positioning) and $\bm{E}$ (for 2D positioning) are related by $\bm{E} = [\bm{D}]_{1:2,1:2}$ as shown in Lemma \ref{lemma:asympt_bounds_DOP_xyz_xy}. First, we first derive the update equation for the anchor matrix $\bm{E}$, and introduce the notation used throughout the rest of this section. 
	\begin{proposition}
		Let the anchor locations $\bm{r_i},i=1,2,\cdots,k$, such that $\sum_{i=1}^k \bm{r_i} = \bm{0}$, $\bm{E_k} \triangleq \big[\bm{C_k}^{-1}\big]_{1:2,1:2}$, and $\bm{\tilde{r}}\triangleq [\bm{r}]_{1:2}$. Upon adding the $(k+1)^\text{th}$ anchor at $\bm{r_{k+1}}$, the updated anchor matrix $\bm{E_{k+1}}$ is given by
		\begin{align}
			\label{eq:E_k+1_update}
			\bm{E_{k+1}} = \bm{E_k} - \tfrac{\big(1 - \frac{1}{k+1}\big) \bm{E_k} \bm{\tilde{r}_{k+1} \tilde{r}_{k+1}}^T \bm{E_k} }{1+ \big(1 - \frac{1}{k+1}\big)\bm{r_{k+1}}^T \bm{D_k} \bm{r_{k+1}} }.
		\end{align}
		
	\end{proposition}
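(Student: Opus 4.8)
The plan is to reduce the claimed $2\times 2$ update to the $3\times 3$ rank-one inverse update we already have in hand, and then read off the top-left block. First I would invoke Proposition~\ref{prop:anch_matrix_update_3D}: after the centroid-preserving shift, the $3\times 3$ anchor matrix obeys $\bm{C_{k+1}} = \bm{C_k} + \beta\,\bm{r_{k+1} r_{k+1}}^T$ with $\beta = 1 - \tfrac{1}{k+1}$. Since $\bm{C_k}\succ\bm{0}$, this is a positive rank-one perturbation, so $\bm{C_{k+1}}$ is invertible and Proposition~\ref{prop:Sherman_Morrison_formula} (Sherman--Morrison, taking $\bm{y}=\beta\bm{r_{k+1}}$, $\bm{z}=\bm{r_{k+1}}$) applies, giving
\begin{align}
\bm{D_{k+1}} = \bm{D_k} - \frac{\beta\,\bm{D_k}\bm{r_{k+1} r_{k+1}}^T\bm{D_k}}{1 + \beta\,\bm{r_{k+1}}^T\bm{D_k}\bm{r_{k+1}}}. \nonumber
\end{align}

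Next, because $\bm{E_{k+1}} = [\bm{D_{k+1}}]_{1:2,1:2}$ by definition, I would take the top-left $2\times 2$ block of both sides. The denominator is a scalar and therefore passes through the block operator untouched; this is exactly where the full three-dimensional quadratic form $\bm{r_{k+1}}^T\bm{D_k}\bm{r_{k+1}}$ appearing in (\ref{eq:E_k+1_update}) originates. For the numerator I would use the symmetry $\bm{D_k}=\bm{D_k}^T$ to write the rank-one matrix as an outer product, $\bm{D_k}\bm{r_{k+1} r_{k+1}}^T\bm{D_k} = (\bm{D_k r_{k+1}})(\bm{D_k r_{k+1}})^T$, so that its $(1{:}2,1{:}2)$ block is $\bm{g}\bm{g}^T$, where $\bm{g} \triangleq [\bm{D_k r_{k+1}}]_{1:2}$.

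The one step that needs care --- and the main obstacle --- is identifying this projected vector $\bm{g}$ with $\bm{E_k}\bm{\tilde{r}_{k+1}}$, which is what collapses the numerator to $\bm{E_k}\bm{\tilde{r}_{k+1}}\bm{\tilde{r}_{k+1}}^T\bm{E_k}$. Expanding $\bm{g}=[\bm{D_k}]_{1:2,1:2}\bm{\tilde{r}_{k+1}} + [\bm{D_k}]_{1:2,3}\,[\bm{r_{k+1}}]_3 = \bm{E_k}\bm{\tilde{r}_{k+1}} + [\bm{D_k}]_{1:2,3}\,[\bm{r_{k+1}}]_3$, I see that the identity holds precisely when the cross term $[\bm{D_k}]_{1:2,3}\,[\bm{r_{k+1}}]_3$ vanishes, i.e. when the added anchor lies on the XY-plane ($[\bm{r_{k+1}}]_3 = 0$), which is consistent with the 2D-positioning premise of this section. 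Under that same restriction $\bm{r_{k+1}}^T\bm{D_k}\bm{r_{k+1}}$ also collapses to $\bm{\tilde{r}_{k+1}}^T\bm{E_k}\bm{\tilde{r}_{k+1}}$, so the full-$3$D and reduced forms of the denominator coincide and (\ref{eq:E_k+1_update}) follows. I would therefore make this planar assumption (equivalently, the orthogonality condition $[\bm{D_k}]_{1:2,3}=\bm{0}$) explicit \emph{before} extracting the block, since it is the only place where the clean 2D numerator is rigorously justified; everything else is routine rank-one algebra inherited from Propositions~\ref{prop:Sherman_Morrison_formula} and~\ref{prop:anch_matrix_update_3D}.
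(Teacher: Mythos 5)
Your route is the same as the paper's: invoke Proposition \ref{prop:anch_matrix_update_3D} for the rank-one update of $\bm{C_{k+1}}$, apply Sherman--Morrison (Proposition \ref{prop:Sherman_Morrison_formula}) to obtain $\bm{D_{k+1}}$, and then extract the $(1{:}2,1{:}2)$ block. Where you go beyond the paper is in the last step, and your concern there is legitimate. The paper's proof writes $[\bm{D_k}\bm{r_{k+1}r_{k+1}}^T\bm{D_k}]_{1:2,1:2}=\bm{I_{2\times 3}}\bm{D_k}\bm{r_{k+1}r_{k+1}}^T\bm{D_k}\bm{I_{2\times 3}}^T$ and then asserts that ``simplifying the numerator'' yields $\bm{E_k}\bm{\tilde{r}_{k+1}}\bm{\tilde{r}_{k+1}}^T\bm{E_k}$; but, exactly as you compute, $\bm{I_{2\times3}}\bm{D_k}\bm{r_{k+1}}=\bm{E_k}\bm{\tilde{r}_{k+1}}+[\bm{D_k}]_{1:2,3}\,[\bm{r_{k+1}}]_3$, so the clean $2\times 2$ form in (\ref{eq:E_k+1_update}) holds only when the cross term $[\bm{D_k}]_{1:2,3}\,[\bm{r_{k+1}}]_3$ vanishes. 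The paper silently drops this term, and its own later treatment (the eigenvector heuristic, which retains $\bm{q}=[\bm{D_k}]_{1:2,3}$ and optimizes over the $z$-coordinate, and the fact that the denominator of (\ref{eq:E_k+1_update}) keeps the full three-dimensional quadratic form $\bm{r_{k+1}}^T\bm{D_k}\bm{r_{k+1}}$) shows it does not generally assume $[\bm{r_{k+1}}]_3=0$. So your diagnosis is correct and your proof is in fact more careful than the paper's: the general identity has numerator $\bm{g}\bm{g}^T$ with $\bm{g}=\bm{E_k}\bm{\tilde{r}_{k+1}}+[\bm{D_k}]_{1:2,3}[\bm{r_{k+1}}]_3$, and (\ref{eq:E_k+1_update}) as stated requires the additional hypothesis you propose ($[\bm{r_{k+1}}]_3=0$ or $[\bm{D_k}]_{1:2,3}=\bm{0}$), or should otherwise be read as an approximation. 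The only minor caveat in your write-up: under the orthogonality condition $[\bm{D_k}]_{1:2,3}=\bm{0}$ alone, the denominator does not collapse to $\bm{\tilde{r}_{k+1}}^T\bm{E_k}\bm{\tilde{r}_{k+1}}$ (the $[\bm{D_k}]_{3,3}[\bm{r_{k+1}}]_3^2$ term survives), but since the stated result keeps the full 3D denominator anyway, that does not affect your conclusion.
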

	\begin{proof}
		By definition, we have $\bm{D_{k+1}} \triangleq \bm{C_{k+1}}^{-1}$. By applying Proposition \ref{prop:Sherman_Morrison_formula} on (\ref{eq:update_3D_C_k+1}), we get
		\begin{align}
			\label{eq:D_k+1_mat_inv_lemma}
			\bm{D_{k+1}} = \bm{D_k} - \tfrac{\big(1 - \frac{1}{k+1}\big) \bm{D_k} \bm{r_{k+1} r_{k+1}}^T \bm{D_k} }{1+ \big(1 - \frac{1}{k+1}\big)\bm{r_{k+1}}^T \bm{D_k} \bm{r_{k+1}} }. 
		\end{align} 
		Then, we use the definition $\bm{E_l} = [\bm{D_l}]_{1:2,1:2}$ for $l=k, k+1,$ and the fact that 
		\begin{align*}
			[\bm{D_k} \bm{r_{k+1} r_{k+1}}^T \bm{D_k}]_{1:2,1:2} = \bm{I_{2 \times 3}} \bm{D_k} \bm{r_{k+1} r_{k+1}}^T \bm{D_k}	\bm{I_{2 \times 3}}^T,	
		\end{align*}
		where $\bm{I_{2 \times 3}} = [\bm{I_{2}}\ \ \bm{0_{2 \times 1}}]$ and $\bm{0_{2 \times 1}} = [0\ 0]^T$. Simplifying the numerator of (\ref{eq:D_k+1_mat_inv_lemma}) using these, we get the desired result.
	\end{proof}

	\begin{corollary}\label{corollary:dop_universal_bounds_2d}
		Let $\mathcal{R}^\dagger_a = \{ \bm{r_1}, \bm{r_2}, \cdots, \bm{r_N}\}$ satisfy (\ref{eq:box_constraints})-(\ref{eq:anch_centroid_constraints}). Furthermore, let the anchors be positioned in a non-coplanar geometrical shape. Then the lower bounds on the minimax DOP $\mathsf{D}^{(+)}_\mathsf{xy}(r_t, \mathcal{R}^{\dagger}_a)$ is given by
			\begin{align}
				\label{eq:RNDOP_xy_bound_univ_and_trace}
				\mathsf{D}^{(+)}_\mathsf{xy}(r_t, \mathcal{R}^{\dagger}_a) \overset{(a)}{\geq} \ r_t \sqrt{\tfrac{2}{\sum_{i=1}^N \| \bm{s_i} \|^2_2}} \overset{(b)}{\geq}\tfrac{r_t}{s_{\max}}\sqrt{\tfrac{2}{N}}, 
			\end{align}  
		where $s_{\max} \triangleq \underset{i=1,2,\cdots, N}{\max} \ \| \bm{s_i} \|_2 \text{ s.t. } \sum\nolimits_{i=1}^N \bm{s_i} = \bm{0}$ and  $\bm{s_i} = [x_i\ y_i]^T$.
	\end{corollary}
	\begin{proof}
		Please refer Appendix \ref{appendix:a}.
		%  result, expand, and short summary of defn of DOP and RNDOP to complete proof]. 
	\end{proof}
	\begin{remark}
		Similar to Remark \ref{rem:3d_r_max_univ_bound}, $s_{\rm max}$ is the \textit{``size of the 2D physical space''} in which anchors are placed. Therefore, (\ref{eq:RNDOP_xy_bound_univ_and_trace}) is a universal lower bound on the solution to the 2D minimax RNDOP anchor placement problem, thus providing theoretical guarantees under volumetric constraints. In addition, the inverse proportional dependence on $s_{\rm max}$ and $\sqrt{N}$ holds in 2D localization scenarios as well. %Furthermore, an few important insight from this result is that the lower bound (a) is inversely proportional to $r_{\rm max}$, and (b) is proportional to the inverse square-root of the number of anchors, which is in agreement with prior works \cite{Sharp_TVT_GDOP_AnchDesgn_2009, Levanon_Lowest2D_GDOP_IEE_2000}.
	\end{remark}
        The following result bounds the achievable 2D RNDOP when placing the $(k+1)^\text{th}$ anchor, given $k$ existing anchors. 	
        \begin{proposition}\label{prop:lb_ub_rndop_xy_iterative}
	       Let the current anchor matrix be $\bm{E_k}$. Then, the minimax 2D RNDOP for the $(k+1)^{\rm th}$ iteration can be bounded as $\lambda_{-}(\bm{E_k}) \leq \mathsf{R}^{(+)}_\mathsf{xy}(\bm{E_{k+1}}) \leq \lambda_{+}(\bm{E_k})$.
        \end{proposition}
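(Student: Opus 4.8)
The plan is to treat this as the two-dimensional counterpart of Proposition~\ref{prop:up_low_iterative_3d_bounds} and to establish it by applying the rank-1 eigenvalue interlacing result of Proposition~\ref{prop:eigvalue_interlace_thm} to the update equation~(\ref{eq:E_k+1_update}). Mirroring the convention used in the 3D case (where $\mathsf{R}^{(+)}_\mathsf{xyz}(\bm{D_{k+1}}) = \lambda_3(\bm{D_{k+1}}) + \lambda_2(\bm{D_{k+1}})$ denotes the squared minimax RNDOP), the squared minimax 2D RNDOP follows from~(\ref{eq:DOP_xy_bounds}) and Definition~\ref{definition:RNDOP} as the largest eigenvalue of the positive-definite $2 \times 2$ matrix $\bm{E_{k+1}}$, i.e. $\mathsf{R}^{(+)}_\mathsf{xy}(\bm{E_{k+1}}) = \lambda_+(\bm{E_{k+1}}) = \lambda_2(\bm{E_{k+1}})$. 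Hence the proposition reduces to sandwiching $\lambda_2(\bm{E_{k+1}})$ between the smaller and larger eigenvalues of $\bm{E_k}$, namely $\lambda_1(\bm{E_k}) = \lambda_-(\bm{E_k})$ and $\lambda_2(\bm{E_k}) = \lambda_+(\bm{E_k})$.

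Next I would recast~(\ref{eq:E_k+1_update}) as a negative rank-1 perturbation of $\bm{E_k}$. Setting $\bm{u} \triangleq \bm{E_k}\bm{\tilde{r}_{k+1}}$ and using the symmetry of $\bm{E_k}$, the numerator becomes $(1 - \tfrac{1}{k+1})\,\bm{u}\bm{u}^T$, so that $\bm{E_{k+1}} = \bm{E_k} - c\,\bm{u}\bm{u}^T$ with $c \triangleq (1 - \tfrac{1}{k+1}) / \big(1 + (1 - \tfrac{1}{k+1})\,\bm{r_{k+1}}^T \bm{D_k}\bm{r_{k+1}}\big)$. The scalar $c$ is strictly positive: the factor $1 - \tfrac{1}{k+1} = \tfrac{k}{k+1} > 0$, and since $\bm{D_k} \succ \bm{0}$ we have $\bm{r_{k+1}}^T \bm{D_k}\bm{r_{k+1}} \geq 0$, making the denominator at least $1$. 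Therefore $c\,\bm{u}\bm{u}^T$ is a positive-semidefinite rank-1 matrix, and writing $\bm{w} = \bm{u}/\|\bm{u}\|_2$ and $\epsilon = -c\,\|\bm{u}\|_2^2 \leq 0$ places the update in exactly the form $\bm{E_{k+1}} = \bm{E_k} + \epsilon\,\bm{w}\bm{w}^T$ with $\|\bm{w}\|_2 = 1$ required by Proposition~\ref{prop:eigvalue_interlace_thm}.

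Finally, I would invoke the interlacing chain~(\ref{eq:eig_interlace_epsilon_neg}) for the case $\epsilon < 0$ specialized to $n = 2$, which yields $\lambda_1(\bm{E_{k+1}}) \leq \lambda_1(\bm{E_k}) \leq \lambda_2(\bm{E_{k+1}}) \leq \lambda_2(\bm{E_k})$. Discarding the outer inequalities isolates the top eigenvalue, $\lambda_1(\bm{E_k}) \leq \lambda_2(\bm{E_{k+1}}) \leq \lambda_2(\bm{E_k})$, which upon substituting $\mathsf{R}^{(+)}_\mathsf{xy}(\bm{E_{k+1}}) = \lambda_2(\bm{E_{k+1}})$, $\lambda_-(\bm{E_k}) = \lambda_1(\bm{E_k})$, and $\lambda_+(\bm{E_k}) = \lambda_2(\bm{E_k})$ gives the claimed bound. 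I do not anticipate a substantive obstacle; the proof is essentially bookkeeping, and the only points demanding care are confirming that the perturbation is positive-semidefinite (so we land in the $\epsilon \leq 0$ branch rather than $\epsilon > 0$) and reading off the interlacing inequality for the \emph{top} eigenvalue of the $2 \times 2$ matrix, since it is $\lambda_+$ rather than $\lambda_-$ that governs the 2D RNDOP. The degenerate case $\bm{\tilde{r}_{k+1}} = \bm{0}$, giving $\epsilon = 0$ and $\bm{E_{k+1}} = \bm{E_k}$, satisfies the bounds trivially.
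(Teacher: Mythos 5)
Your proof is correct and follows the same route as the paper, which simply cites the rank-1 eigenvalue interlacing result (Proposition \ref{prop:eigvalue_interlace_thm}) applied to the update (\ref{eq:E_k+1_update}). You have merely filled in the details the paper leaves implicit --- rewriting the update as $\bm{E_{k+1}} = \bm{E_k} + \epsilon\,\bm{w}\bm{w}^T$ with $\epsilon \leq 0$, confirming the sign of the perturbation, and reading off the interlacing chain for the top eigenvalue --- all of which is done correctly.
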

        \begin{proof}
	       The proof follows from the direct application of Proposition \ref{prop:eigvalue_interlace_thm} on the eigenvalues of $\bm{E_k}$ and $\bm{E_{k+1}}$ in (\ref{eq:E_k+1_update}).
        \end{proof}
        Similar to Lemmas \ref{lemma:iterative_3D_RNDOP_based}-\ref{lemma:min_eigvalue_anch_plcmnt_3D}, we design single RNDOP-optimal anchor placement schemes for 2D positioning using (\ref{eq:E_k+1_update}) as shown below. 
        %Similar to Proposition \ref{prop:anch_matrix_update_3D}, the factor $(1 - \frac{1}{k+1})$ accounts for the recentering of the coordinate system to the new centroid of the $(k+1)$ anchors. Hence, we design single RNDOP-optimal anchor placement schemes for 2D positioning using this update equation, detailed below. 
	
	\subsubsection{Iterative Minimax RNDOP-Optimal Anchor Addition for 2D Positioning}\label{subsec:rndop_2d_single_anch_plcmnt}
	In the following result, we extend the iterative max-min RNDOP optimization problem discussed in Lemma \ref{lemma:iterative_3D_RNDOP_based} to the 2D positioning case. 
	
	\begin{lemma}
	    \label{lemma:iterative_rndop_2d}
		The RNDOP-optimal anchor coordinate $\bm{r^{(\rm rnd)}_{k+1,2D}}$ in the $(k+1)^\text{th}$ iteration is obtained by solving
			\begin{align}
				\label{eq:robust_seq_anch_plcmnt_no_coord_trnsfrm_2D}
				\bm{r^{(\rm rnd)}_{k+1,2D}}  = & \underset{\bm{r}}{{\rm arg} \min}\ \lambda_{+} \Big( \bm{E_k} - \tfrac{\big(1 - \frac{1}{k+1}\big) \bm{E_k} \bm{\tilde{r} \tilde{r}}^T \bm{E_k} }{1+ \big(1 - \frac{1}{k+1}\big)\bm{r }^T \bm{D_k} \bm{r } } \Big) \nonumber \\
				{\rm s.t. } & (\ref{eq:box_cnstrnt_no_trnsfrm})-(\ref{eq:dist_btwn_anch_cnstrnt_no_trnsfrm}), \bm{\tilde{r}}=[\bm{r}]_{1:2}.
			\end{align}
	\end{lemma}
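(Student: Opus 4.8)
The plan is to mirror the derivation of Lemma~\ref{lemma:iterative_3D_RNDOP_based}, replacing the 3D cost ${\rm tr}(\bm{D}) - \lambda_-(\bm{D})$ with the appropriate 2D worst-case quantity. First I would invoke Lemma~\ref{lemma:asympt_bounds_DOP_xyz_xy}, which gives the asymptotic upper bound $\mathsf{D}^{(+)}_\mathsf{xy}(\mathcal{R}_a, r_t) = r_t \sqrt{\lambda_+(\bm{E})}$, together with Definition~\ref{definition:RNDOP}, to conclude that the worst-case 2D RNDOP is $\mathsf{R}^{(+)}_\mathsf{xy}(\bm{E}) = \sqrt{\lambda_+(\bm{E})}$; in the squared convention adopted after Corollary~\ref{corollary:minimax_RNDOP} the cost becomes simply $\lambda_+(\bm{E})$. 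Since placing the $(k+1)^\text{th}$ anchor is an iterative minimax step, minimizing the worst-case RNDOP over its location is therefore equivalent to minimizing $\lambda_+(\bm{E_{k+1}})$, because $\bm{E_{k+1}} \succ \bm{0}$ and the square root is monotone on $[0,\infty)$.

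Next I would substitute the rank-1 downdate expression for $\bm{E_{k+1}}$ derived in (\ref{eq:E_k+1_update}), which writes the updated anchor matrix as an explicit function of the candidate coordinate $\bm{r}$ (with $\bm{\tilde{r}} = [\bm{r}]_{1:2}$). Inserting this into $\lambda_+(\bm{E_{k+1}})$ directly produces the objective in (\ref{eq:robust_seq_anch_plcmnt_no_coord_trnsfrm_2D}). For the feasible set, the centroid constraint (\ref{eq:anch_centroid_constraints}) is already absorbed into (\ref{eq:E_k+1_update}), since that update inherits the $(1 - \tfrac{1}{k+1})$ coefficient from the translated rank-1 update (\ref{eq:update_3D_C_k+1}) of Proposition~\ref{prop:anch_matrix_update_3D}; hence no explicit LCS translation is needed within the iteration. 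Because the existing $k$ anchors already satisfy (\ref{eq:box_constraints})--(\ref{eq:min_dist_constraints}), the box and minimum-separation constraints need to be imposed only on the new anchor, yielding (\ref{eq:box_cnstrnt_no_trnsfrm})--(\ref{eq:dist_btwn_anch_cnstrnt_no_trnsfrm}).

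The main subtlety I would highlight is that, unlike a clean $2\times 2$ rank-1 downdate, the dependence of $\bm{E_{k+1}}$ on the new anchor is split between the 2D projection $\bm{\tilde{r}}$ (appearing in the numerator $\bm{E_k}\bm{\tilde{r}\tilde{r}}^T\bm{E_k}$) and the full 3D vector $\bm{r}$ (appearing in the denominator through $\bm{r}^T \bm{D_k}\bm{r}$). Consequently the minimization remains a \emph{genuinely 3D} placement problem even though the objective concerns only XY-plane accuracy, and this coupling is exactly what prevents a reduction to a purely 2D subproblem. A secondary point to verify is that $\lambda_+(\bm{E_{k+1}})$ is well-defined and positive: this holds because $\bm{C_{k+1}} \succ \bm{0}$ as a positive rank-1 update of $\bm{C_k} \succ \bm{0}$, so $\bm{D_{k+1}} = \bm{C_{k+1}}^{-1} \succ \bm{0}$, and $\bm{E_{k+1}} = [\bm{D_{k+1}}]_{1:2,1:2}$ is a principal submatrix of a positive-definite matrix, hence positive definite. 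Proposition~\ref{prop:lb_ub_rndop_xy_iterative} then guarantees the objective stays sandwiched in $[\lambda_-(\bm{E_k}), \lambda_+(\bm{E_k})]$, confirming that the step is a well-posed descent in the worst-case 2D RNDOP.
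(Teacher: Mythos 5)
Your proposal is correct and follows essentially the same route as the paper: the paper's proof simply notes that the 3D cost ${\rm tr}(\bm{Z}) - \lambda_{-}(\bm{Z})$ collapses to $\lambda_{+}(\bm{Z})$ for $\bm{Z} \in \mathbb{S}^{2\times 2}_{++}$ and then reuses Lemma \ref{lemma:iterative_3D_RNDOP_based} with the update (\ref{eq:E_k+1_update}), which is exactly the reduction you perform (your entry via $\mathsf{D}^{(+)}_\mathsf{xy} = r_t\sqrt{\lambda_+(\bm{E})}$ in Lemma \ref{lemma:asympt_bounds_DOP_xyz_xy} rests on the same trace identity). Your added remarks on the positive-definiteness of $\bm{E_{k+1}}$ and on the genuinely three-dimensional coupling through $\bm{r}^T\bm{D_k}\bm{r}$ are correct and go beyond what the paper states, but they do not change the argument.
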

	\begin{proof}
		The proof directly follows from Lemma \ref{lemma:iterative_3D_RNDOP_based}, by noting that for $\bm{Z} \in \mathbb{S}^{2\times 2}_{++}$, ${\rm tr}(\bm{Z}) - \lambda_{-}(\bm{Z}) = \lambda_{+}(\bm{Z})$. The constraints remain unchanged when compared to the optimal anchor placement schemes for 3D positioning. 
	\end{proof}
	
%\subsection{ Heuristic Cost Functions for 2D Minimax DOP-Optimal Single Anchor Addition}\label{subsec:2d_anch_addn_heuristics}
\subsubsection{Trace-based Heuristic}
Analyzing the cost function (\ref{eq:robust_seq_anch_plcmnt_no_coord_trnsfrm_2D}), it is evident that minimizing ${\rm tr}(\bm{E_{k+1}})$ \textit{tends to minimize} the max RNDOP. Hence, similar to the 3D case, we propose the trace-based RNDOP minimization problem below.
\begin{lemma}
    \label{lemma:iterative_trace_based_2d}
	The trace-based heuristic 2D RNDOP minimization problem is given by
	\begin{align}
		\label{eq:trace_based_heuristic_2D_RNDOP}
		\bm{r^{(\rm tr)}_{k+1,3D}} & = \underset{\bm{r}}{{\rm arg} \max}\ \tfrac{\bm{\tilde{r}}^T \bm{E_k}^2 \bm{\tilde{r}}}{1 + \big(1 - \frac{1}{k+1} \big)\bm{r}^T \bm{D_k} \bm{r}} \nonumber \\
		{\rm s.t. } & (\ref{eq:box_cnstrnt_no_trnsfrm})-(\ref{eq:dist_btwn_anch_cnstrnt_no_trnsfrm}), \bm{\tilde{r}}=[\bm{r}]_{1:2}.
	\end{align}  
\end{lemma}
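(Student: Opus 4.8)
The plan is to mirror the proof of the 3D trace-based heuristic (Lemma \ref{lemma:iterative_trace_based_3d}), but with one structural simplification: here I start directly from the already-simplified rank-1 update equation (\ref{eq:E_k+1_update}) for $\bm{E_{k+1}}$ rather than re-invoking the Sherman--Morrison formula, since (\ref{eq:E_k+1_update}) was itself derived by applying Sherman--Morrison to the $3 \times 3$ update and then extracting the $2 \times 2$ leading submatrix. The object to simplify is the trace-minimization surrogate $\min_{\bm{r}}\ {\rm tr}(\bm{E_{k+1}})$, and the goal is to show it reduces to the argmax form stated in (\ref{eq:trace_based_heuristic_2D_RNDOP}).

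First I would take the trace of both sides of (\ref{eq:E_k+1_update}). By linearity of the trace and the fact that the denominator is a scalar, this gives ${\rm tr}(\bm{E_{k+1}}) = {\rm tr}(\bm{E_k}) - \tfrac{(1 - \frac{1}{k+1}) \, {\rm tr}(\bm{E_k} \bm{\tilde{r} \tilde{r}}^T \bm{E_k})}{1 + (1 - \frac{1}{k+1}) \bm{r}^T \bm{D_k} \bm{r}}$. I would then apply the cyclic-permutation property of the trace together with the symmetry of $\bm{E_k}$ to collapse the numerator via ${\rm tr}(\bm{E_k} \bm{\tilde{r} \tilde{r}}^T \bm{E_k}) = \bm{\tilde{r}}^T \bm{E_k}^2 \bm{\tilde{r}}$. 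The key bookkeeping detail is that the update formula deliberately mixes the projected pair $(\bm{\tilde{r}}, \bm{E_k})$ in the numerator with the full-dimensional pair $(\bm{r}, \bm{D_k})$ in the denominator; these must be carried through verbatim rather than symmetrized, so I would simply transcribe them.

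Next, I would observe that ${\rm tr}(\bm{E_k})$ is independent of the decision variable $\bm{r}$, so minimizing ${\rm tr}(\bm{E_{k+1}})$ is equivalent to maximizing the subtracted fractional term $\tfrac{(1 - \frac{1}{k+1}) \bm{\tilde{r}}^T \bm{E_k}^2 \bm{\tilde{r}}}{1 + (1 - \frac{1}{k+1}) \bm{r}^T \bm{D_k} \bm{r}}$. Since the prefactor $(1 - \tfrac{1}{k+1}) = \tfrac{k}{k+1} > 0$ for all $k \geq 1$ is a positive constant independent of $\bm{r}$, it can be dropped from the numerator without altering the argmax, which yields exactly the objective in (\ref{eq:trace_based_heuristic_2D_RNDOP}). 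Finally, the feasible set is unchanged: the box constraint (\ref{eq:box_cnstrnt_no_trnsfrm}) and the minimum-separation constraint (\ref{eq:dist_btwn_anch_cnstrnt_no_trnsfrm}) apply to the newly added $(k+1)^\text{th}$ anchor exactly as in the 3D case, so they carry over directly with $\bm{\tilde{r}} = [\bm{r}]_{1:2}$.

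I do not anticipate a genuine obstacle, since the statement is an algebraic reformulation rather than a deep result. The only point warranting care is keeping the projected pair $(\bm{\tilde{r}}, \bm{E_k})$ in the numerator distinct from the full pair $(\bm{r}, \bm{D_k})$ in the denominator — an asymmetry inherited from (\ref{eq:E_k+1_update}) that one cannot collapse, and conflating them would produce an incorrect cost. I would also note, paralleling the remark accompanying the 3D heuristic, that this trace surrogate is justified by Proposition \ref{prop:lb_ub_rndop_xy_iterative}: shrinking ${\rm tr}(\bm{E_{k+1}})$ pushes down both eigenvalues of $\bm{E_{k+1}}$ and thus tends to tighten the bound on $\mathsf{R}^{(+)}_\mathsf{xy}$, so minimizing the trace \emph{tends to minimize} the maximum 2D RNDOP. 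This motivational observation, however, is not part of the formal reduction being proved.
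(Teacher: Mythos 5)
Your argument is correct and matches the paper's own proof: both take the trace of the update (\ref{eq:E_k+1_update}), discard the $\bm{r}$-independent term ${\rm tr}(\bm{E_k})$ and the positive prefactor $(1-\tfrac{1}{k+1})$, collapse the numerator via the cyclic property to $\bm{\tilde{r}}^T\bm{E_k}^2\bm{\tilde{r}}$, and convert the residual minimization of a subtracted term into the stated argmax with unchanged constraints. Your write-up is merely more explicit about the cyclic-trace step and the numerator/denominator asymmetry.
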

\begin{proof}
	In the cost function ${{\rm arg} \min}_{\bm{r}}\ \ {\rm tr}(\bm{E_{k+1}})$, we observe from (\ref{eq:E_k+1_update}) that only the second term is dependent on $\bm{r}$. Therefore, ignoring the first term, noticing that $(1 - \frac{1}{k+1})$ in the numerator is independent of $\bm{r}$, and absorbing the negative sign, we get the desired result. 
\end{proof}
\subsubsection{Eigenvector-based heuristic}
This scheme finds the anchor location at the $(k+1)^{\rm th}$ iteration by solving two convex problems sequentially. The two problems are designed by observing in (\ref{eq:E_k+1_update}) that (a) aligning $\bm{\tilde{r}_{k+1}}$ along the maximum eigenvector of $\bm{E_k}$, and (b) minimizing the denominator of the second term, \emph{tends to minimize} $\lambda_{+}(\bm{E_{k+1}})$. The mathematical formulation is provided in the following proposition. 
\begin{lemma}
    \label{lemma:iterative_eigvec_based_2d}
	The solution to the eigenvector heuristic-based 2D RNDOP minimization problem is given by $\bm{r^{(\rm eig)}_{k+1,2D}} = [\bm{\tilde{r}_{\lambda +}}^T\ r_z]^T$ where $\bm{\tilde{r}_{\lambda +}} = \alpha_{\rm max} \bm{v}_{+}(\bm{E_k})$ such that
	\begin{align}
	\label{eq:eigvec_based_heuristic_2D_RNDOP}
	\text{P1:} \ \ \alpha_{\rm max} & = \underset{[\bm{r_l}]_{1:2}  \preceq \alpha\bm{v_{+}}(\bm{E_{k}})  \preceq [\bm{r_u}]_{1:2}} {{\rm arg} \max }\ \ \alpha \bm{v}_{+}(\bm{E_{k}}) \\
	\label{eq:eigvec_based_heuristic_z_part}
	\text{P2:}\ \ r_z & = \underset{\bm{r_l}]_{3}  \leq z  \leq [\bm{r_u}]_{3}} {{\rm arg} \min }\ \ pz^2 + 2z \bm{q}^T \bm{\tilde{r}_{\lambda +}} 
	\end{align}  
where $\bm{q} = [\bm{D_k}]_{1:2,3}$ and $p = [\bm{D_k}]_{3,3}$.
\end{lemma}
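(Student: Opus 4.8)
The plan is to justify the two subproblems P1 and P2 directly from the update equation (\ref{eq:E_k+1_update}) by examining how each factor in the rank-1 correction term depends on the placement vector $\bm{r}$. The governing quantity to minimize is $\lambda_{+}(\bm{E_{k+1}})$, and since $\bm{E_{k+1}}$ is obtained by subtracting a rank-1 positive-semidefinite correction from $\bm{E_k}$, Proposition \ref{prop:eigvalue_interlace_thm} (with $\epsilon < 0$) tells us that the eigenvalues of $\bm{E_{k+1}}$ are pushed downward relative to those of $\bm{E_k}$. Hence the heuristic goal reduces to making that subtracted correction term as large as possible \emph{in the direction of the dominant eigenvector} $\bm{v}_{+}(\bm{E_k})$, so as to suppress $\lambda_{+}$.

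First I would decompose the placement vector into its planar part $\bm{\tilde{r}} = [\bm{r}]_{1:2}$ and its height $r_z = [\bm{r}]_3$, since the numerator $\bm{E_k}\bm{\tilde{r}}\bm{\tilde{r}}^T\bm{E_k}$ depends only on $\bm{\tilde{r}}$ while the scalar denominator $1 + (1-\tfrac{1}{k+1})\bm{r}^T\bm{D_k}\bm{r}$ depends on the full vector $\bm{r}$. To maximize the subtracted term's action on $\bm{v}_{+}(\bm{E_k})$, one wants the vector $\bm{E_k}\bm{\tilde{r}}$ to align with $\bm{v}_{+}(\bm{E_k})$; since $\bm{E_k}$ is symmetric and $\bm{v}_{+}(\bm{E_k})$ is its leading eigenvector, this is achieved by taking $\bm{\tilde{r}} = \alpha\,\bm{v}_{+}(\bm{E_k})$, for which $\bm{E_k}\bm{\tilde{r}} = \alpha\lambda_{+}(\bm{E_k})\bm{v}_{+}(\bm{E_k})$ and the numerator's Rayleigh contribution along $\bm{v}_{+}$ scales as $\alpha^2\lambda_{+}^2(\bm{E_k})$. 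Maximizing this contribution calls for maximizing $\alpha$ subject to the box constraint on the planar coordinates, which is precisely P1 in (\ref{eq:eigvec_based_heuristic_2D_RNDOP}).

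Having fixed $\bm{\tilde{r}_{\lambda+}} = \alpha_{\max}\bm{v}_{+}(\bm{E_k})$, the numerator of the correction term is now frozen, so the only remaining $r_z$-dependence enters through the denominator; making the whole correction term larger means making that denominator \emph{smaller}, i.e.\ minimizing $\bm{r}^T\bm{D_k}\bm{r}$ over the feasible $r_z$. Expanding the quadratic form with the block partition $\bm{q} = [\bm{D_k}]_{1:2,3}$ and $p = [\bm{D_k}]_{3,3}$, and discarding the term $\bm{\tilde{r}_{\lambda+}}^T[\bm{D_k}]_{1:2,1:2}\bm{\tilde{r}_{\lambda+}}$ which is constant in $z$, yields the scalar objective $pz^2 + 2z\,\bm{q}^T\bm{\tilde{r}_{\lambda+}}$ of P2 in (\ref{eq:eigvec_based_heuristic_z_part}), a one-dimensional convex quadratic whose minimizer over the interval $[[\bm{r_l}]_3,[\bm{r_u}]_3]$ is easily found. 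I expect the main obstacle to be conceptual rather than computational: because this is a \emph{heuristic}, P1 and P2 are decoupled and solved sequentially rather than jointly, so the argument must make clear that each step only \emph{tends to} minimize $\lambda_{+}(\bm{E_{k+1}})$ and does not claim global optimality of the rank-1 update; I would therefore frame the justification in terms of driving the dominant-eigenvector correction as large as feasible, consistent with the bound $\mathsf{R}^{(+)}_\mathsf{xy}(\bm{E_{k+1}}) \leq \lambda_{+}(\bm{E_k})$ established in Proposition \ref{prop:lb_ub_rndop_xy_iterative}.
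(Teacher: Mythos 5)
Your proposal is correct and follows essentially the same route as the paper's proof: align $\bm{\tilde{r}}$ with $\bm{v}_{+}(\bm{E_k})$ and push $\alpha$ to the box boundary to maximize the rank-1 correction along the dominant eigendirection (P1), then minimize the denominator over $r_z$ by expanding $\bm{r}^T\bm{D_k}\bm{r}$ in block form and dropping the $z$-independent term (P2). Your phrasing of the P1 step---maximizing the subtracted correction's action on $\bm{v}_{+}(\bm{E_k})$---is in fact a cleaner articulation of the intent than the paper's own wording, and your closing caveat that the sequential decoupling only \emph{tends to} minimize $\lambda_{+}(\bm{E_{k+1}})$ matches the paper's framing of the scheme as a heuristic.
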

\begin{proof}
	The optimization problem (P1) seeks to minimize the numerator of the second term in (\ref{eq:E_k+1_update}) under the box constraints in (\ref{eq:box_cnstrnt_no_trnsfrm}). (P1) is obtained in a similar manner as in Lemma \ref{lemma:min_eigvalue_anch_plcmnt_3D} by noticing that (a) the minimum value of the numerator in the second term is achieved when $\bm{\tilde{r}}=\alpha \bm{v}_{+}(\bm{E_k})$, and (b) $\alpha$ is chosen as high as possible under the XY-box constraints. 
	
	Then, (P2) solves for the z-coordinate by minimizing the denominator under (a) the box constraints for the z-coordinate, and (b) $\bm{\tilde{r}_{\lambda +}}$ which are the x- and y-coordinates of the $(k+1)^{\rm th}$ anchor. Considering the terms dependent on $\bm{r}$, substituting $\bm{r} = [\bm{\tilde{r}_{\lambda +}}^T\ r_z]^T$, and retaining only the terms dependent on the optimization variable $r_z$, we get the desired solution. 
\end{proof}	
It is worthwhile to note that in the above scheme, the distance constraint (\ref{eq:dist_btwn_anch_cnstrnt_no_trnsfrm}) is not explicitly enforced, as in the case of the corresponding 3D case. To do so, we use a random perturbation-based scheme discussed in Algorithm \ref{algo:random_prtrb_min_anch_sep}. 
\begin{algorithm}[t]
	\small
	\begin{algorithmic}[1]
		\State \textbf{Input:} ${\rm method} \in \{\mathtt{rnd}, \mathtt{tr}\}, {\rm mode} \in \{\mathtt{2D}, \mathtt{3D} \}, N_{\rm a}, d_{\rm th}, \bm{r_{u}}, \bm{r_{l}}$,  $\mathcal{R}_{a}(N) = \{\bm{r_{1}}, \bm{r_{2}}, \cdots, \bm{r_{N}} \}$ such that constraints (\ref{eq:box_constraints})-(\ref{eq:anch_centroid_constraints}) are satisfied.
		%\State \textbf{Output:} $\mathcal{R}_a(M + N_{\rm a}) = \{ \bm{r_1}, \cdots, \bm{r_M}, \bm{r_{M+1}}, \cdots, \bm{r_{M + N_{\rm a}}} \}$. %\Comment{Coordinates of all anchors.}
		\State $\triangleright$ Iteratively obtain coordinates of $N_{\rm a}$ anchors.
		\For{$k \gets (N+1)$ to $(N + N_{\rm a})$} 
		%\State $N \gets (M + k)$.
		\If{$k = (N+1)$}\Comment{Initialization}
		\State $\bm{r^{'}_{u}} \gets \bm{r_{u}}, \bm{r^{'}_{l}} \gets \bm{r_{l}}$ \Comment{Save GCS box constraints}
		\EndIf  
		\State $\bm{C_{k-1}} \gets \sum_{i=1}^{k-1} \bm{r_{i}} \bm{r_{i}}^T$
		\State $\bm{D_{k-1}} \gets \bm{C_{k-1}}^{-1}, \bm{E_{k-1}} \gets [\bm{D_{k-1}}]_{1:2,1:2}$
		\If{${\rm method} = \mathtt{rnd}$}
		\If{${\rm mode}= \mathtt{3D}$}
		\State Solve (\ref{eq:robust_seq_anch_plcmnt_no_coord_trnsfrm})-(\ref{eq:dist_btwn_anch_cnstrnt_no_trnsfrm}) to obtain $\bm{r_k} \gets \bm{r^{(\rm rnd)}_{k,3D}}$.
		\ElsIf{${\rm mode}= \mathtt{2D}$} 
		\State Solve (\ref{eq:robust_seq_anch_plcmnt_no_coord_trnsfrm_2D}) to obtain $\bm{r_k} \gets \bm{r^{(\rm rnd)}_{k,2D}}$.
		\EndIf
		\ElsIf{${\rm method} = \mathtt{tr}$}
		\If{${\rm mode}= \mathtt{3D}$}
		\State Solve (\ref{eq:trace_based_heuristic_3D_RNDOP}) to obtain $\bm{r_k} \gets \bm{r^{(\rm tr)}_{k,3D}}$.
		\ElsIf{${\rm mode}= \mathtt{2D}$}
		\State Solve (\ref{eq:trace_based_heuristic_2D_RNDOP}) to obtain $\bm{r_k} \gets \bm{r^{(\rm tr)}_{k,2D}}$.
		\EndIf
		\begin{comment} 
		\ElsIf{${\rm method} = \mathtt{tr}$}
		\State Solve the optimization problem (\ref{eq:min_eig_based_optim}). 
		\State $\bm{r'_N} \gets \bm{r^{(\lambda_{-})}_{N,3D}}$.
		\State $\bm{r_N} \gets \text{Algorithm2}(\bm{r_1}, \cdots, \bm{r'_N}, d_{\rm th})$.
		\end{comment} 
		\EndIf 
		\State $\triangleright$ Compute new centroid and translate anchors/box constraints to new LCS.
		\State $\bm{r_c} \gets \frac{1}{k} \sum_{i = 1}^{k} \bm{r_i}$. %\Comment{Compute anchor centroid.} 
		\State $\bm{r_u} \gets (\bm{r_u} - \bm{r_c})$ and $\bm{r_l} \gets (\bm{r_l} - \bm{r_c})$. 
		\For{$i \gets 1$ to $k$} 
		\State $\bm{r_i} \gets (\bm{r_i} - \bm{r_c})$ %\Comment{Translating anchor coordinates to new LCS.} 
		\EndFor
		\EndFor
		% Translate everything back to the GCS
		\State $\triangleright$ Transform all coordinates back to GCS.
		\State $\bm{r'_c} \gets (\bm{r'_u} - \bm{r_u})$ \Comment{Equivalent to $\bm{r'_c} \gets (\bm{r'_l} - \bm{r_l})$} %\Comment{Obtain offset w.r.t. GCS.}
		\For{$i \gets 1$ to $(N + N_{\rm a})$}
		\State $\bm{r_i} \gets (\bm{r_i} + \bm{r'_c})$ 
		\EndFor
		\State \Return $\mathcal{R}_{a}(N + N_{\rm a}) = \mathcal{R}_{a}(N) \cup \{ \bm{r_{N+1}}, \cdots, \bm{r_{N + N_{\rm a}}} \}$.
	\end{algorithmic}
	\caption{Iterative Minimax RNDOP and Trace Heuristic-based Anchor Placement for 2D and 3D Positioning}
	\label{algo:sequential_minimax_gdop_anch_plcement}
\end{algorithm}
\vspace{-10pt}
\section{Iterative Anchor Addition Algorithms for 2D and 3D Positioning Outside the Convex Hull}\label{sec:minimax_3D_DOP_iter_algo}
In this section, we present an iterative anchor addition-based algorithmic framework for beyond convex hull positioning, that leverages the optimal single anchor addition schemes proposed in Sections \ref{sec:DOP_3D} and \ref{sec:DOP_2D}. The goal is to adding $N_{\rm a}$ anchors to an existing set of $N$ anchors such that the maximum RNDOP is minimized, thereby resulting in robust positioning performance. %A brute-force search is a potential method to initialize the anchor coordinates when $N$ is small. 
\vspace{-10pt}
\subsection{Algorithm Description}
Algorithm \ref{algo:sequential_minimax_gdop_anch_plcement} shows the iterative anchor addition algorithm for the minimax RNDOP and the trace-based heuristic schemes. Algorithm \ref{algo:random_prtrb_min_anch_sep} shows the iterative anchor addition algorithm based on the eigenvector heuristic, coupled with a random perturbation step to ensure that the minimum anchor distance constraint is satisfied. Both the algorithms follow the sequence of steps outlined below. However for ease of exposition, we use Algorithm (\ref{algo:sequential_minimax_gdop_anch_plcement}) to illustrate the key steps.
\begin{enumerate}
	\item \textbf{Initialization}: The algorithm is initialized with $N$ anchor coordinates, constraint-related system parameters, and optimization method-related parameters. This corresponds to lines 1-2 in Algorithm \ref{algo:sequential_minimax_gdop_anch_plcement}.
	%\item \textbf{Translate centroid to Origin}: The closed form RNDOP results hold under the assumption that the anchor centroid is at the origin. This constrained is enforced in this step.
	\item \textbf{Iterative Anchor Placement}: Depending on the method, the appropriate \emph{single anchor addition} problem is solved (lines 3-28 in Algorithm (\ref{algo:sequential_minimax_gdop_anch_plcement})). At each step, a coordinate translation is performed \emph{post anchor addition}, to ensure that the anchor centroid is at the origin for the next iteration (lines 23-27 in Algorithm (\ref{algo:sequential_minimax_gdop_anch_plcement})); a necessary condition as highlighted in Remark \ref{remark:anchor_centroid_at_origin}.
	\item \textbf{Coordinate System Reversion}: After all the $N_{\rm a}$ optimal anchor coordinates are computed in the LCS, they are transformed back to the global coordinate system for deployment purposes (lines 29-34 in Algorithm \ref{algo:sequential_minimax_gdop_anch_plcement}).
\end{enumerate}
\vspace{-13pt}
\subsection{Random Perturbation-based Extension}\label{subsec:rand_perturb_eigvec}
The eigenvector heuristic-based schemes do not explicitly enforce the box and minimum anchor distance constraints. To do so, we propose a random perturbation-based extension shown in Algorithm \ref{algo:random_prtrb_min_anch_sep}, where we search for potential anchor locations near $\bm{r^{(\rm eig)}_{k,3D}}/\bm{r^{(\rm eig)}_{k,2D}}$ (for 3D/2D positioning respectively) by randomly perturbing them until (\ref{eq:box_cnstrnt_no_trnsfrm})-(\ref{eq:dist_btwn_anch_cnstrnt_no_trnsfrm}) is satisfied. This operation is repeated up to $N_{\rm max}$ times in each \emph{anchor addition iteration until the perturbed location} satisfies all the constraints (lines 22-42 in Algorithm \ref{algo:random_prtrb_min_anch_sep}. However, the random perturbations are not guaranteed to yield a solution that satisfies all the constraints. To improve robustness, we propose adding redundant anchors. In this mechanism, if $N_{\rm f}$ anchor addition iterations fail to yield a feasible anchor location, the algorithm effectively executes $(N_{\rm a} + N_{\rm f})$ iterations until a subset of $N_{\rm a}$ anchor locations satisfy (\ref{eq:box_cnstrnt_no_trnsfrm})-(\ref{eq:dist_btwn_anch_cnstrnt_no_trnsfrm}).

	% Algorithm for random perturbation
	\begin{algorithm}[t]
		\small
		\begin{algorithmic}[1]
			\State \textbf{Input:} ${\rm mode} \in \{\mathtt{2D}, \mathtt{3D} \}, d_{\rm th}, \bm{r_{u}}, \bm{r_{l}}, N_{\rm a}, \eta, N_{\rm max}, \mathcal{R}_{a}(N) = \{\bm{r_{1}}, \bm{r_{2}}, \cdots, \bm{r_{N}} \}$ such that constraints (\ref{eq:box_constraints})-(\ref{eq:anch_centroid_constraints}) are satisfied. %$\sum_{i=1}^{M} \bm{r_{i}} = \bm{0}$ and $\sum_{i=1}^{M} \bm{r_{i}} \bm{r_{i}}^T \succ \bm{0}$.
			%\State \textbf{Output:} $\mathcal{R}_a(M + N_{\rm a}) = \{ \bm{r_1}, \cdots, \bm{r_M}, \bm{r_{M+1}}, \cdots, \bm{r_{M + N_{\rm a}}} \}$. %\Comment{Coordinates of all anchors.}
			\State $\bm{V} = \bm{1_N} \in \{0,1\}^{N}$, $q \gets 0, n \gets 0$.
			\While{$n \leq N_{\rm a}$}
				\State $q \gets q + 1, k \gets (N + q)$.
				\If{$q = 1$}
					\State $\bm{r^{'}_{u}} \gets \bm{r_{u}}, \bm{r^{'}_{l}} \gets \bm{r_{l}}$ \Comment{Save GCS box constraints}
				\EndIf  
				\State $\bm{C_{k-1}} \gets \sum_{i=1}^{k-1} \bm{r_{i}} \bm{r_{i}}^T$
				\State $\bm{D_{k-1}} \gets \bm{C_{k-1}}^{-1}, \bm{E_{k-1}} \gets [\bm{D_{k-1}}]_{1:2,1:2}$
				\If{${\rm mode} = \mathtt{3D}$}
				\State Solve (\ref{eq:min_eig_based_optim})-(\ref{eq:max_dist_along_min_eigvec}) to get $\bm{r^\ddagger} \gets \bm{r^{\rm (eig)}_{k,3D}}$.
				\ElsIf{${\rm mode} = \mathtt{2D}$}
				\State Solve (\ref{eq:eigvec_based_heuristic_2D_RNDOP})-(\ref{eq:eigvec_based_heuristic_z_part}) to get $\bm{r^\ddagger} \gets \bm{r^{\rm (eig)}_{k,2D}}$.
				\EndIf
				\For{$l \gets 1$ to $(k-1)$}
					\State $d_l \gets \| \bm{r^\ddagger} - \bm{r_{l}} \|_2$
				\EndFor	
				\State $d_{\rm min} \gets \min([d_1, d_2, \cdots, d_{k-1}])$
				\State $\triangleright$ If (\ref{eq:dist_btwn_anch_cnstrnt_no_trnsfrm}) is satisfied, we have a new valid anchor location
				\If{$d_{\rm min} \geq d_{\rm th}$} 
					\State $\bm{r_k} \gets \bm{r^\ddagger}, n \gets n + 1, \bm{V}(k) \gets 1$.
				\Else \Comment{perform random perturbation until (\ref{eq:dist_btwn_anch_cnstrnt_no_trnsfrm}) is satisfied}
					\State $\bm{r_k} \gets \bm{r^\ddagger}, d_{\rm min,c} \gets d_{\rm min} $ \Comment{Initialize $\bm{r_k}$ and minimum anchor separation}
					\For{$p \gets 0$ to $N_{\rm max}$}
						\State $\theta \sim \mathcal{U}[-\pi, \pi], \phi \sim \mathcal{U}\big[-\tfrac{\pi}{2}, \tfrac{\pi}{2} \big]$
						\State $\bm{r^\dagger} = \bm{r^\ddagger} + \eta d_{\rm th} \cdot \bm{a}(\theta, \phi)$ \Comment{perturb $\bm{r^\ddagger}$ by a distance $\eta d_{\rm th}\ (\eta > 1)$ along a random direction $(\theta, \phi)$.}
						\If{$\bm{r_l} \preccurlyeq \bm{r^\dagger} \preccurlyeq \bm{r_u}$}
							\For{$j \gets 1$ to $(k-1)$}
								\State $d^{\dagger}_j \gets \| \bm{r^{\dagger}} - \bm{r_j} \|_2$
							\EndFor
							\State $d_{\rm min} = \min([d^{\dagger}_1, d^{\dagger}_2, \cdots, d^{\dagger}_{k-1}])$
							\If {$d_{\rm min} \geq d_{\rm th}$}
								\State $\bm{r_k} \gets \bm{r^\dagger}, n \gets n + 1, \bm{V}(k) \gets 1$.
								\State \textbf{break}
							\ElsIf {$d_{\rm min} > d_{\rm min,c}$}
								\State $\bm{r_k} \gets \bm{r^\dagger},d_{\rm min,c} \gets d_{\rm min}, \bm{V}(k) \gets 0$
							\EndIf 
						\Else
							\State \textbf{continue}
						\EndIf 
					\EndFor
				\EndIf
				\State Lines 14-18 in Algorithm \ref{algo:sequential_minimax_gdop_anch_plcement}.
				\begin{comment} 
				\State $\bm{r_c} \gets \frac{1}{k} \sum_{i = 1}^{k} \bm{r_i}$. %\Comment{Compute anchor centroid.} 
				\State $\bm{r_u} \gets (\bm{r_u} - \bm{r_c})$ and $\bm{r_l} \gets (\bm{r_l} - \bm{r_c})$. %\Comment{Translate box constraints to new LCS.}
				\For{$i \gets 1$ to $N$} 
					\State $\bm{r_i} \gets (\bm{r_i} - \bm{r_c})$ %\Comment{Translating anchor coordinates to new LCS.} 
				\EndFor
				\end{comment} sss
			\EndWhile
			\State $\bm{r'_c} \gets (\bm{r'_u} - \bm{r_u}) = (\bm{r_l} - \bm{r'_l})$ %\Comment{Obtain coordinate offset w.r.t. GCS.}
			\State $\mathcal{R}_{\rm val} = \emptyset$.
			\For{$i \gets 1$ to $k$}
				\State $\bm{r_i} \gets (\bm{r_i} + \bm{r'_c})$ \Comment{Transform all anchor coordinates to GCS.}
				\If{$\bm{V}(i)=1$}
					\State $\mathcal{R}_{\rm val} \gets \mathcal{R}_{\rm val} \cup \{ \bm{r_i}\}$.
				\EndIf 
			\EndFor 
			\State \Return $\mathcal{R}_{a}(N+N_{\rm a}) = \mathcal{R}_{a}(N) \cup \mathcal{R}_{\rm val}$.
		\end{algorithmic}
		\caption{Iterative Eigenvector Heursitic-based Iterative Anchor Placement for 2D and 3D Positioning}
		\label{algo:random_prtrb_min_anch_sep}
	\end{algorithm}
	\vspace{-10pt}
	\subsection{Computational Complexity and Convergence}\label{subsec:compute_complexity}
	\subsubsection{Algorithm 1}
	Since the considered optimization problems are non-convex, convergence to the global minima cannot be guaranteed. A rigorous proof of convergence to a local minima is beyond the scope of this work. However, we would like to emphasize that we use the Primal-Dual Interior-Point (PDIP) class of algorithms such as \cite{Waltz_Nocedal_Int_Pnt_LS_TR_MathPro_2006} based on the trust-region and line search-based approaches, which have robust convergence properties that ensure that each step makes progress towards a local minima \cite{Waltz_Nocedal_Int_Pnt_LS_TR_MathPro_2006}. 
	
	Let $d$ be the dimension in which the target is tracked. Hence, $d=2$ and $3$ for 2D and 3D positioning respectively. Since the optimization problems composed of Algorithm \ref{algo:sequential_minimax_gdop_anch_plcement} are nonlinear and non-convex, we solve them using PDIP-based methods. Here, the Newon step is the limiting operation requiring the most number of computations, and the corresponding worst-case complexity is $O(d^3)$. In addition, the complexity of the cost function computation for each single anchor addition problem is no greater than $O(d^3)$, since they are based on operations such as the eigen-decomposition, multiplication, and inversion of $\mathbb{S}^{d\times d}$ matrices \cite{Banks_eigdecomp_complexity_FOCS_2020}. For adding the $k^{\rm th}$ anchor, an $\epsilon$-accurate iteration can be computed in $O\big(\sqrt{k} \log\big(\frac{1}{\epsilon}\big) \big)$ steps \cite{Xiaoge_QNE_noncnovex_TWC_2013}, \cite{Xiaona_glob_conv_int_pnt_JCAM_2009}, \cite{boyd2004convex}. Therefore, a single sequential anchor addition step in Algorithm \ref{algo:sequential_minimax_gdop_anch_plcement} has a complexity of $O \big(d^3 \sqrt{k} \log\big(\frac{1}{\epsilon} \big) \big)$. Therefore, to add $N_a$ anchors sequentially to $N$ existing anchors, the computational complexity is $\sum_{n=N}^{N + N_a - 1} O\big(d^3\sqrt{n} \log \big(\frac{1}{\epsilon}\big) \big)$. Using the fact that $\sum_{i=1} x^{1/2}_i \leq \big(\sum_{i=1} x_i \big)^{1/2}$, the computational complexity can be simplified as $O \Big(d^3 \sqrt{N-1 + \frac{N_a(N_a + 1)}{2}} \log\big(\frac{1}{\epsilon} \big) \Big)$. If $N_a \gg N$, then the worst-case complexity can be accurately approximated using $O \big(d^3 N_a \log \big(\frac{1}{\epsilon} \big) \big)$.  %Since the optimization problems are either (a) non-convex but deterministic, or (b) convex but randomized, the 
	
	\subsubsection{Algorithm 2} 	
	Due to the random perturbation step, convergence is not guaranteed for each iteration of the anchor addition scheme. As discussed in Section \ref{subsec:rand_perturb_eigvec}, this is alleviated by retaining the constraint-violating anchor locations and proceeding forward to add redundant additional anchors until a total of $N_{\rm a}$ constraint-satisfying anchor locations are found, and then removing the anchors that violate the deployment constraints. The number of removed anchors ($N_{\rm f}$) depends on the optimization problem parameters $\bm{r_u}, \bm{r_l},d_{\rm th}$, and $N_{\rm a}$. A rigorous analysis of the parameters' impact on the convergence is out of the scope of this paper, and will be addressed in a future work. Nevertheless, we have comprehensively analyzed the execution time performance of Algorithm \ref{algo:random_prtrb_min_anch_sep} using Monte-Carlo simulations in Section \ref{sec:num_results}.
	
	The computational complexity is determined by two key steps: (a) Eigen-decomposition of $\bm{D}$, which is given by $O(d^3)$, and (b) the random perturbation and minimum anchor distance comparison step, which has a complexity of $O(kN_{\rm max})$ for adding the $(k+1)^{\rm th}$ anchor. Therefore, the worst-case complexity for Algorithm \ref{algo:random_prtrb_min_anch_sep} is given by $O\big( (N_{\rm a} + N_{\rm f})d^3 + (N + \sum_{i=1}^{N_{\rm a} + N_{\rm f}} i) N_{\rm max} \big)$. If ($N_{\rm a} + N_{\rm f}) \gg N$, this simplifies to $O\big( (N_{\rm a} + N_{\rm f})d^3 + (N_{\rm a} + N_{\rm f})^2 N_{\rm max} \big)$.

	\begin{figure*}[!t]
		\centering
		\begin{subfigure}[t]{0.32\textwidth}
			\raggedleft
			\includegraphics[width=2.4in]{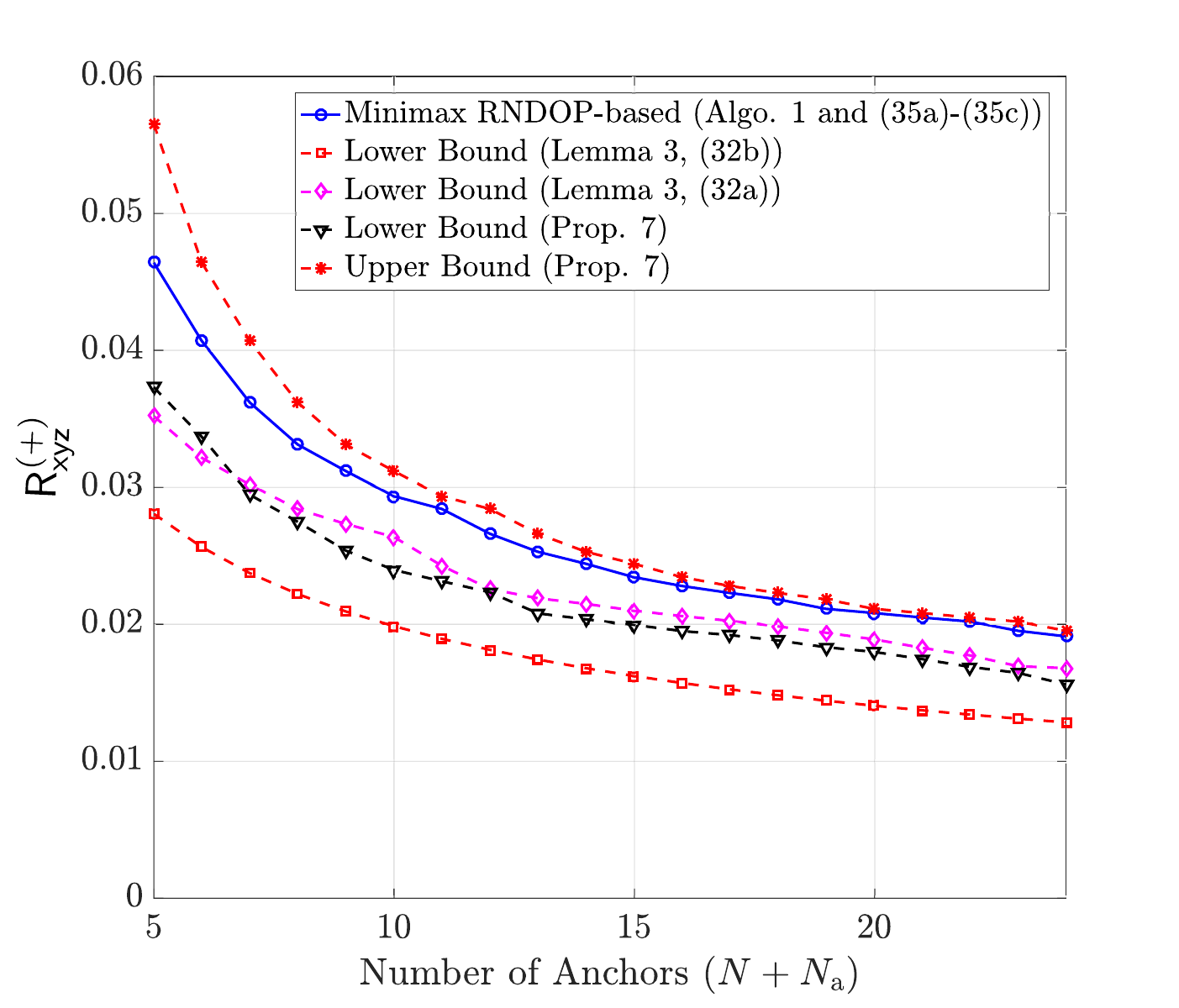}\\
			[-1ex]
			\caption{}
			\label{Fig3a_RNDOP_based_single_3D}
		\end{subfigure}
		~
		\begin{subfigure}[t]{0.32\textwidth}
			\centering
			\includegraphics[width=2.4in]{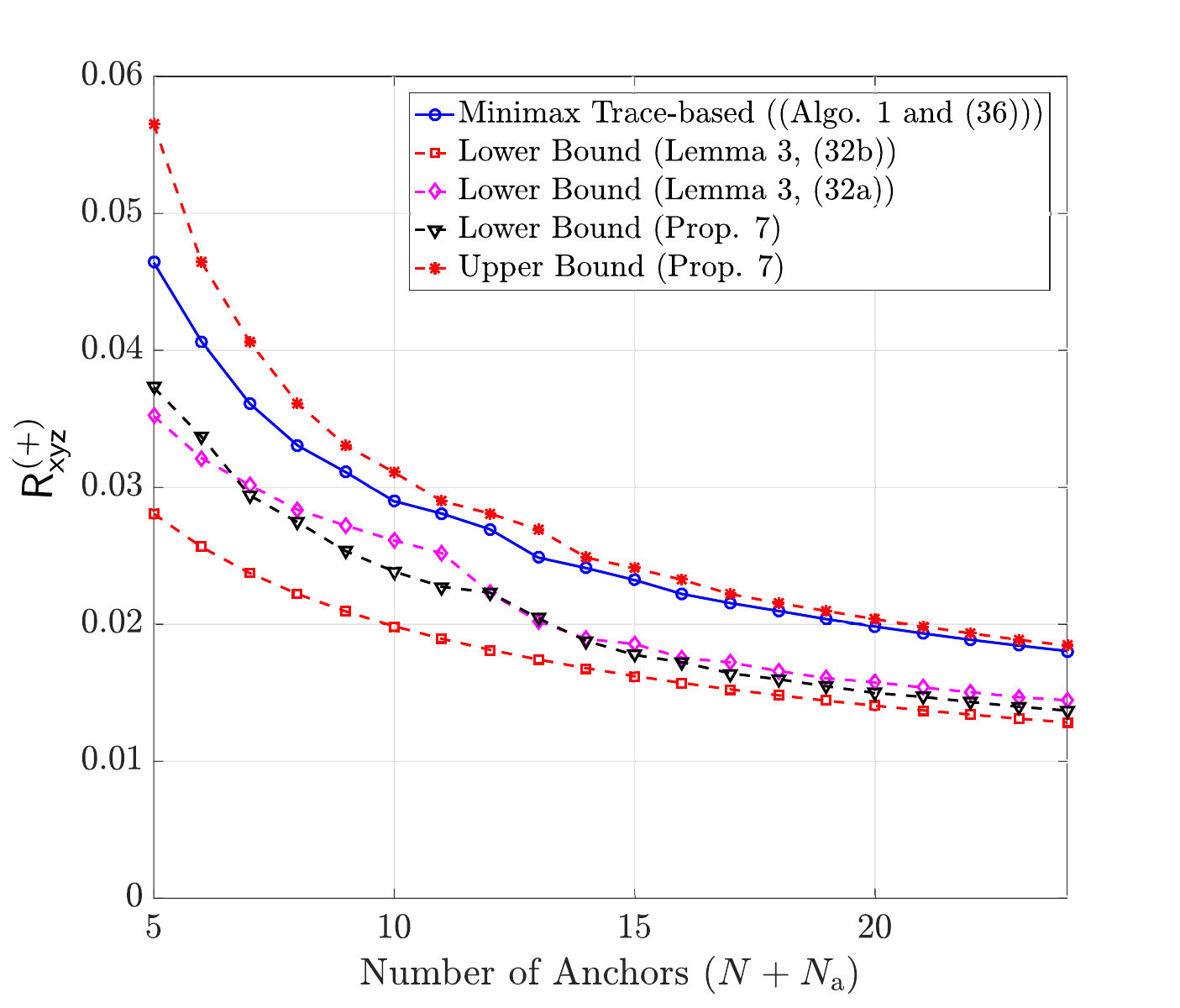}\\
			[-1ex]
			\caption{}
			\label{Fig3b_Trace_based_single_3D}		
		\end{subfigure}
		~
		\begin{subfigure}[t]{0.32\textwidth}
			\centering
			\includegraphics[width=2.4in]{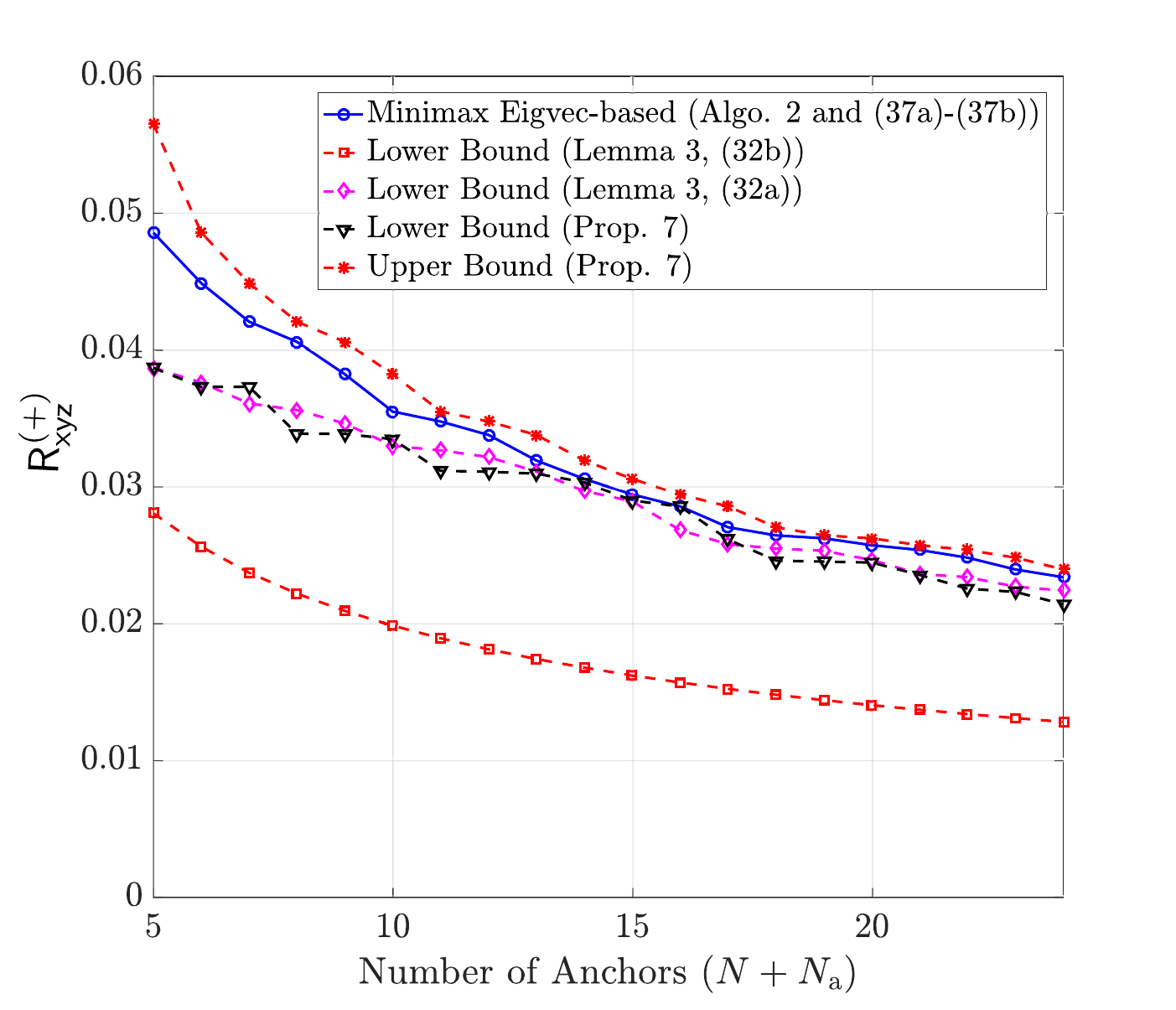}\\
			[-1ex]
			\caption{}
			\label{Fig3c_Trace_based_single_3D}		
		\end{subfigure}
		\caption{Variation of ${\mathsf{R}^{(+)}_\mathsf{xyz}}$ and its bounds as a function of the number of total anchors ($M=N + N_{\rm a}$) under the (a) Minimax RNDOP-optimal scheme (Lemma \ref{lemma:iterative_3D_RNDOP_based}), (b) Minimax Trace-based scheme (Lemma \ref{lemma:iterative_trace_based_3d}), and (c) Minimax Eigenvector-based scheme (Lemma \ref{lemma:min_eigvalue_anch_plcmnt_3D}), for beyond convex hull 3D localization.}
		\label{Fig3_3d_schemes_RNDOP_single}
	\end{figure*}

	\begin{figure*}[!t]
		\centering
		\begin{subfigure}[t]{0.32\textwidth}
			\raggedleft
			\includegraphics[width=2.4in]{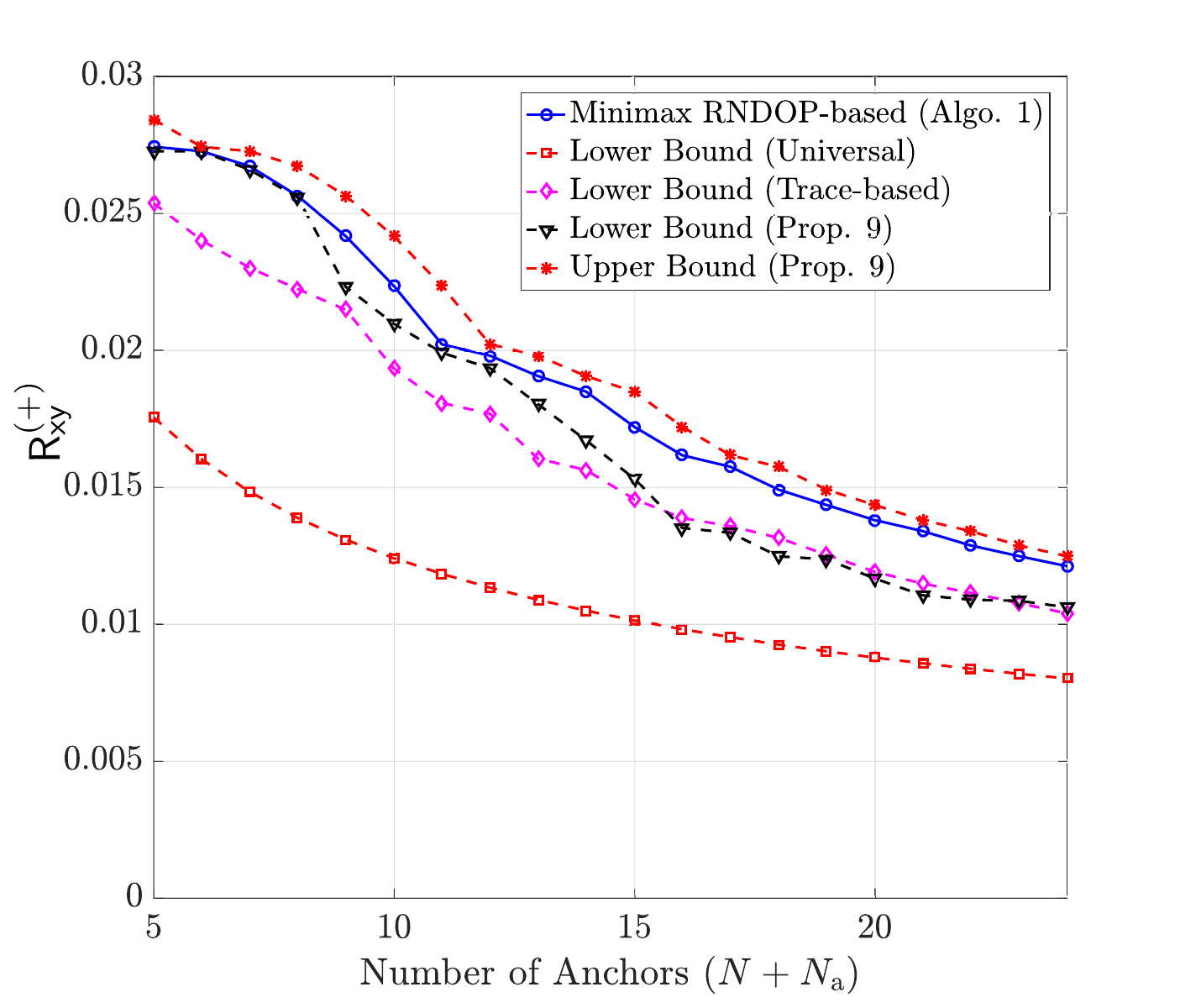}\\
			[-1ex]
			\caption{}
			\label{Fig4a_RNDOP_based_single_2D}
		\end{subfigure}
		~
		\begin{subfigure}[t]{0.32\textwidth}
			\centering
			\includegraphics[width=2.4in]{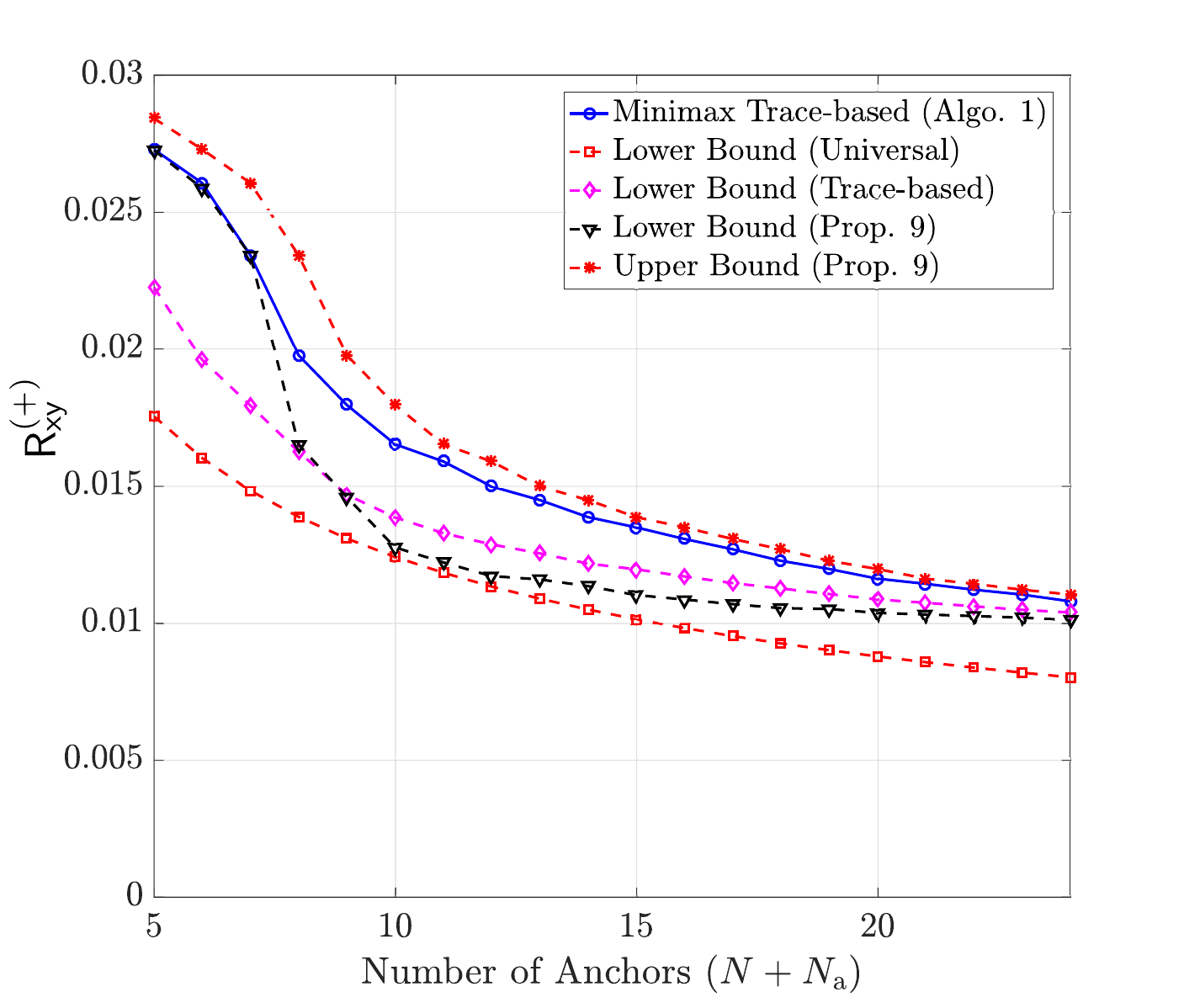}\\
			[-1ex]
			\caption{}
			\label{Fig4b_Trace_based_single_2D}		
		\end{subfigure}
		~
		\begin{subfigure}[t]{0.32\textwidth}
			\centering
			\includegraphics[width=2.4in]{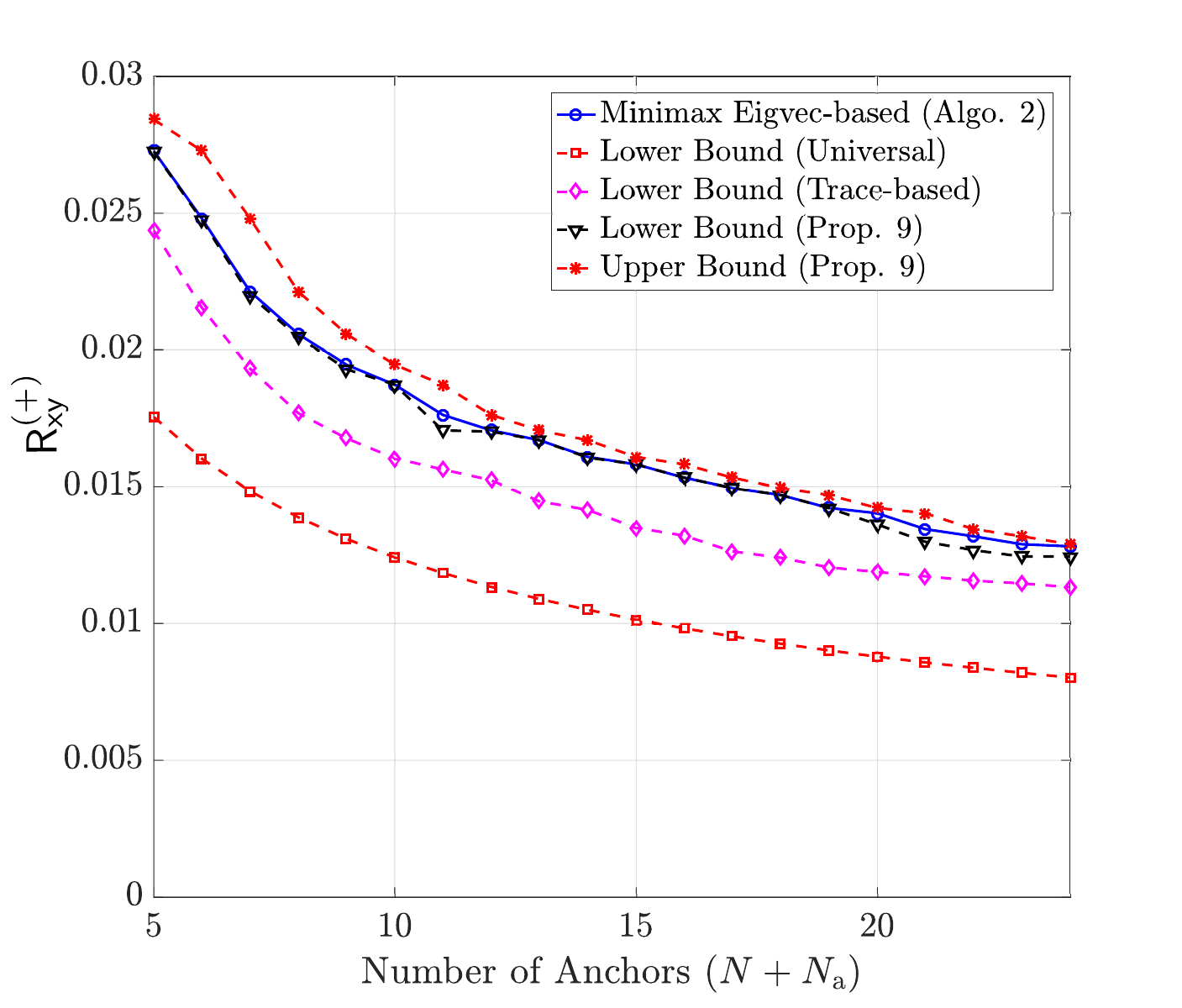}\\
			[-1ex]
			\caption{}
			\label{Fig4c_Trace_based_single_2D}		
		\end{subfigure}
		\caption{Variation of ${\mathsf{R}^{(+)}_\mathsf{xy}}$ and its bounds as a function of the number of anchors ($M=N + N_{\rm a}$) under the (a) Minimax RNDOP-optimal scheme (Lemma \ref{lemma:iterative_rndop_2d}), (b) Minimax Trace-based scheme (Lemma \ref{lemma:iterative_trace_based_2d}), and (c) Minimax Eigenvector-based (Lemma \ref{lemma:iterative_eigvec_based_2d}) scheme, for 2D (XY) beyond convex hull localization.}
		\label{Fig4_2d_schemes_RNDOP_single}
	\end{figure*}
	\vspace{-10pt}
	\subsection{Impact of Position Uncertainty on RNDOP}
	So far, we have assumed perfect knowledge of the anchor locations. However, this is often not the case in UAV-based positioning applications, due to (a) motion jitter \cite{Banagar_UAV_Wobble_TVT_2020} and (b) tracking error \cite{Zamurad_UAV_TrackError_TCS_2015}. The following result derives the change in $\mathsf{R}^{(+)}_\mathsf{xyz}$ and $\mathsf{R}^{(+)}_\mathsf{xy}$ due to anchor position uncertainty.  
	
	\begin{lemma} \label{lemma:delta_RNDOP_pert_3D_2D}
	Let $\mathsf{R}^{(+)}_\mathsf{xyz}(\mathcal{R}_a)$/$\mathsf{R}^{(+)}_\mathsf{xy}(\mathcal{R}_a)$ denote the maximum 3D/2D RNDOP respectively, for anchor configuration $\mathcal{R}_a = \{\bm{r_1},\bm{r_2},\cdots,\bm{r_N} \}$ which satisfy (\ref{eq:box_constraints})-(\ref{eq:anch_centroid_constraints}). Let the anchor position uncertainty be $\bm{\Delta r_i},i=1,2,\cdots,N$, resulting in an anchor configuration $\mathcal{R}'_a = \{\bm{\tilde{r}_1}, \bm{\tilde{r}_2}, \cdots, \bm{\tilde{r}_{N}} \}$ where $\bm{\tilde{r}_i} = \bm{r_i} + \bm{\Delta r_i}$. Then, the minimax RNDOP $\mathsf{R}^{(+)}(\mathcal{R}'_a)$ is given by 
	\begin{align}
		\label{eq:RNDOP_max_3D_pert}
		& \mathsf{R}^{(+)}_\mathsf{xyz}(\mathcal{R}'_a) = \Big(\sum\nolimits_{i=1}^3 \big[\lambda_i(\bm{D}) + \bm{v}^T_{\bm i}(\bm{D}) \bm{\Delta D} \bm{v_i}(\bm{D})\big] \nonumber \\
		& - \underset{i=\{1,2,3\}}{\min}\big(\lambda_i(\bm{D}) + \bm{v}^T_{\bm i}(\bm{D}) \bm{\Delta D} \bm{v_i}(\bm{D}) \big) \Big)^{1/2}, \\
		\label{eq:RNDOP_max_2D_pert}
		& \mathsf{R}^{(+)}_\mathsf{xy}(\mathcal{R}'_a) = \Big(\sum\nolimits_{j=1}^2 \big[\lambda_j(\bm{E}) + \bm{v}^T_{\bm j}(\bm{E}) \bm{\Delta E} \bm{v_j}(\bm{E})\big] \nonumber \\
			& -  \underset{j=\{1,2\}}{\min}\big(\lambda_j(\bm{E}) + \bm{v}^T_{\bm j}(\bm{E}) \bm{\Delta E} \bm{v_j}(\bm{E}) \big)\Big)^{1/2},
	\end{align}
	\begin{align}
		\label{eq:perturbed_matrices}
		& \text{ where }\bm{\Delta D}= -\bm{D}(\bm{D} + \bm{\Delta C}^{-1})^{-1} \bm{D}, \bm{\Delta E} = [\bm{\Delta D}]_{1:2,1:2}, \nonumber \\
		& \bm{\Delta C} = \sum\nolimits_{i=1}^N \Big(\bm{\Delta r_i \Delta r}^T_{\bm{i}} + \bm{\Delta r_i r}^T_{\bm{i}} + \bm{r_i \Delta r}^T_{\bm{i}}  \Big) - N \bm{r_{\mu} r}^T_{\bm{\mu}}, \nonumber \\
		& \text{ and } \bm{r_{\mu}} = \tfrac{1}{N} \sum\nolimits_{i=1}^N \bm{\Delta r_i},
	\end{align}
	if the eigenvalues $\lambda_i$ are non-degenerate.
	\begin{proof}
		We start with the 3D positioning case. Using Lemma (\ref{lemma:asympt_bounds_DOP_xyz_xy}) and (\ref{eq:RNDOP_def_3D_2D}), we have $\mathsf{R}^{(+)}_\mathsf{xyz}(\mathcal{R}'_a) = \sqrt{{\rm tr}(\bm{D_\epsilon}) - \lambda_{-}(\bm{D_\epsilon})}$, 
		\begin{align*} 
		& \text{where } \bm{D_\epsilon} = \bm{C_\epsilon}^{-1}, \bm{C_\epsilon} = \sum\nolimits_{i=1}^{N} (\bm{\tilde{r}_i} - \bm{\tilde{r}_\mu})(\bm{\tilde{r}_i} - \bm{\tilde{r}_\mu})^T, \text{ and } \\
		& \bm{\tilde{r}_\mu} = \tfrac{1}{N} \sum\nolimits_{i=1}^{N}\bm{\tilde{r}_i}.
		\end{align*} 
		We can then represent $\bm{C_\epsilon} = \bm{C} + \bm{\Delta C}$, where $\bm{C} = \sum\nolimits_{i=1}^N \bm{r_i r}^T_{\bm{i}}$. Expand $\bm{C_\epsilon}$ and simplifying using the fact that $\frac{1}{N}\sum_{i=1}^N \bm{r_i} = \bm{0}$, we get $\bm{\Delta C} = \sum\nolimits_{i=1}^N \Big(\bm{\Delta r_i \Delta r}^T_{\bm{i}} + \bm{\Delta r_i r}^T_{\bm{i}} + \bm{r_i \Delta r}^T_{\bm{i}}  \Big) - N \bm{r_{\mu} r}^T_{\bm{\mu}}$.
		Applying the matrix inversion lemma \cite{horn2012matrix} on $\bm{C_\epsilon}^{-1}$, we get\footnote{Clearly, this result is valid if and only if $\bm{\Delta C}$ is full-rank. If it is rank-deficient, matrix inversion can be carried out using the eigendecomposition of $\bm{\Delta C}$ and using the Sherman-Morrison-Woodbury Lemma \cite{horn2012matrix} sequentially. A rigorous analysis of ``rank-deficient $\bm{\Delta C}$'' scenarios is beyond the scope of this paper, and will be deferred to a future work. } $\bm{D_\epsilon} \triangleq (\bm{C} + \bm{\Delta C})^{-1} = \bm{C}^{-1} - \bm{C}^{-1}(\bm{\Delta C}^{-1} + \bm{C}^{-1})^{-1} \bm{C}^{-1}$. Using $\bm{D} \triangleq \bm{C}^{-1}$ (Lemma \ref{lemma:equi_DOP_xyz_and_xy}), defining $\bm{D}_\epsilon \triangleq \bm{D} + \bm{\Delta D}$ and simplifying, we get (\ref{eq:perturbed_matrices}). If the eigenvalues of $\bm{D}$ are  non-degenerate, we can apply the first-order approximation of results from eigenvalue perturbation theory \cite{nakatsukasa2017off} to get 
		\begin{align}
			\lambda_i(\bm{D_\epsilon}) \approx \lambda_i(\bm{D}) + \bm{v}^T_{\bm i}(\bm{D}) \bm{\Delta D} \bm{v_i}(\bm{D}).
		\end{align}
	Substituting this in the expression for $\mathsf{R}^{(+)}_\mathsf{xyz}(\mathcal{R}'_a)$ and simplifying, we get (\ref{eq:RNDOP_max_3D_pert}). For the 2D positioning case, note that $\bm{E_\epsilon} \triangleq [\bm{C_\epsilon}^{-1}]_{1:2,1:2}$. Using similar steps as above, and observing that $\bm{E_\epsilon} \in \mathbb{S}^{2 \times 2}_{++}$, we obtain (\ref{eq:RNDOP_max_2D_pert}) if the eigenvalues of $\bm{E} = [\bm{C}^{-1}]_{1:2,1:2}$ are non-degenerate. 
	\end{proof}
	\end{lemma}
	\vspace{-10pt}
	\section{Numerical Results} \label{sec:num_results}
	We consider a beyond convex hull localization scenario, as shown in Fig. \ref{Fig1_Anchors_FarAway_Regime}. In the following subsections, we present numerical results to (a) verify the iterative upper and lower bounds, and (b) demonstrate the error and computational performance of our proposed iterative anchor placement algorithms in 2D and 3D localization scenarios.
	\vspace{-8pt}
	\subsection{Comparison of Maximum RNDOP Performance}
	Recall that for far-away targets outside the convex hull, a target with the maximum RNDOP \emph{tends to have the worst localization error on average} when the range error statistics is similar for all targets. Here, we consider a UAV-based localization scenario, where the UAVs are constrained to be located in the box defined by $\bm{r_l} = [-30 \text{ m}, -20 \text{ m}, -10 \text{ m}]$, and $\bm{r_u} = [30 \text{ m}, 20 \text{ m}, 10 \text{ m}]$, and the minimum anchor separation distance $d_{\rm th} = 4.472$ m.  Each scheme is initialized with $N=4$ anchors, whose locations are obtained using a Monte-Carlo method. Specifically, the initial anchor configuration is set to one that has the minimum $\mathsf{R}^{(+)}_\mathsf{xyz}$ (for 3D)/$\mathsf{R}^{(+)}_\mathsf{xy}$ (for 2D) after $10^5$ Monte-Carlo trials. %Depending on the dimension of interest (3D/2D), anchor locations with the least $\mathsf{R}^{(+)}_\mathsf{xyz}$ (for 3D)/$\mathsf{R}^{(+)}_\mathsf{xy}$ (for 2D) is chosen as the initial configuration. %Below, we validate the theoretically derived bounds, and compare the computational and accuracy performance of the proposed iterative schemes.

	From Fig. \ref{Fig3_3d_schemes_RNDOP_single}, we observe that $\mathsf{R}^{(+)}_\mathsf{xyz}$ decreases sublinearly as a function of the total number of anchors ($M$), for all the proposed anchor placement schemes. While the performance of the minimax RNDOP-optimal (Fig. \ref{Fig3a_RNDOP_based_single_3D}) and trace-based (Fig. \ref{Fig3b_Trace_based_single_3D}) schemes are close, the eigenvector-based scheme (Fig. \ref{Fig3c_Trace_based_single_3D}) has noticeably higher $\mathsf{R}^{(+)}_\mathsf{xyz}$. For the considered scenario, the minimax RNDOP and trace-based iterative schemes yield similar $\mathsf{R}^{(+)}_\mathsf{xyz}$ performance since $\lambda_1(\bm{D_k}) \ll \sum_{i=\{2,3\}} \lambda_i(\bm{D_k})$ at initialization, and holds true as the anchor index $k$ increases. 
	It is worthwhile to point out that the trace-based scheme slightly outperforms its minimax RNDOP counterpart. This is because it \emph{minimizes the sum of all the eigenvalues}, unlike the minimax RNDOP scheme which \emph{minimizes the sum of the two largest eigenvalues}. For instance, since the RNDOP-based scheme incentivizes minimization of the top two eigenvalues at each iteration, the minimum eigenvalue risks remaining stagnant over multiple iterations, eventually ending up becoming the largest, or second-largest eigenvalue. In other words, accounting for all eigenvalues during iterative minimization makes the scheme more robust against eigenvalue ordering changes that occur over multiple iterations.  %Thus, accounting for all eigenvalues incentivizes long-term RNDOP minimization over multiple iterations. 

	On the other hand, the resultant $\mathsf{R}^{(+)}_\mathsf{xyz}$ is relatively large for the eigenvector-based heuristic scheme since it sequentially minimizes the numerator and maximizes the denominator of the trace-based heuristic, rather than minimizing the fractional term jointly. However, the random perturbation step introduces stochasticity into the RNDOP performance which is non-trivial to analyze from a theoretical standpoint. Nevertheless, the minimax RNDOP is $20-30\%$ higher when compared to the other two schemes. Next, we observe that the iterative upper and lower bounds in (\ref{eq:up_low_bounds_minimax_dop}) are fairly tight for all the schemes. Finally, the universal lower bound (\ref{eq:RNDOP_bound_universal}) is quite loose, whereas the tightness of its trace-based variant (\ref{eq:RNDOP_bound_univ_trace}) is similar to that of (\ref{eq:up_low_bounds_minimax_dop}). This is consistent due to the fact that (\ref{eq:RNDOP_bound_universal}) is itself a lower bound on (\ref{eq:RNDOP_bound_univ_trace}).

	From Fig. \ref{Fig4_2d_schemes_RNDOP_single}, we observe that the trace-based scheme outperforms the other two methods in terms of $\mathsf{R}^{(+)}_\mathsf{xy}$. The $\mathsf{R}^{(+)}_\mathsf{xy}$ performance of the trace-based iterative scheme is superior when compared to the other. This can also be attributed to the \textit{`long-term RNDOP minimization behavior'} of trace-based cost functions outlined in the previous few paragraphs. In addition, we observe that the universal bound in Corollary \ref{corollary:dop_universal_bounds_2d} is fairly loose, and the trace-based lower bound in (\ref{eq:RNDOP_xy_bound_univ_and_trace}) is relatively tight. Finally, we observe that the iterative upper and lower bounds (Proposition \ref{prop:lb_ub_rndop_xy_iterative}) is tight for all the proposed 2D-optimal anchor placement schemes. %For example, focusing at $M=6$ in Figs. \ref{Fig4a_RNDOP_based_single_2D}-\ref{Fig4c_Trace_based_single_2D}, we observe that the drop in $\mathsf{R}^{(+)}_\mathsf{xy}$ is for the trace-based scheme (Fig. \ref{Fig4b_Trace_based_single_2D}) is twice ($\mathsf{R}^{(+)}_\mathsf{xy}(6) - \mathsf{R}^{(+)}_\mathsf{xy}(6) = $) as much as that compared to the others. 
	
	\begin{figure*}[t]
		\centering
		\begin{subfigure}[t]{0.48\textwidth}
			\raggedleft
			\includegraphics[width=3.3in]{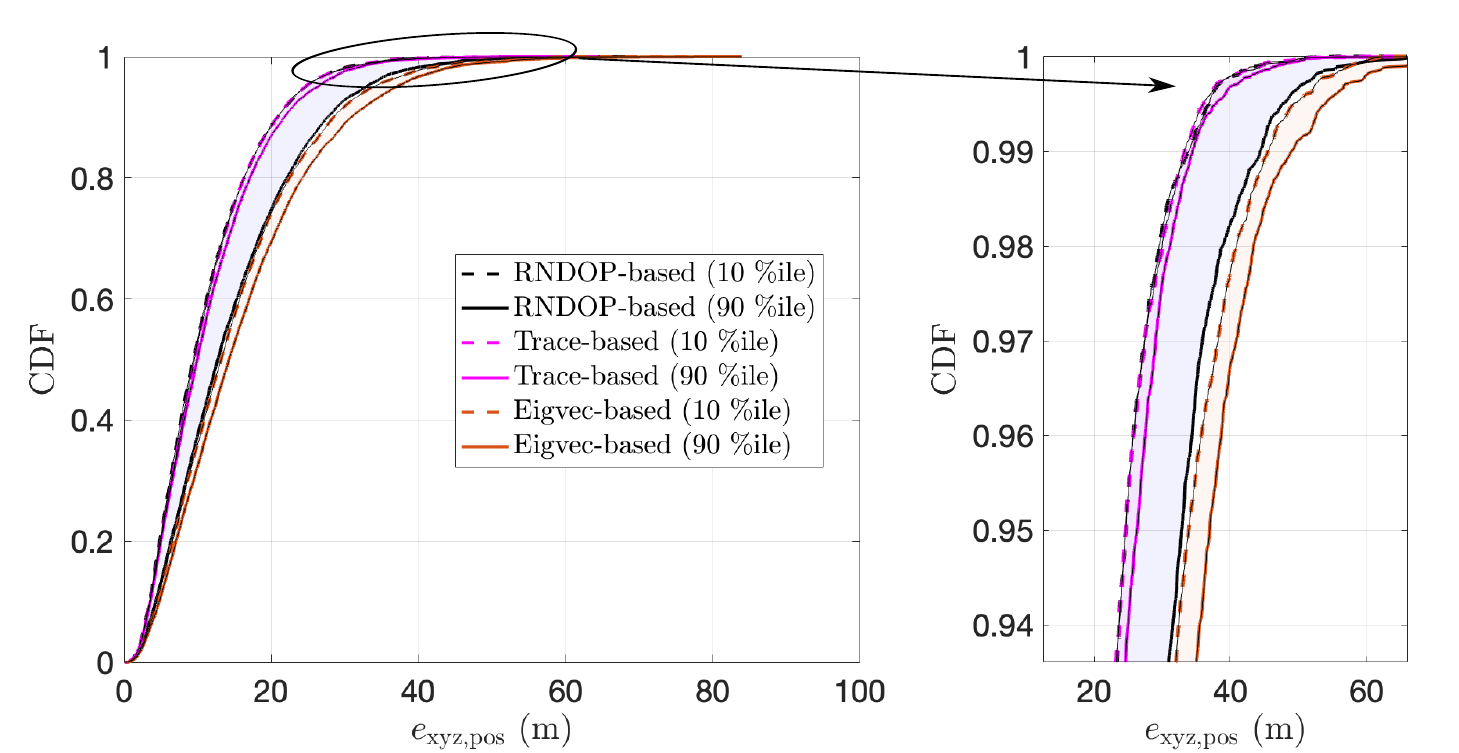}\\
			[-2ex]
			\caption{}
			\label{Fig5a_pos_error_3d_NLS}
		\end{subfigure}
		~
		\begin{subfigure}[t]{0.48\textwidth}
			\centering
			\includegraphics[width=3.3in]{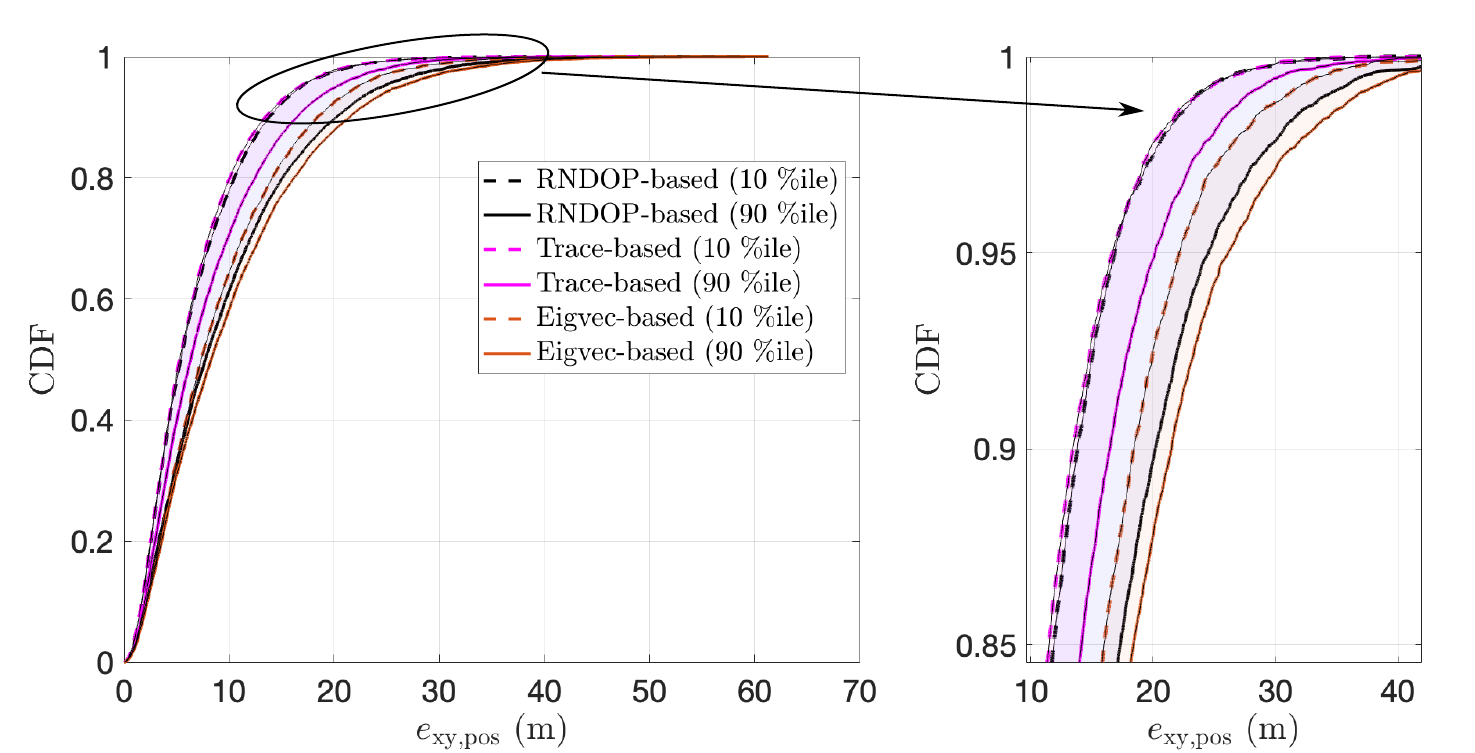}\\
			[-2ex]
			\caption{}
			\label{Fig5b_pos_error_2d_NLS}		
		\end{subfigure}
		\caption{(a) CDF of 3D positioning error for the $\mathcal{R}_{i_{10}}$ (good, denoted by dashed lines) and $\mathcal{R}_{i_{90}}$ (bad, denoted by solid lines) anchor configurations obtained by each iterative method (3D optimal schemes), for targets distributed uniformly at random in a sphere of radius 200m. (b) CDF of 2D positioning error for the $\mathcal{R}_{i_{10}}$ (good, denoted by dashed lines) and $\mathcal{R}_{i_{90}}$ (bad, denoted by solid lines) anchor configurations obtained by each iterative method (2D optimal schemes), for targets distributed uniformly on the XY-plane at random in a circle of radius 200m.}
		\label{Fig5_NLS_pos_error_2D_and_3D}
	\end{figure*}
    \vspace{-10pt}
    
	\subsection{Comparison of 3D and 2D Localization Error}
	To compare the 3D/2D localization error performance of the proposed methods, we performed $N_{\rm mc, algo} = 500$ Monte-Carlo (MC) simulations for each iterative anchor placement scheme. The following steps are undertaken in each MC trial. 
	
	\subsubsection{Initialization} The initial $N-\text{anchor}$ configuration is obtained using $N_{\rm mc, init} = 10^5$ trials. In each trial, $N=4$ anchors are placed uniformly at random in the box specified by $\bm{r_l}$ and $\bm{r_u}$. The initial anchor configuration corresponds to the one that (a) satisfies the minimum anchor separation ($d_{\rm th}$) constraint and (b) has the lowest $\mathsf{R}^{(+)}_\mathsf{xyz}$ (for 3D localization) or $\mathsf{R}^{(+)}_\mathsf{xy}$ (for 2D localization).
	\subsubsection{Iterative Anchor Placement} For each MC trial, the proposed iterative anchor placement methods are executed using Algorithm \ref{algo:sequential_minimax_gdop_anch_plcement} (for the Minimax RNDOP and the Trace-based heuristic methods) and Algorithm \ref{algo:random_prtrb_min_anch_sep} (for the Eigenvector heuristic-based scheme) to add $N_{\rm a} = 20$ additional anchors to the initial $N=4$ anchors in the network.
	
	\subsubsection*{Configuration Selection}	
	For characterizing the error performance of each method, we choose two anchor configurations per method: one configuration that corresponds to low max-RNDOP (low localization error), and another that corresponds to high max-RNDOP (high localization error). Let the max-RNDOP of the anchor configuraton obtained in all the $N_{\rm mc, algo}$ MC trials be ordered in descending order as $\mathsf{R}^{(+)}_1 \leq \mathsf{R}^{(+)}_2 \leq \cdots \leq \mathsf{R}^{(+)}_{N_{\rm mc,algo}}$, such that the corresponding anchor configuration is $\mathcal{R}_{i},i=1,2,\cdots, N_{\rm mc, algo}$. The \emph{``good''} anchor configuration is chosen to correspond to the 10-percentile max-RNDOP (i.e. $\mathsf{R}^{(+)}_{i_{10}}$, where $i_{10} = \lfloor 0.1\times N_{\rm mc,algo} \rfloor$), and the \emph{``bad''} anchor configuration corresponds to the 90-percentile max-RNDOP (i.e. $\mathsf{R}^{(+)}_{i_{90}}$, where $i_{90} = \lceil 0.9\times N_{\rm mc,algo} \rceil$).
	
	\subsubsection*{Random Target Deployment} To evaluate the efficacy of the good and bad anchor configurations under investigation, $N_{\rm targ} = 10^4$ targets are placed in the network. 
	\begin{enumerate}
		\item 3D Localization: For evaluating the 3D localization performance, each target is placed uniformly at random in a sphere of radius $r_{\rm cov} = 200$ m centered at the origin. 
		\item 2D Localization: For evaluating the 2D localization performance, each target is placed uniformly at random in a circle of radius $r_{\rm cov} = 200$ m on the XY plane centered at the origin. 
	\end{enumerate}
	Note that even though we are concerned with beyond convex-hull localization scenarios, we relax this constraint for evaluation purposes and consider the localization error performance across all possible target locations within the spherical/circular coverage region for 3D/2D localization respectively. 
	\subsubsection*{Localization using NLS} The target performs $N + N_{\rm a} = 24$ ToA-based range measurements using the relation $\hat{r}_i = \max(0, r_i + w_i)$, where $w_i \sim \mathcal{N}(b, \sigma^2_w)$, $b=1$ m, and $\sigma_w = 6$ m for $i=1,2,\cdots,N + N_{\rm a}$, where $r_i$ is the actual range, and $w_i$ the i.i.d range error. The location is then estimated using the Nonlinear Least Squares (NLS) algorithm \cite{buehrer2019handbook}.
	
	\subsubsection*{Localization Performance Comparison} 
	To ensure a fair comparison of the proposed schemes, the sources of randomness is controlled. In other words, the initial anchor locations, target locations, and the range errors are the same for the $i^\text{th}$ MC trial ($i=1,2,\cdots,N_{\rm mc, algo}$) of each method. Fig. \ref{Fig5_NLS_pos_error_2D_and_3D} shows the CDF of 3D and 2D positioning error for the good and bad anchor configurations obtained using all the proposed anchor placement schemes. From Fig. \ref{Fig5a_pos_error_3d_NLS}, we observe that the trace-based heuristic has the lowest 3D positioning error ($e_{\rm xyz, pos}$) when compared to the other schemes. Furthermore, the error performance achieved by the good and bad anchor condigurations are very close to each other. Since the initial conditions is the root cause of different max-RNDOP, the small performance difference signifies that the trace-based method is robust to initial conditions. Meanwhile, the error performance of the eigenvector-based heuristic is relatively worse when compared to the other two. Similar to its trace-based counterpart, we observe that it is robust to initial conditions as well. On the other hand, the $\mathcal{R}_{i_{10}}$ and $\mathcal{R}_{{i_{90}}}$ minimax RNDOP anchor configurations have a large difference in error performance, signifying sensitivity to initial conditions. One one end, its $\mathcal{R}_{i_{10}}$ configuration has similar performance as that of the trace-based method, whereas on the other end, its $\mathcal{R}_{i_{90}}$ configuration has simiilar performance as that of the eigenvector-based method.
	
	We observe a similar trend in 2D positioning error ($e_{\rm xy, pos}$) from Fig. \ref{Fig5b_pos_error_2d_NLS}. However, the minimax-RNDOP and eigenvector-based methods have significant overlap in their 2D positioning error distributions, unlike the 3D case.  
			
	\begin{figure}[!t]
		\centering
		\includegraphics[width=2.9in]{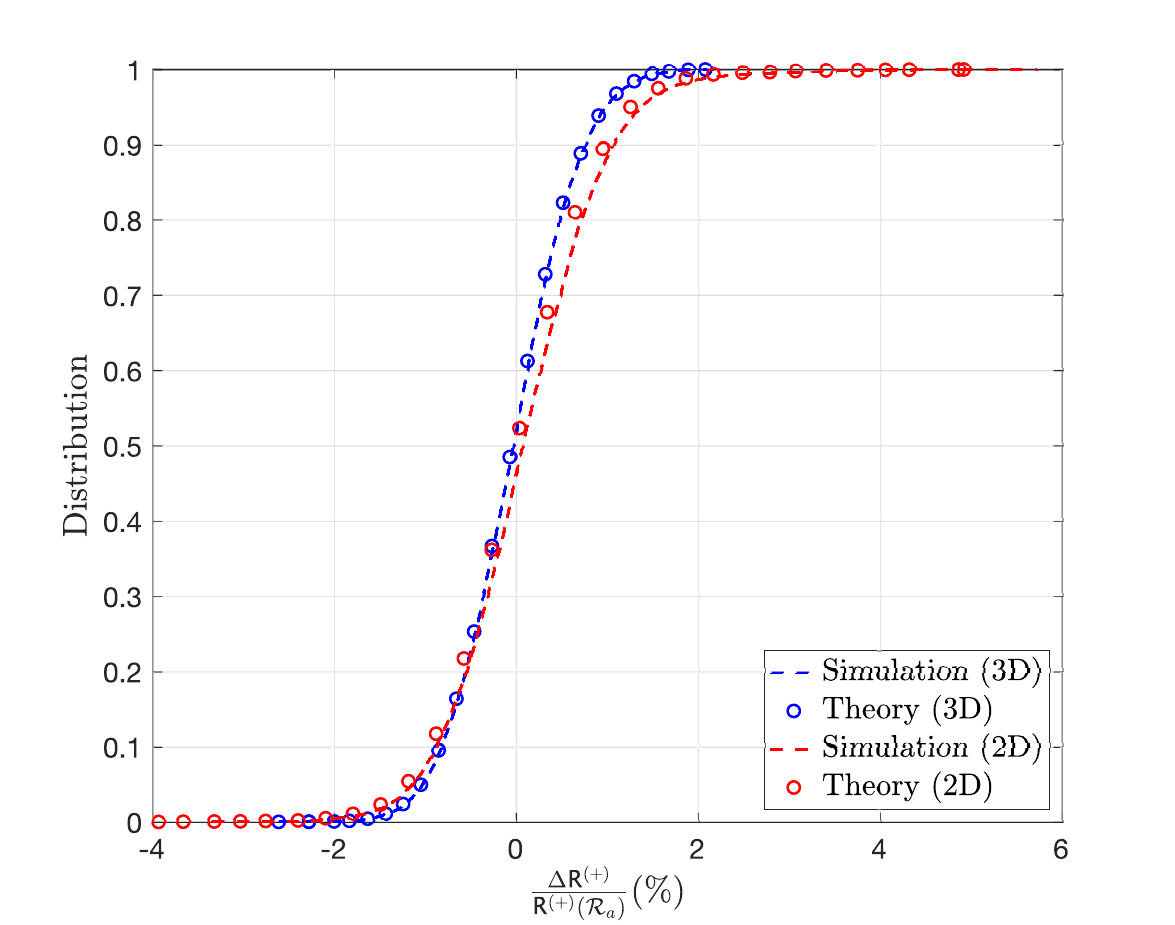}\\
		[-2ex]
		\caption{Distribution of the fractional change in minimax RNDOP $(\Delta \mathsf{R}^{(+)}/\mathsf{R}^{(+)}(\mathcal{R}_a) )$ due anchor position uncertainty of $\bm{\Delta r_i} = [\Delta x_i, \Delta y_i, \Delta z_i]^T$, for (a) 3D (blue) , and (b) 2D (red) beyond convex hull localization scenarios. The anchor position uncertainty is distributed as $\Delta x_i \sim {\rm U}[-1 \text{m},1\text{m}], \Delta y_i \sim {\rm U}[-1 \text{m},1\text{m}]$, and $\Delta z_i \sim {\rm U}[-1 \text{m},1\text{m}]$ for $i=1,2,\cdots,N$.}
		\label{fig:frac_change_rndop_2d_3d}
	\end{figure}
	
	\begin{table*}[t]
		%\footnotesize
		\caption{Benchmarking results: Median computation time (in seconds) for $N=20$ on different computing platforms}
		\label{Tab1_median_compute_time}
		\centering 
		\begin{tabular}{|c | c | c | c | c | c | c|}
			\hline
			\multirow{2}{*}{\textbf{Computing Platform}} & \multicolumn{2}{c |}{\textbf{Minimax RNDOP}} & \multicolumn{2}{c |} {\textbf{Trace-based}} &  \multicolumn{2}{c|}{\textbf{Eigenvector-based}} \\
			\cline{2-7}
			& \textbf{3D} & \textbf{2D} & \textbf{3D} & \textbf{2D} & {\color{blue}\textbf{3D}} & {\color{magenta} \textbf{2D}} \\
			\hline
			Windows 10, Intel i7-8700T, 32 GB Memory, MATLAB 9.0 & 4.43 & 4.90 & 6.54 & 7.90 & {\color{blue} \textbf{0.099}} & {\color{magenta} \textbf{0.0082}} \\
			\hline
			Ubuntu 20.04, Intel i5-7500T, 16 GB Memory, MATLAB 9.0 & 4.40 & 4.85 & 6.49 & 7.57 & {\color{blue} \textbf{0.091}} & {\color{magenta} \textbf{0.0101}} \\
			%\hline
			%MacOS Monterey, Intel i7-4980HQ, 16 GB Memory, MATLAB 9.12 & 3.47 & {\color{magenta} \textbf{4.3}} & 5.42 & 6.84 & {\color{blue} \textbf{0.022}} & 5.25 \\
			\hline
			MacOS Monterey, Apple M1 Pro, 16 GB Memory, MATLAB 9.12 & 2.15 & 2.77 & 3.47 & 4.26 & {\color{blue} \textbf{0.016}} & {\color{magenta} \textbf{0.0027}} \\		
			\hline
		\end{tabular} \\
	[-3ex]
	\end{table*}
	
	\begin{figure*}[t]
		\centering
		\begin{subfigure}[t]{0.48\textwidth}
			\raggedleft
			\includegraphics[width=2.8in]{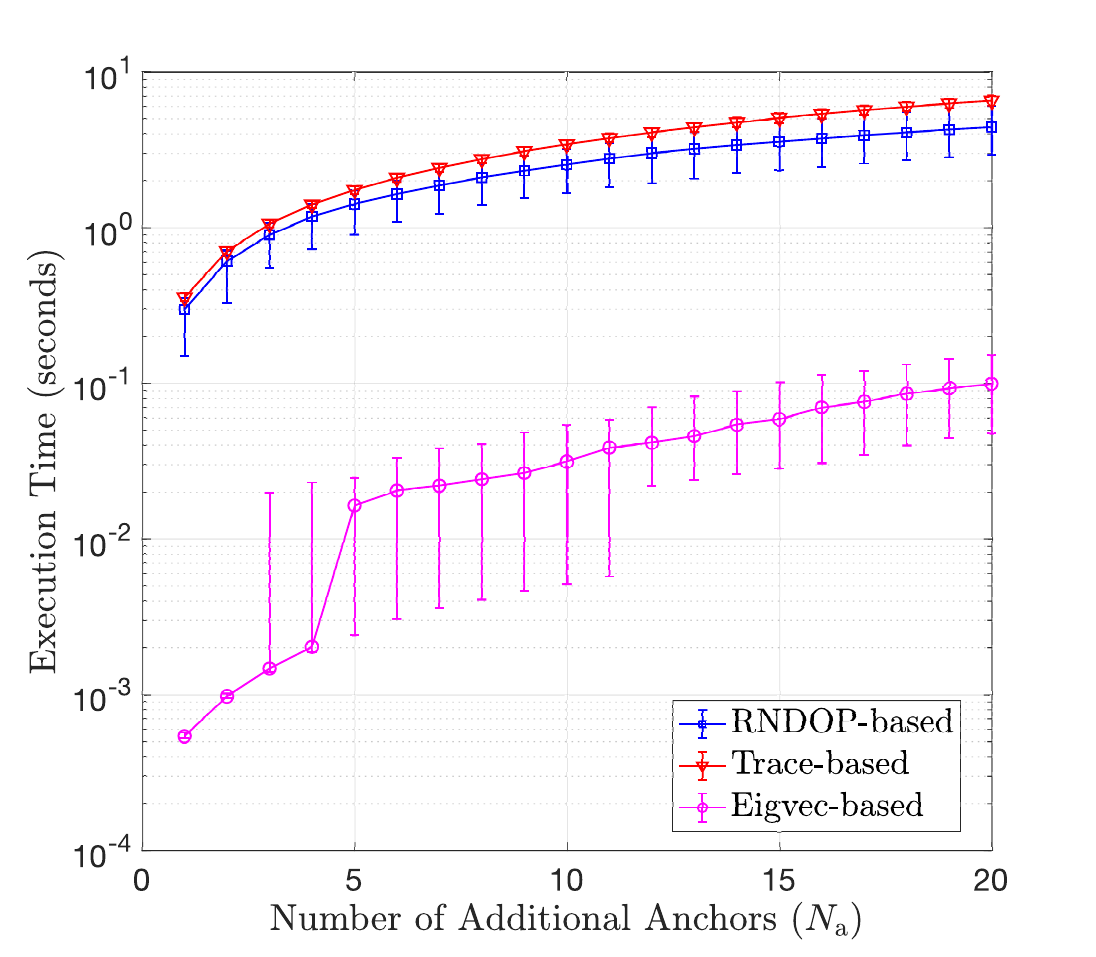}\\
			[-2ex]
			\caption{}
			\label{Fig6a_runtimes_errorbars_3d}
		\end{subfigure}
		~
		\begin{subfigure}[t]{0.48\textwidth}
			\centering
			\includegraphics[width=2.8in]{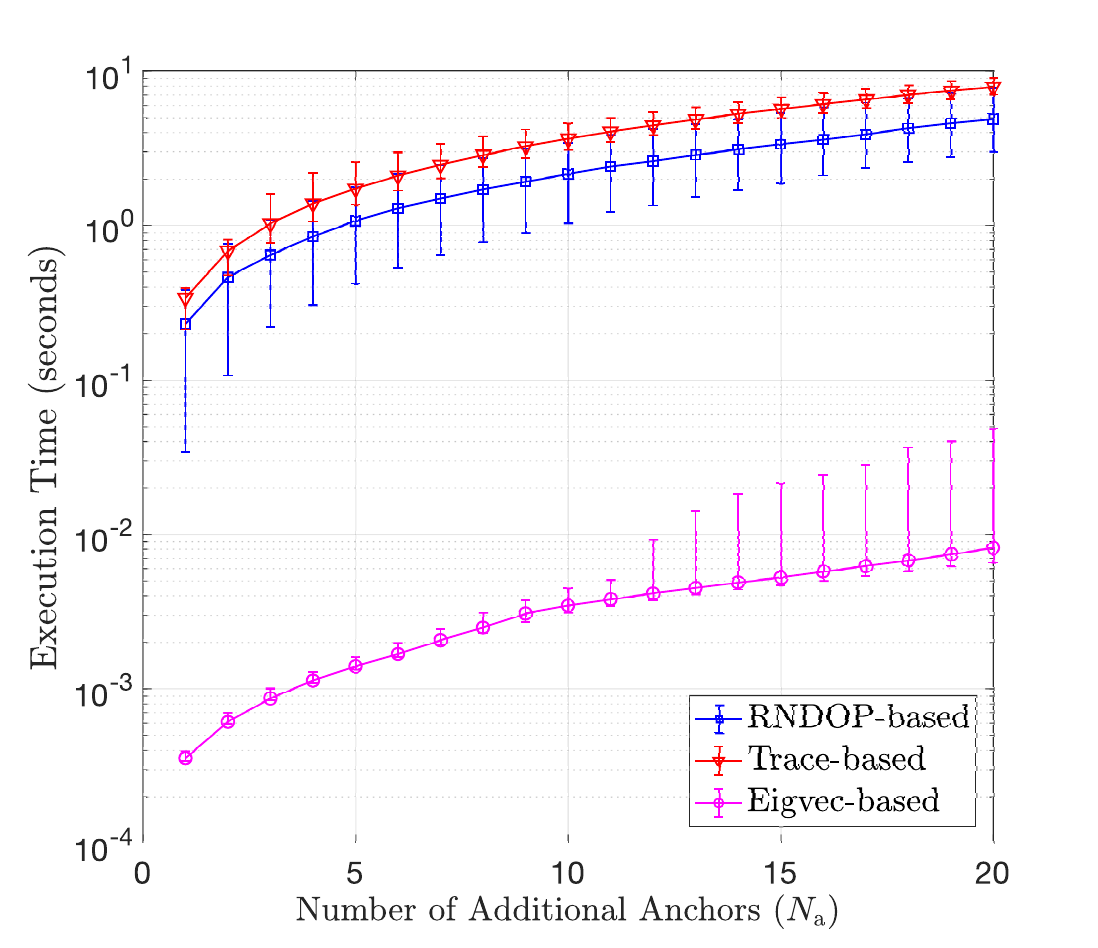}\\
			[-2ex]
			\caption{}
			\label{Fig6b_runtimes_errorbars_2d}		
		\end{subfigure}
		\caption{Comparison of $t_{\rm exec}$ as a function of $N_{\rm a}$, for the proposed iterative anchor placement schemes, for (a) 3D localization scenarios, and (b) 2D localization scenarios. The error bars correspond to the 10 percentile (lower) and 90 percentile (upper) values of the algorithm execution time.}
		\label{Fig6_runtimes_prctile_plots}
	\end{figure*}

	\begin{table}[t]
		%\footnotesize
		\caption{Goodness of Linear Fit for Different Anchor Addition Schemes}		
		\label{Tab2_R_squared}
		\centering 
		\begin{tabular}{|c | c | c | c | c | c | c|}
			\hline
			\multirow{2}{*}{$\bm{R^2}$} & \multicolumn{2}{c |}{\textbf{Minimax RNDOP}} & \multicolumn{2}{c |} {\textbf{Trace-based}} &  \multicolumn{2}{c|}{\textbf{Eigvec-based}} \\
			\cline{2-7}
			& \textbf{3D} & \textbf{2D} & \textbf{3D} & \textbf{2D} & \textbf{3D} & \textbf{2D} \\
			\hline
			$R^2$ of $t^{(10)}_{\rm exec}$ & 0.993 & 0.978 & 0.999 & 0.997 & 0.884 & 0.997 \\
			\hline
			$R^2$ of $t^{(50)}_{\rm exec}$ & 0.994 & 0.994 & 0.999 & 0.999 & 0.967 & 0.988 \\
			\hline
			$R^2$ of $t^{(90)}_{\rm exec}$ & 0.999 & 0.995 & 1 & 0.998 & 0.977 & 0.828 \\
			\hline 
		\end{tabular} 
	\end{table}
	\vspace{-10pt}

	\subsection{Impact of Anchor Position Uncertainty}
	To validate our theoretical results in Lemma \ref{lemma:delta_RNDOP_pert_3D_2D}, we consider beyond convex hull 3D and 2D localization scenarios. We perform $10^4$ Monte-Carlo simulation trials, where the anchor position uncertainty $\bm{\Delta r_i} = [\Delta x_i, \Delta y_i, \Delta z_i]^T$ is distributed as $\Delta x_i \sim {\rm U}[-1 \text{m},1\text{m}], \Delta y_i \sim {\rm U}[-1 \text{m},1\text{m}]$, and $\Delta z_i \sim {\rm U}[-1 \text{m},1\text{m}]$ for $i=1,2,\cdots,N$. 
	
	Firstly, we observe that the distribution obtained from theory (circular markers) match that obtained through simulations (dashed curves). We also observe that this uncertainty results in a minimax RNDOP variation within $\pm 1\%$ for $>90\%$ of the scenarios for both 3D as well as 2D positioning scenarios. It is worthwhile to note that the distribution of the minimax RNDOP variation is symmetric around $0$, signifying that the probability of a slight reduction/increase in $\mathsf{R}^{(+)}$ due to anchor position uncertainty are equally likely. 

    \vspace{-10pt}
	\subsection{Comparison of Algorithm Execution Time}
	We characterize the statistics of the \emph{execution time} ($t_{\rm exec}$) for each iterative scheme, across $N_{\rm mc,algo} = 500$ MC trials for each scheme. Let $\bm{t_{\rm exec}} = [t_{\rm exec, 1}, t_{\rm exec, 2}, \cdots, t_{\rm exec, N_{{\rm mc}, algo}}]$ be the vector containing the execution times of a scheme and $t^{(\rho)}_{{\rm exec}}$ its $\rho^{\rm th}$ percentile value ($0 \leq \rho \leq 100$). Then, we compute $t^{(10)}_{{\rm exec}}, t^{(50)}_{{\rm exec}}$, and $t^{(90)}_{{\rm exec}}$ as a function of the number of additional anchors ($N_{\rm a}$) for each scheme. 

	Table \ref{Tab1_median_compute_time} shows the median computation time ($t^{(50)}_{{\rm exec}}$) for each scheme across different hardware/software configurations. For both 3D and 2D localization scenarios, we observe that in terms of execution time performance (lower value desired), the ranking is Eigenvector-based > Minimax RNDOP > Trace-based. Hence, the minimax RNDOP scheme offers a compromise between localization error and computation performance for both 2D and 3D-optimal anchor placement problems. However, the Eigenvector-based scheme executes $\sim 10^3$ times faster than the other schemes at the cost of marginally higher positioning error, making it attractive for scenarios where timely deployment supersedes positioning error performance. %On the other hand, for 3D schemes, we observe that the ranking is Eigenvector-based > Minimax RNDOP > Trace-based. 

	Fig. \ref{Fig6_runtimes_prctile_plots} shows the variation of $t^{(10)}_{{\rm exec}}, t^{(50)}_{{\rm exec}}$, and $t^{(90)}_{{\rm exec}}$ as a function of $N_{\rm a}$ for each iterative scheme. From Fig. \ref{Fig6a_runtimes_errorbars_3d} and Fig \ref{Fig6b_runtimes_errorbars_2d}, we observe that $t^{(10)}_{{\rm exec}}, t^{(50)}_{{\rm exec}}$, and $t^{(90)}_{{\rm exec}}$ seems to scale almost linearly with $N_{\rm a}$ for the Minimax RNDOP and Trace-based schemes. The goodness of linear fit ($R^2$) values shown in Table \ref{Tab2_R_squared} confirms our observations regarding linear scaling with $N_{\rm a}$, since $R^2 > 0.975$ for the Minimax RNDOP and Trace-based schemes. This is in agreement with the computational complexity results from Section \ref{subsec:compute_complexity}. On the other hand, $t_{\rm exec}$ for the Eigenvector-based scheme has a \emph{slightly weaker linear trend} ($0.8 < R^2 < 0.997$), which can be attributed to the perturbation step in Algorithm \ref{algo:random_prtrb_min_anch_sep}. 

    \vspace{-10pt}
	\section{Conclusion} \label{sec:conclusion}
	In this paper, we proposed a novel metric termed \textit{'Range-Normalized Dilution of Precision (RNDOP)'} to characterize the positioning error in ToA-based localization scenarios where the target lies outside the anchors' convex hull. We derived the upper and lower bounds of RNDOP in 2D/3D localization scenarios \emph{in 3D space}, and used them to formulate a robust anchor placement problem to \emph{minimize the worst-case positioning error}. Since the imposition of the anchor separation and volumetric constraints make the resulting problem intractable, we proposed an iterative anchor addition algorithm based on three heuristic approaches which seek to minimize the worst-case 2D/3D positioning error: the RNDOP-based, trace-based, and eigenvector-based methods, each operating on a matrix function of the anchor locations. We also characterized iterative upper and lower bounds on the achievable RNDOP at each iteration, as well as the universal lower bound. 
	
	Then, we designed an algorithmic framework that seeks to find the \emph{worst case error-optimal} locations of $N_{\rm a}$ additional anchors, given an initial set of anchors, and characterized the computational complexity of these methods. We also derived theoretical results to characterize the impact of anchor position uncertainty on the maximum RNDOP. Finally, we validated our theoretical results and used Monte-Carlo simulations to compare the positioning error and execution time performance of the proposed schemes across different hardware and software configurations. We also discussed the performance tradeoffs, and provided useful insights regarding the trace-based heuristic's superior localization error performance, and scaling of execution time with $N_{\rm a}$. 
	
	One of the main assumptions of this paper is that the range error statistics is the same on all anchors, which is often accurate for LoS and NLoS (with homogeneous multipath) environments. The two aspects which this work did not consider is (a) non-homogeneous multipath that often results in different range error statistics at each anchor, and (b) outage of anchors due to blockage etc. The modification of the RNDOP-based formulation to account for the aforementioned factors is a compelling extension of this work that can lead to fundamental insights regarding the large-scale trends of positioning error for beyond convex hull localization scenarios. Also, as highlighted in Remark \ref{rem:minimize_anchs_prblm}, another worthwhile direction to pursue is the design of anchor placement schemes where the objective is to minimize the number of participating anchors under robust localization performance constraints, to reduce the deployment cost and system design complexity.%Another possible direction to this work is to leverage the RNDOP-based formulation to characterize the positioning error outage probability in 2D and 3D beyond convex hull localization scenarios using stochastic geometry-based tools, which lets us obtain fundamental insights regarding network localization error performance in a target-rich environment.
 
	\vspace{-5pt}
	\appendix 

	\subsection{Proof of Corollary \ref{corollary:dop_universal_bounds_2d}}\label{appendix:a}
	Using the definition of $\bm{H}^T \bm{H}$ using (\ref{eq:matrix_H}), we can write it as the block matrix
	\begin{align*}
		%\label{eq:Q_block_matrix}
		\bm{H}^T \bm{H} & = \begin{pmatrix}
			\bm{S} & \bm{\beta}^T \\
			\bm{\beta} & \alpha 
		\end{pmatrix}, \text{ where } \bm{\beta} = \sum_{i=1}^N \begin{pmatrix}
			\frac{(x_t - x_i)(z_t - z_i)}{\| \bm{r_t} - \bm{r_i} \|^2_2} \\
			\frac{(y_t - y_i)(z_t - z_i)}{\| \bm{r_t} - \bm{r_i} \|^2_2} \\
		\end{pmatrix},	\\
		\bm{S} & = \sum_{i=1}^N \begin{pmatrix}
			\tfrac{(x_t - x_i)^2}{\| \bm{r_t} - \bm{r_i} \|^2_2} & \frac{(x_t - x_i)(y_t - y_i)}{\| \bm{r_t} - \bm{r_i} \|^2_2} \\
			\frac{(y_t - y_i)(x_t - x_i)}{\| \bm{r_t} - \bm{r_i} \|^2_2} & \frac{(y_t - y_i)^2}{\| \bm{r_t} - \bm{r_i} \|^2_2} 
		\end{pmatrix}, \text{ and } \\
		\alpha & = \sum_{i=1}^N \tfrac{(z_t - z_i)^2}{\|\bm{r_t} - \bm{r_i} \|^2_2}.			
	\end{align*}
	To ensure a low DOP, the anchors are placed in a non-coplanar configuration. Hence, $\exists z_i \neq z_j$ for $i,j=1,2,\cdots,N$ and as a result, $\alpha > 0$. Hence, we can obtain $\bm{Q} = [\bm{H}^T \bm{H}]^{-1}$ in terms of its submatrices using \cite{horn2012matrix}
	\begin{align*}
		\bm{Q} = \begin{pmatrix}
			\bm{T} & -\tfrac{\bm{T \beta}}{\alpha} \\
			-\tfrac{\bm{\beta}^T}{\alpha} & \tfrac{1 + \bm{\beta}^T \bm{T \beta}}{\alpha}
		\end{pmatrix}, \text{ where } \bm{T} = \big(S - \tfrac{\bm{\beta \beta}^T}{\alpha} \big)^{-1}.
	\end{align*}
	Applying Proposition \ref{prop:Sherman_Morrison_formula} and taking the trace, we get 
	\begin{align*}
		{\rm tr}(\bm{T}) = {\rm tr}\big( \bm{S}^{-1} \big) + \tfrac{\| \bm{S}^{-1} \bm{\beta} \|^2_2}{\alpha + \bm{\beta}^T \bm{S}^{-1} \bm{\beta}}.
	\end{align*}
	It is clear that the second term is non-negative since (a) the numerator is the Euclidean norm of a vector, and (b) $\bm{S}^{-1} \succcurlyeq \bm{0}$. Hence, we get $\mathsf{D}^2_\mathsf{xy} \triangleq {\rm tr}(\bm{T}) \geq {\rm tr}(\bm{S}^{-1})$. Using Lemma \ref{lemma:asympt_bounds_DOP_xyz_xy}, we get $\mathsf{D}^{(+)}_\mathsf{xy} \geq r_t \sqrt{\lambda_{+}(\bm{U}^{-1})}$, where
	\begin{align*}
		\bm{U} = \sum_{i=1}^N \begin{pmatrix}
			x^2_i & x_i y_i \\
			y_i x_i & y^2_i
		\end{pmatrix}.
	\end{align*}
	The desired result is then obtained using a similar sequence of steps as in Lemma \ref{lemma:dop_universal_bounds_3d}, using (\ref{eq:sum_of_top_2_eigvals_optim_X2}).
	\bibliographystyle{IEEEtran}
	\bibliography{References_Vehic_Optim_Anch_2021}
	\ifCLASSOPTIONcaptionsoff
	\newpage
	\fi
	% Adding biographies of all authors
	% Raghu's bio
	\begin{IEEEbiography}[{\includegraphics[width=1in,height=1.25in,clip,keepaspectratio]{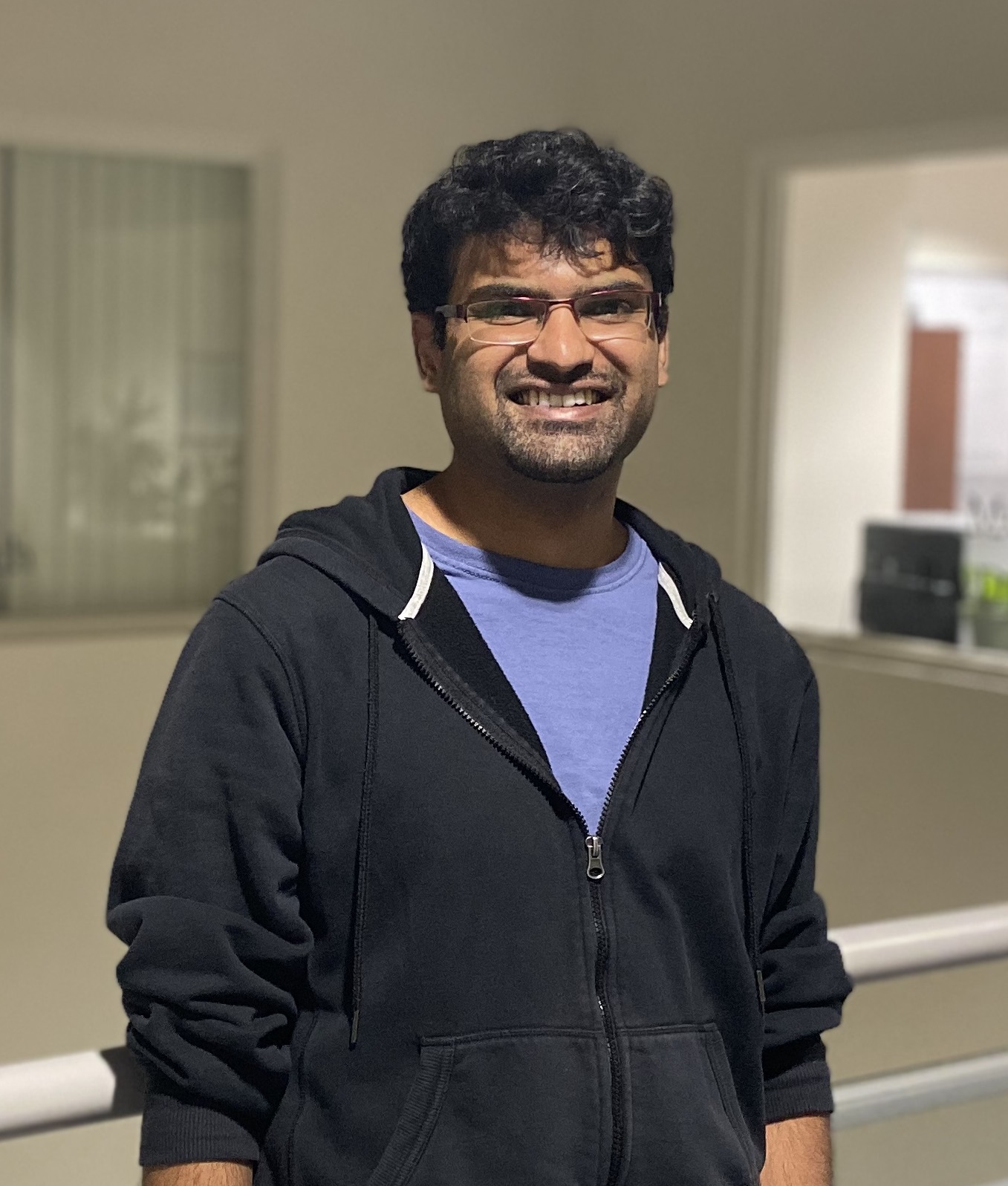}}]{Raghunandan M. Rao}(S'16-M'21) received his B.Eng degree in Telecommunication Engineering from R. V. College of Engineering in 2011, the M.Tech degree in Laser Technology from IIT Kanpur in 2013, the M.S degree in Electrical Engineering from Virginia Tech in 2016, and the Ph.D. degree in Electrical Engineering from Virginia Tech in 2020. He is currently a System Development Engineer at Amazon Lab126 in Sunnyvale, California, USA. His current research interests are in the areas of localization, radar sensing, and beyond 5G cellular systems. He served as an Editor for the {\sc IEEE Transactions on Vehicular Technology} from 2021-2023.
	\end{IEEEbiography}
	% Don's bio
	\begin{IEEEbiography}[{\includegraphics[width=1in,height=1.25in,clip,keepaspectratio]{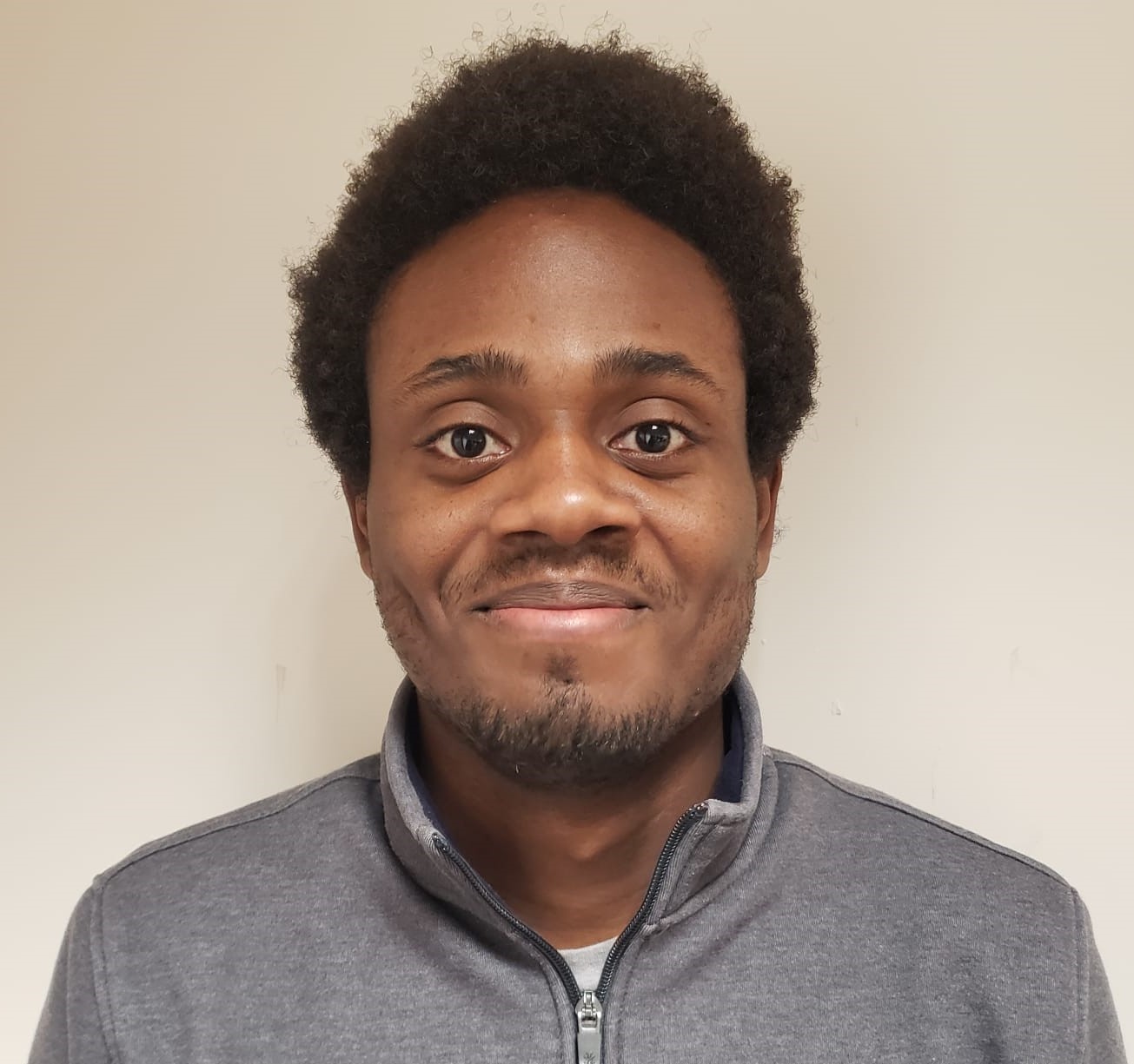}}]{Don-Roberts Emenonye}(Graduate Student Member, IEEE) received the B.Sc. degree in electrical and electronics engineering from the University of Lagos, Nigeria, in 2016, and the M.S. degree in electrical engineering from Virginia Tech, USA, in 2020, where he is currently pursuing the Ph.D. degree with the Bradley Department of Electrical and Computer Engineering. His research interests span the area of communication theory and wireless positioning.
\end{IEEEbiography}
\end{document}

% --- supplement: supplementary.tex ---

	%\title{Minimax DOP Bounds and Heuristic Algorithms for Anchor Placement in Vehicle-Centric Localization}
	%\title{DOP-Optimal Anchor Placement for Vehicle-Centric ToA-based Localization Outside the Convex Hull: Heuristic Algorithms and Performance Bounds}
	\title{Supplementary Material}
	\author{Raghunandan M. Rao,~\IEEEmembership{Member,~IEEE}, Don-Roberts Emenonye,~\IEEEmembership{Student Member,~IEEE} \\ 
		\thanks{R. M. Rao is currently at Amazon Lab 126, Sunnyvale, CA, USA, 94086. This work was undertaken before joining Amazon in his capacity as an independent scholar and does not relate to his current role at Amazon. All findings and opinions in the paper are the author's own and do not reflect the opinions of his employer (e-mail: raghumr@vt.edu).}
		\thanks{D. R. Emenonye is affiliated with the Wireless@VT research group, Bradley Department of ECE, Virginia Tech, Blacksburg, VA, USA 24061 (email: donroberts@vt.edu).}
	}
	\maketitle
	\thispagestyle{empty}
	\section{Proof of Proposition 4}
	\subsection{Proof of Eq. (5)}
	Consider the optimization problem 
	\begin{align}
		\label{eq:sum_of_top_2_eigvals_optim}
		\Lambda^*_3 = \underset{\bm{X_{3}}}{\min} & \ {\rm tr}(\bm{X}^{-1}_{\bm{3}}) - \lambda_{-}(\bm{X}^{-1}_{\bm{3}}) \\
		{\rm s.t.\ } & {\rm tr}(\bm{X_{3}}) = K, \nonumber
	\end{align}
	where $\bm{X_3} \in \mathbb{S}^{3\times 3}_{++}$. Let $\lambda_1 \leq \lambda_2 \leq \lambda_3$ be its ordered eigenvalues, where $\lambda_{+}(\bm{X_3}) = \lambda_3$ and $\lambda_{-}(\bm{X_3}) = \lambda_1$. Since $\bm{X_3}$ is positive semidefinite, we can rewrite (\ref{eq:sum_of_top_2_eigvals_optim}) in an equivalent form as 
	\begin{align}
		\label{eq:sum_of_top_2_eigvals_optim_equiv}
		\Lambda^*_3 & = \underset{\{\lambda_1, \lambda_2, \lambda_3\}}{\min} \ \sum_{i=1}^2 \lambda^{-1}_i \\
		{\rm s.t.\ } \sum_{i=1}^3 \lambda_i & = K, \nonumber \\
		 \lambda_i & \geq 0, i=1,2,3, \text{ and } \nonumber \\
		 \lambda_j & \leq \lambda_{j + 1}, j = 1,2. \nonumber
	\end{align}
	We begin by considering the Lagrangian dual problem of (\ref{eq:sum_of_top_2_eigvals_optim_equiv}), which is given by \cite{boyd2004convex}
	\begin{align}
		\label{eq:lagrnge_dual}
		g(\bm{\xi}, \nu, \bm{\lambda}) = \sum_{i=1}^2 \lambda^{-1}_i + \bm{\xi}^T \bm{g_e}(\bm{\lambda}) + \nu \big(\sum_{i=1}^3 \lambda_i - K \big),
	\end{align}
	where $\bm{g_e}(\bm{\lambda}) = [-\lambda_1,\ -\lambda_2,\ -\lambda_3,\ \lambda_1-\lambda_2,\ \lambda_2 - \lambda_3]^T$, and $\bm{\xi} = [\xi_1, \xi_2, \cdots, \xi_5]^T$. Noting that (\ref{eq:sum_of_top_2_eigvals_optim}) is a convex optimization problem, we use the KKT conditions for convex problems. In particular, the stationarity, dual feasibility, and complementary slackness conditions are given by \cite{boyd2004convex}
	\begin{align}
		\label{eq:kkt_condns_lagrnge_dual_partial}
		\bm{\nabla} f_3(\bm{\lambda}) + \bm{p_3}(\bm{\xi}) + \nu \bm{1_3} & = \bm{0}, \\
		\xi_m \geq 0, m = 1,2,\cdots, 5, & \\
		\label{eq:kkt_ineq_comp_slack}
		\xi_m [\bm{g_e}(\bm{\lambda})]_m = 0, m = 1,2,\cdots, 5, &
	\end{align}    
	where 
	\begin{align}
		\label{eq:gradient_ineq_constr}
		\bm{p_3}(\bm{\xi}) & = [-\xi_1 + \xi_4,\ -\xi_2 - \xi_4 + \xi_5 ,\ -\xi_3 - \xi_5]^T,  \nonumber \\
		\bm{\nabla} f_3(\bm{\lambda}) & = \big[-\lambda^{-2}_1,\ -\lambda^{-2}_2,\ 0 \big]^T, \text{ and} \nonumber \\
		\bm{1_3} & = [1\ 1\ 1]^T.
	\end{align}
	By inspection, we note that the strict inequality $\lambda_i > 0$ should hold for $i=1,2,3$, since $\lambda_i = 0$ for any $i$ would make the cost function in (\ref{eq:sum_of_top_2_eigvals_optim_equiv}) infinite. Hence, using this in (\ref{eq:kkt_ineq_comp_slack}), we get $\xi_i = 0$ for $i=1,2,3$. In addition, we expand (\ref{eq:kkt_condns_lagrnge_dual_partial}) to get
	\begin{align}
		\label{eq:expand_kkt_1}
		\begin{bmatrix}
			-\frac{1}{\lambda^2_1} \\
			-\frac{1}{\lambda^2_2} \\
			0
		\end{bmatrix} + 
		\begin{bmatrix}
			-\xi_1 + \xi_4 \\
			-\xi_2 - \xi_4 + \xi_5 \\
			-\xi_3 - \xi_5 
		\end{bmatrix} + 
		\begin{bmatrix}
			\nu \\
			\nu \\
			\nu 
		\end{bmatrix} = 
		\begin{bmatrix}
			0 \\
			0 \\
			0
		\end{bmatrix}.
	\end{align} 
	Using the above, we can solve for $\xi_4$ and $\xi_5$ to get
	\begin{align}
		\label{eq:kkt_gamma_4_5}
		\xi_4 & = \frac{1}{3 \lambda^2_1} + \frac{1}{3 \lambda^2_2}, \text{ and} \\
		\xi_5 & = \frac{2}{\lambda^2_1} - \frac{1}{\lambda^2_2}. 
	\end{align}	
	From the above, it is clear that $\xi_4 > 0$ since $\lambda_i > 0$ for all $i=1,2,3$. Therefore, the complementary slackness conditions in (\ref{eq:kkt_ineq_comp_slack}) yields $\lambda_1 = \lambda_2$. Furthermore, since $\lambda_1 \leq \lambda_2$ by design, we can prove that $\frac{1}{\lambda^2_1} \leq \xi_5 \leq \frac{1}{\lambda^2_2}$. In other words, $\xi_5 \neq 0$, which implies that $\lambda_2 = \lambda_3$ due to (\ref{eq:kkt_ineq_comp_slack}). Hence, equal eigenvalues yield the minimum cost function. 
	
	Finally, since $\sum_{i=1}^{3}\lambda_i = K$, we get $\lambda_i = \frac{K}{3}$ for $i = 1,2,3$. Substituting this in (\ref{eq:sum_of_top_2_eigvals_optim_equiv}) and solving for the optimal solution, we get $\Lambda^*_3 = \frac{6}{K}$. Hence, proved.
	
	\subsection{Proof of Eq. (6)}
	Consider the optimization problem
	\begin{align}
		\Lambda^*_2 = \underset{\bm{X_{2}}}{\min} & \ {\rm tr}(\bm{X}^{-1}_{\bm{2}}) - \lambda_{-}(\bm{X}^{-1}_{\bm{2}}) \\
		{\rm s.t.\ } & {\rm tr}(\bm{X_{2}}) = K, \nonumber
	\end{align}
	where $\bm{X_2} \in \mathbb{S}^{3\times 3}_{++}$. Let $\lambda_1 \leq \lambda_2$ be its ordered eigenvalues, where $\lambda_{+}(\bm{X_2}) = \lambda_2$ and $\lambda_{-}(\bm{X_2}) = \lambda_1$. Since $\bm{X_2}$ is positive semidefinite, we can rewrite (\ref{eq:sum_of_top_2_eigvals_optim}) in an equivalent form as 
	\begin{align}
		\label{eq:sum_of_max_eigval_optim_equiv_X_2}
		\Lambda^*_2 & = \underset{\{\lambda_1, \lambda_2\}}{\min} \ \lambda^{-1}_1 \\
		{\rm s.t.\ } \lambda_1 + \lambda_2 & = K, \nonumber \\
		\lambda_i & \geq 0, i=1,2, \text{ and } \nonumber \\
		\lambda_1 & \leq \lambda_2. \nonumber
	\end{align}
	The Lagrangian of the above problem is given by \cite{boyd2004convex}
	\begin{align} 
		\label{eq:lagr_X_2_optim}
		h(\bm{\xi}, \nu, \bm{\lambda}) = \lambda^{-1}_1 + \bm{\xi}^T \bm{h_e}(\bm{\lambda}) + \nu \bm{1},
	\end{align}
	where $\bm{h_e}(\bm{\lambda}) = [-\lambda_1, \ -\lambda_2, \ \lambda_1 - \lambda_2]^T$, and $\bm{\xi} = [\xi_1,\ \xi_2, \ \xi_3]^T$. Similar to (\ref{eq:lagrnge_dual}), the KKT conditions are given by \cite{boyd2004convex}
	\begin{align}
		\label{eq:kkt_conditions_X_2}
		\bm{\nabla} f_2(\bm{\lambda}) + \bm{p_2}(\bm{\xi}) + \nu \bm{1_2} & = \bm{0}, \\
		\xi_m \geq 0, m = 1,2,3 & \\
		\label{eq:kkt_ineq_comp_slack_X_2}
		\xi_m [\bm{h_e}(\bm{\lambda})]_m = 0, m = 1,2,3, &
	\end{align}
	where
	\begin{align}
		\label{eq:gradient_ineq_constr_X_2}
		\bm{p_2}(\bm{\xi}) & = [-\xi_1 + \xi_3,\ -\xi_2 - \xi_3]^T,  \nonumber \\
		\bm{\nabla} f_2(\bm{\lambda}) & = \big[-\lambda^{-2}_1,\ 0 \big]^T, \text{ and} \nonumber \\
		\bm{1_2} & = [1\ 1]^T.
	\end{align}
	Similar to (\ref{eq:gradient_ineq_constr}), the strict inequality $\lambda_i > 0$ should hold for $i=1,2,$ since $\lambda_i = 0$ will make the objective function in (\ref{eq:sum_of_max_eigval_optim_equiv_X_2}) infinite. Hence, we get $\xi_1 = \xi_2 = 0$ in (\ref{eq:kkt_ineq_comp_slack_X_2}). Expanding (\ref{eq:gradient_ineq_constr_X_2}, we can write 
	\begin{align}
		\label{eq:kkt_after_simplify_X_2}
		\begin{bmatrix}
			-\frac{1}{\lambda^2_1} \\
			0
		\end{bmatrix} + 
		\begin{bmatrix}
			-\xi_1 + \xi_3 \\
			-\xi_2 - \xi_3
		\end{bmatrix} + 
		\begin{bmatrix}
			\nu \\
			\nu  
		\end{bmatrix} = 
		\begin{bmatrix}
			0 \\
			0 
		\end{bmatrix}.
	\end{align}
	Solving this system of equations, we get $\nu = \xi_3 = \frac{1}{2 \lambda^2_1}$. Hence, $\xi_3 \neq 0$ since $\lambda_i > 0$ (this is because $\lambda_1 = 0 \Rightarrow \lambda^{-1}_1 = +\infty$, resulting in an absurd value for the cost function). Therefore, from (\ref{eq:kkt_ineq_comp_slack_X_2}), we get $\lambda_1 = \lambda_2$. Substituting this in the equality constraint in (\ref{eq:sum_of_max_eigval_optim_equiv_X_2}), we get $\lambda_1 = \lambda_2 = \frac{K}{2}$, resulting in the optimal value $\Lambda^*_2 = \frac{2}{K}$. Hence, proved.
	
	\bibliographystyle{IEEEtran}
	\bibliography{supp_mat_bib}
	\ifCLASSOPTIONcaptionsoff
	\newpage
	\fi